\newcommand{\T}{\mathsf{T}}
\newcommand{\setX}{\mathcal{X}}
\newcommand{\obsdata}{F}
\newcommand{\projobsdata}{G}
\newcommand{\Cov}{\mathrm{Cov}}
\renewcommand{\hat}{\widehat}
\newcommand{\diag}{\mathrm{diag}}
\newcommand{\ourtitle}{Constructing a simulation surrogate with partially observed output}
\newcommand{\ourauthors}{Moses Y-H. Chan, Matthew Plumlee\thanks{
    This material is based upon work supported by NSF grants OAC 2004601, DMS 1953111, 1952897.}\\
    Department of Industrial Engineering and Management Sciences,\\ Northwestern University\\
    and \\
    Stefan M.~Wild\thanks{
    This material is based upon work supported by the U.S.\ Department of Energy, Office of Science, Office of Advanced Scientific Computing Research SciDAC and applied mathematics programs under contract DE-AC02-05CH11231, and by NSF grants OAC 2004601 and DMS 1952897.} \\
    Applied Mathematics and Computational Research Division, \\Lawrence Berkeley National Laboratory
    \\ NAISE, Northwestern University}
\theoremstyle{definition}
\newtheorem{theorem}{Theorem}
\newtheorem{lemma}{Lemma}
\begin{document}
\def\spacingset#1{\renewcommand{\baselinestretch}%
{#1}\small\normalsize} \spacingset{1}

\newcommand{\blind}{0}

\if0\blind
{
  \title{\bf \ourtitle}
  \author{\ourauthors}
  \date{}
  \maketitle
} \fi

\if1\blind
{
  \bigskip
  \bigskip
  \bigskip
  \begin{center}
    {\LARGE\ourtitle}
\end{center}
  \medskip
} \fi

\bigskip
\begin{abstract}
    Gaussian process surrogates are a popular alternative to directly using computationally expensive simulation models.  When the simulation output consists of many responses, dimension-reduction techniques are often employed to construct these surrogates.  However, surrogate methods with dimension reduction generally rely on complete output training data.  This article proposes a new Gaussian process surrogate method that permits the use of partially observed output while remaining computationally efficient.  The new method involves the imputation of missing values and the adjustment of the covariance matrix used for Gaussian process inference.  The resulting surrogate represents the available responses, disregards the missing responses, and provides meaningful uncertainty quantification.  The proposed approach is shown to offer sharper inference than alternatives in a simulation study and a case study where an energy density functional model that frequently returns incomplete output is calibrated.
\end{abstract}

\noindent%
{\it Keywords:}  Gaussian process, missing data, high-dimensional output, statistical emulation, calibration
\vfill

\newpage
\spacingset{2} 

\section{Introduction} \label{sec:intro}
Computer simulations are used to understand and analyze systems where directly experimenting on the real system is difficult or infeasible.  
The output of a simulation model depends on a user-specified input configuration that defines the physical and controllable properties of the system.  
When a user simulates the system, also referred to as running the simulation, they receive outputs related to quantities of interest to the user.
Running a simulation can be computationally expensive; each run can cost thousands of core-hours, see for examples the simulation of storm surge \citep{plumlee2021high}, influenza spread \citep{Venkatramanan2021}, and nuclear dynamics \citep{phillips2021get}.
Because these simulations are expensive, it is often helpful to build an emulator, or ``surrogate,'' trained on simulation data to predict at unsimulated (i.e., out-of-sample) configurations.

Surrogate technology is often deployed for calibration \citep{kennedy2001bayesian}, where an input configuration is represented by a multidimensional parameter.
When run at a parameter, the simulation returns a high-dimensional output consisting of multiple responses collected on a set of fixed  locations.
While there are many variations of the exact type of inference \citep{tuo2015efficient,gramacy2015calibrating,plumlee2017bayesian}, the overall goal is to learn the parameters by aligning the physical observations with the simulation outputs using computational tools like Markov Chain Monte Carlo (MCMC). 
Because simulation runs are expensive, it is not possible to run the simulation the millions of times needed for MCMC. 
Instead,  the idea is to run the simulation model for a set of representative parameters and to collect the simulation output from each run; consequently, the surrogate is constructed as the prediction conditional on the simulation output data. 
Important to solving the calibration problem is that the surrogate produces a measure of  uncertainty in the surrogate's predictions.

The most prominent tool for building statistical surrogates involves Gaussian processes (GPs) \citep{santner2018design,gramacy2020surrogates}.
GPs offer both an accurate prediction and a measure of uncertainty.
The case of high-dimensional outputs leads to the computational intractability of many surrogate construction tools.
Two approaches have been proposed to remedy this computational challenge.  The first one employs a Kronecker structure of the covariance function of the GP, which assumes a separation between parameters and locations \citep{rougier2008efficient, hung2015analysis, guillas2018functional, marque2020efficient}.  The second approach employs a dimension-reduction step for the outputs before the building of surrogates \citep{bayarri2007computer, higdon2008computer, gu2020fast}.  This is in contrast to data-reduction procedures that seek to choose a smaller set of points to represent the entire parameter space (e.g., the selection of support points \citep{mak2018support}).
Examples of dimension-reduction procedures include extractions of principal components \citep{higdon2008computer, lawrence2017mira, gu2020fast}, wavelet coefficients \citep{bayarri2007computer}, and calibration-optimal bases \citep{salter2019uncertainty} from the simulation output data.  (\citet{salter2019uncertainty} require the knowledge of physical observations, in addition to the simulation output.)  These procedures require complete data, meaning for each run, the entire output has to be returned by the simulation. 
However, in seeking high levels in both performance and fidelity, modern simulation codes may produce partially observed outputs. \citet{hung2015analysis} have developed an EM algorithm to address the issue of partially observed output prior to constructing a surrogate, and their method is included for comparison in this article (see Section \ref{sec:experiments}).  In this article, we focus on extending the second approach to incorporate partially observed outputs.

The presence of partially observed output can be attributed to various causes of code failures.  
One cause is failure in parallel computing environments where separate computations are scattered over a large number of computing nodes.  
If a computing node experiences failure during calculations, only some of the calculations may be completed.  
Another cause is related to numerical calculations embedded in simulation codes.  
For example, in a simulation that involves solving a system of nonlinear equations, if the system corresponding to a response is inconsistent or particularly ill-conditioned, then no meaningful solution may be found numerically.  
Another cause comes from simulations where some responses in an output are undefined.
It is not always easy to identify a single underlying cause.  
Consider a parallel computing environment where a response is not returned because a simulation is terminated by the environment when it exceeds a time limit.
The response could be missing because the time limit was set too low, but it could also be missing because the numerical calculations within the simulation would never have terminated.
Regardless of the cause, the presence of partially observed outputs renders most surrogate methods inapplicable.
Partially observed outputs can be viewed in the context of data missingness.  
There are several classical categories of missingness mechanisms considered by statisticians:   missing completely at random (MCAR), missing at random (MAR), and missing not at random (MNAR).  
In practice the underlying causes of partially observed output are difficult to untangle because of the typically deterministic nature of computer code.
Failure in computing nodes seems as though it can be considered MCAR, but in many cases it is MAR because run time depends on the input and longer simulations are more likely to encounter a failure than a shorter one.
There are many examples of missingness in simulation codes, including
climate studies \citep{chang2014fast, ma2022computer}; fluid dynamics \citep{huang2020site};  nuclear physics \citep{bollapragada2021optimization}; and weather dynamics \citep{plumlee2021high}.

Despite the common occurrence of partially observed output, there are few methods available for high-dimensional surrogate construction in this setting. 
The na\"{i}ve approach would be to simply discard the dimension-reduction step entirely and treat each response separately.
In this case, one discards the missing responses and pushes forward with surrogate construction with the available responses. 
The na\"{i}ve approach includes locations as additional input dimensions.  This inclusion allows for the correlation structures among both parameters and locations to be modeled, similar to the inclusion of time index as an additional input \citep{bayarri2009predicting}.
However, this approach can easily exceed the limit of standard GP inference when the number of data points goes above several thousands.
The only recourse while staying with GP inference then leaves GP approximations, which are still significantly more expensive than dimension-reduction approaches and lead to decreased accuracy.
Another approach is to neglect correlations between locations and build an independent surrogate for each location.  When the output contains a small number of responses, building separate surrogates often suffices \citep{baker2022analyzing}.  However, as the number of responses increases, this approach is prohibited by its computational burden.  Simplistic imputation is yet another option, where one imputes the missing values and then builds a surrogate using dimension-reduction tools (e.g., \cite{plumlee2021high}).
The imputation process can be done in various ways besides prediction, such as a ``constant-liar'' imputation used in optimization (e.g., \cite{Chevalier2013}). 
For purposes of uncertainty quantification, these methods are dangerous as they will interpolate the imputed values with zero residual uncertainty.  
For example, if the entire output from a run is missing, the imputation approach would simply interpolate the imputed values instead of representing them as missing, which seems contrary to the goals of a surrogate to faithfully represent the predictive uncertainty.

In this article we propose a new GP surrogate method that operates well in the presence of partially observed simulation output.
The new method retains the computational efficiency found with dimension-reduction methods like those described in \citet{higdon2008computer}.  
However, in contrast to previous such approaches, the new method is not limited to complete data.  
This method involves two components: the imputation of missing values in the data using the principal components and an adjustment to the covariance matrix used in GP prediction. 
The adjustment to the covariance matrix ensures that one does not interpolate the imputed values at the missingness locations.
The resulting surrogate permits the use of data with partially observed output, and it has two appealing properties: (i) In the case where complete data is observed for a run, the resulting surrogate will interpolate those results and (ii) in the case where no data is observed for a run, the associated data row (even though imputed) will be ignored.  The new surrogate method demonstrates robustness empirically under various missingness mechanisms, and provides improved uncertainty quantification in calibration.

The organization of this article is as follows.  Section~\ref{sec:fayans} describes the Fayans energy density functional (EDF), the simulation model that motivates this surrogate development.  Section~\ref{sec:background} introduces the setting and notation employed, and reviews the principal component GP method for surrogate construction.  Section~\ref{sec:gpwm} details the imputation and covariance adjustment components in the new surrogate method alongside its properties.  Section~\ref{sec:experiments} details a numerical experiment to illustrate the surrogate performance under multiple missingness scenarios.  Section~\ref{sec:application_fayans} discusses the calibration of the Fayans EDF model.  Section~\ref{sec:conclusion} provides further discussions and closes the article.

\section{Fayans energy density functional}\label{sec:fayans}

The development of our proposed surrogate method is motivated by problems such as the calibration of the Fayans EDF \citep{fayans1998towards,fayans2000nuclear}. The development and refinement of EDF models have proven effective in understanding atomic nuclei \citep{reinhard2017toward}.  The EDF model investigated in this article was developed by S.\ A.\ Fayans and collaborators for describing ground-state properties of nuclei \citep{fayans1998towards,fayans2000nuclear}, and has since been used for nuclei predictions (e.g., see \citet{yu2003energy, reinhard2017toward}). 
\citet{bollapragada2021optimization} provides a full description of the numerical implementation under consideration in this article. The model takes a 13-dimensional parameter as input and outputs 198 responses.  
Each response corresponds to a spherical, ground-state, even-even nuclear configuration and an observable class.  A total of 72 nuclear configurations and 9 observable classes are considered, totalling to the 198 responses.  The list of responses can be found in Table 1 of \citet{bollapragada2021optimization}.  

The main relevant feature of this model is that it occasionally produces partial outputs, i.e., within one output, some out of the 198 responses are missing due to code failures, presenting a challenge in constructing a surrogate effectively.   
Missingness often occurs because an iterative method for solving equations is used in model evaluation, and failure is reported when the method fails to achieve a prescribed accuracy tolerance within an allotted internal iteration budget.   \citet{bollapragada2021optimization} outline the intricacies of possible failures and show that it is not likely the output is MCAR. 
However, no systematic missingness mechanism is proposed.
In previous analyses of another EDF model, $\approx 9\%$ of the model outputs were discarded due to such types of failures \citep{higdon2015bayesian}.

In Section~\ref{sec:application_fayans} we use a dataset of 500 parameters to construct a surrogate and calibrate the Fayans EDF model.  
An illustration of this is presented in Figure~\ref{fig:fayans_missing}, where out of 500 runs of the model, near $60\%~(359 / 500)$ have at least one missing value.
\begin{figure}[t]
    \centering
    \includegraphics[width=0.65\linewidth]{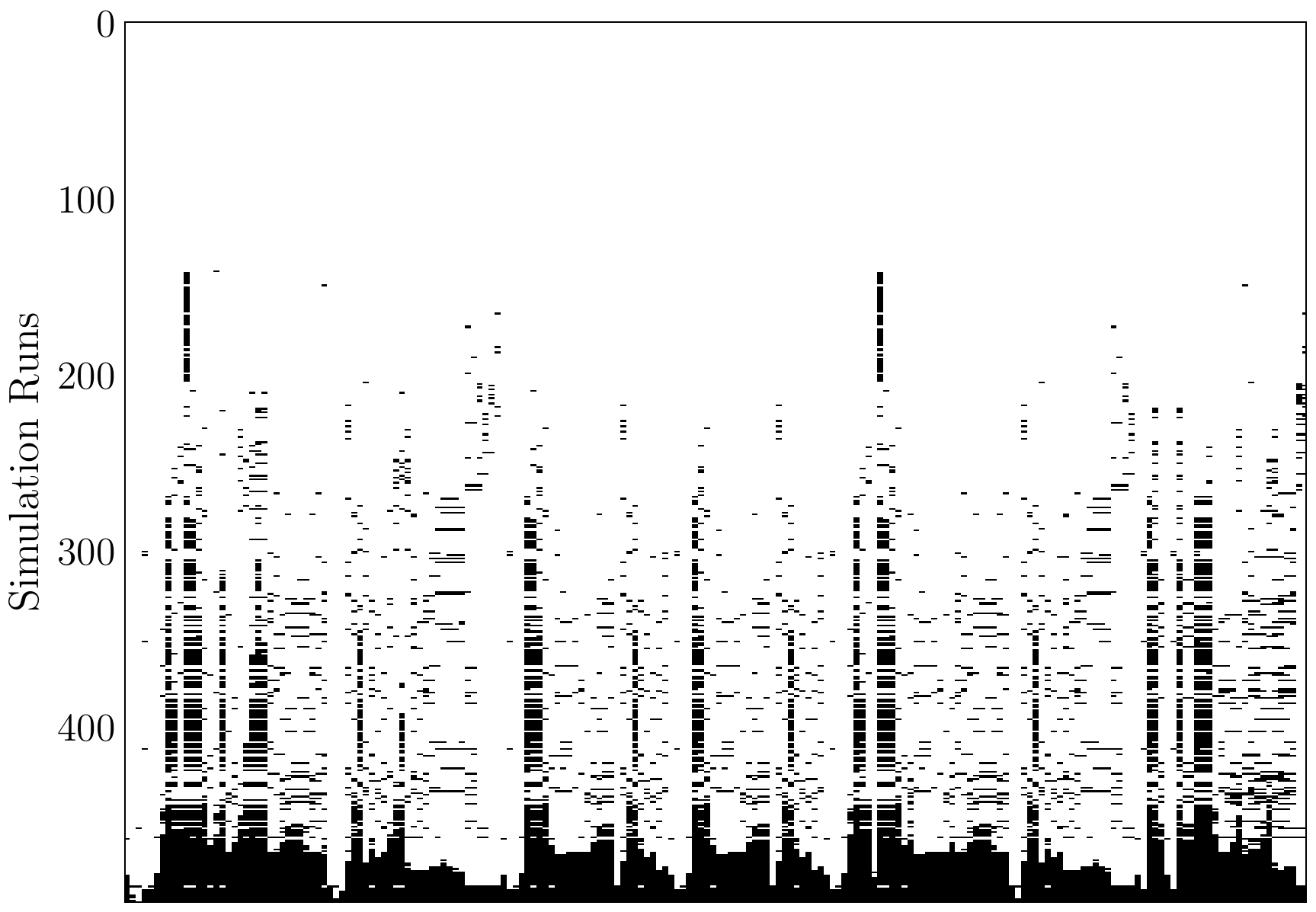}
    \caption{Illustration of partially observed output in our case study (Section \ref{sec:application_fayans}).  Each horizontal line (1--500) is a length 198 simulation output. A dark patch indicates where a response is missing.  The horizontal lines are sorted by number of missing values in the output.}
    \label{fig:fayans_missing}
\end{figure}

In calibrating the Fayans EDF model, \citet{bollapragada2021optimization} have previously employed a point minimization of the total mean-squared loss, or $\chi^2$ loss, with respect to the parameters.  Other works calibrated other EDFs in a similar manner, with minimization of the $\chi^2$ loss \citep{dobaczewski2005solution, kortelainen2010nuedf0, kortelainen2012nuedf1, kortelainen2014nuedf2, reinhard2017toward}.  Such approaches do not directly result in the uncertainty quantification  sought by nuclear physicists \citep{dobaczewski2014error}.
For some other EDFs, uncertainty quantification was performed under a Bayesian framework 
by \citet{higdon2015bayesian} and \citet{mcdonnell2015uncertainty},   but these works assume no missing data (or simply remove output with missing data). 

The existing literature does not contain a viable methodology to solve problems like this one.
One method is proposed in \citet{ma2022computer} to extract functional principal components, which originates from the analysis of partially observed longitudinal data.  The locations in \citet{ma2022computer} are irregularly spaced between output dimensions, producing partial, often sparse, output at each dimension.  In our problem, all responses are expected for each parameter, but missing values may arise for some responses because of various code failures.  
\citet{lawrence2017mira} have proposed to extract principal components using complete output data, and project the partially observed output data onto a subset of basis vectors to obtain the weight coefficients.  This treatment of partially observed output data is effective when the locations of missing data are regular.  Similarly, in 
\citet{gu2020fast}, the output data are modeled to follow a partition of complete and missing data blocks.  The complete data block is then used for principal component analysis following \citet{higdon2008computer}.  In their case, the complete data block contains a large proportion of the output data.  However, it is not applicable in our setting since our missing values are irregular. Thus the equivalent complete data block retains only a small proportion of the output data, resulting in inaccurate estimation of the principal components.  
Another method is proposed by \citet{hung2015analysis}, in which an expectation-maximization procedure is developed to tackle the issue of missing data, coupled with a separable correlation function that reduces the computational cost.  
The scale of their intended application is quite small, preventing direct application for our intended application.  
\citet{hung2015analysis} have studied a problem with $30$ runs and only a few runs with missing data; whereas in the Fayans EDF case, the number of simulation runs is $500$, and over half of them result in a partially observed output, recall Figure~\ref{fig:fayans_missing}. 
\citet{hung2015analysis} is revisited in the numerical experiments in Section~\ref{sec:experiments}.

\section{High-dimensional surrogates}\label{sec:background}
This section introduces the notations used in this article and reviews standard GP modeling and its principal component based extension to high-dimensional surrogate construction \citep{higdon2008computer}.

\subsection{Setting and notations}
We label the user-specified parameter $\bm \theta \in \mathbb{R}^d$.
Because surrogate inference is often deployed in calibration settings, $\bm \theta$ is referred to as a ``parameter.''  
We assume that for a simulation run at $\bm \theta$, the simulation output $\bm f(\bm\theta) \in \mathbb{R}^m$ is collected over a fixed set of locations, labeled as $\setX = (\bm x_1, \ldots, \bm x_m)$, and thus
$\bm f(\bm\theta) = (f(\bm \theta,\bm x_1), \ldots,  f(\bm\theta, \bm x_m) )^\T, $
meaning the simulation evaluated at $\bm \theta$ will produce an output vector $\bm f(\bm \theta)$ consisting of $m$ elements.
Each element in $\bm f(\bm \theta)$ is considered an individual response.
For example, in the Fayans EDF,  
the user-specified $\bm \theta$ represents a 13-dimensional parameter and 
$\bm x \in \setX$ defines a nuclear configuration that corresponds to a response where $m=198$ (see Section \ref{sec:fayans}).  

We presume that we have generated some parameters $(\bm\theta_1, \ldots, \bm\theta_n)$ from a designed computer experiment (such as a Latin Hypercube sample \citep{mckay1979comparison} or optimized variations thereof \citep{joseph2015maximum}), 
and the collection of simulation responses is arranged as the  $n \times m$ matrix $\bm F = (\bm f(\bm \theta_1)^\T, \ldots, \bm f(\bm \theta_n)^\T)^\T.$
Each row represents the simulation output for one parameter at all locations $\bm x \in \setX$.  
Each column represents the response collected at all parameters in the computer experiment.  
Since individual responses may be missing, we denote $N$ as the total number of available responses.
If the collection has complete data, $N = nm.$
For any matrix $\bm A$, we use $\bm A_{i, \cdot}$ to refer to the $i$th row, $\bm A_j$ to refer to the $j$th column of $\bm A$.
We use $\bm A_{\mathcal{I}\mathcal{J}}$ to denote a submatrix of $\bm A$ with entries from the set of row indices $\mathcal{I} \subseteq \{1,\ldots,n\}$ and column indices $\mathcal{J} \subseteq \{1,\ldots,m\}$.  Commas will be used to separate the row and column indices when the notation is ambiguous. 

\subsection{Gaussian process surrogates}
A surrogate is constructed to enable the prediction of output at an unsimulated parameter.
A good surrogate is designed to provide a predictive distribution, which can then be converted to point estimates and uncertainty around those estimates.
In this article, a surrogate provides a predictive distribution for $\bm f(\bm \theta)$ with a mean $\bm \mu(\bm \theta)$ and a covariance matrix $\bm \Sigma(\bm \theta)$.
GPs offer a path to do exactly this with a multivariate normal predictive distribution \citep{santner2018design, gramacy2020surrogates}. 
GP is a common choice to build a surrogate because of its flexibility and ability to interpolate.

\subsubsection{Univariate GP}
This section will review the basics behind GP inference.
Consider $g(\bm \theta)$ as a univariate function that takes $\bm\theta$ as its input.  
A GP model with mean function $\gamma(\cdot)$, scale parameter $\lambda$, and correlation function  $\rho(\cdot, \cdot)$ assumes that for any collection of $n$ scalar outputs (e.g., $\bm g = (g(\bm\theta_1), \ldots, g(\bm\theta_n))^\T$),  follows a multivariate normal distribution with mean $\bm\gamma = (\gamma(\bm\theta_1), \ldots, \gamma(\bm\theta_n))^\T$ and covariance matrix $\lambda \bm R$ where $\bm R = (\rho(\bm\theta_i, \bm\theta_j))_{i,j=1}^n$.  
Typical choices for $\rho$ include a squared exponential or a Mat\'{e}rn correlation function \citep{handcock1993bayesian}, but this choice does not impact the rationale of our method and is left open in this article.  Using the GP as a surrogate means that the predictive distribution given $\bm g$ at any test parameter $\bm\theta$ is a normal distribution with mean $\gamma(\bm\theta) + \bm r^\T(\bm\theta) \bm R^{-1} (\bm g - \bm \gamma)$ and variance $\lambda \left(\rho(\bm\theta, \bm\theta) - \bm r^\T(\bm\theta) \bm R^{-1} \bm r(\bm\theta)\right)$, where $\bm r(\bm\theta) = (\rho(\bm\theta, \bm\theta_1), \ldots, \rho(\bm\theta, \bm\theta_n))^\T$ is the correlation between $\bm\theta$ and $(\bm\theta_1, \ldots, \bm\theta_n)$.  The standard form of this inference thus requires $\bm R^{-1} (\bm g - \bm \gamma)$, which costs typically $O(n^3)$ operations to compute (though approximations do exist). 

The computational cost is a significant obstacle that requires consideration while conducting GP inference.
It is critical that the construction of the surrogate and predictions via the surrogate should be fast compared to running the simulation itself \citep{gramacy2020surrogates}.  In the high-dimensional case, the total number of responses is $N = nm$.
One could use off-the-shelf GP inference, where each response received from the simulation is individually modeled over both $\theta$ and $x$, i.e., the previously mentioned na{\"i}ve approach,  typically uses $O(N^3)$ computations to obtain the necessary matrix inverses and determinants. 
When $N$ is moderately large (e.g., $10^4$), na{\"i}vely using GP inference is difficult because of this computational burden.
For our case study, $m=198$ and hence $N\leq 10^4$ would constrain us to roughly $n$ less than about $50$.
The accuracy of a GP surrogate is tied to $n$, and the practical results for our case study found that at $n=50$ the surrogate is not sufficiently accurate. 
Researchers have considered this setting before and noted that the problem arises when the model output is multivariate with many responses.
A high-dimensional output happens when the number of responses $m$ gets large (often above $>20$).
There are a few choices for high-dimensional GP surrogates \citep{bayarri2007computer,higdon2008computer,conti2010bayesian},  with the majority of these methods seeking to reduce the required computations to $O(n^3)$, such that the computational burden effectively depends only on $n$.

\subsubsection{High-dimensional GP surrogates}
\citet{higdon2008computer} describe a powerful tool for building surrogates in high-dimensional output settings.
Examples of successful applications include nuclear physics \citep{higdon2015bayesian} and storm surge \citep{kyprioti2021improvements} modeling.
The method works by leveraging a low-rank representation for the $m$-dimensional output by applying a singular value decomposition to the matrix $\bm F$. Specifically, this method seeks $\bm \Phi$, where $\bm \Phi$ is an $m \times \kappa$ matrix defined by a set of $\kappa$ orthonormal (i.e., $\bm\Phi^\T\bm\Phi = \bm I$) $m$-dimensional basis vectors.  These are chosen such that for some reasonable $m$-dimensional centering vector $\bm c$, we have $\bm F - \bm 1_n \bm c^\T \approx (\bm F - \bm 1_n \bm c^\T) \bm \Phi \bm \Phi^\T$, meaning we can approximately recover our simulation outputs 
using only the $n \times \kappa$ matrix $\bm G = (\bm F - \bm 1_n \bm c^\T) \bm \Phi$.  
The centering vector is often chosen as the mean of each column in $\bm F$.
Then, the original data $\bm F$ are nearly recovered by $\bm 1_n \bm c^\T + \bm G \bm\Phi^\T$. 
The value of $\kappa$ is typically chosen to offer sufficient recovery of $\bm F$ from $\bm G$. 
This representation method is  especially effective when there is a strong relationship among the responses of the simulation output because then $\kappa$ can be made small (i.e., $\kappa\ll m$).
Suppose $\bm g(\bm \theta) = \bm \Phi^\T (\bm f(\bm \theta)-\bm c)$.  Then, the surrogate for the simulation is constructed as $\bm c + \bm \Phi \bm g(\bm \theta)$. Thus our goal shifts from building a surrogate on an $m$-dimensional output $\bm f(\bm\theta)$ to building a surrogate on a $\kappa$-dimensional output $\bm g(\bm\theta)$.

Our goal is now to use information in the matrix $\bm G$ to infer on the projected output $\bm g(\bm\theta)$ for any $\bm \theta$.  
Due to the orthogonal construction of $\bm \Phi$, each component of $\bm g(\cdot)= (g_1(\cdot),\ldots,g_{\kappa}(\cdot))^\mathsf{T}$ is modeled using an independent GP.
Each $g_k(\cdot)$ then follows
\begin{equation}
g_k(\cdot) \sim \text{GP}\left(0,\lambda_k \rho_k(\cdot,\cdot) \right), \label{eq:g_kmodel}
\end{equation}
where, for component $k$, $\lambda_k > 0$ is a scale parameter, $\rho_k(\cdot, \cdot)$ is a correlation function, and we have set the mean to be zero for ease of exposition.
With data column $\bm G_k$, under the GP model, prediction for $g_k(\bm\theta)$ follows a normal distribution with mean and variance given by
\begin{equation} \label{eqn:gk_gppred}
    \hat{\mu}_k(\bm \theta) = \bm r_k^\T(\bm \theta) \bm R_k^{-1} \bm G_k \text{ and } \sigma^2_k(\bm \theta) = \lambda_k\left(\rho_k(\bm\theta, \bm\theta) - \bm r_k^\T(\bm \theta)\bm R_k^{-1}\bm r_k(\bm \theta)\right),
\end{equation}
where
$ \bm r_k(\bm\theta) = (\rho_k(\bm \theta, \bm \theta_1), \ldots, \rho_k(\bm\theta, \bm\theta_n))^\T$ and $\bm R_k = (\rho_k(\bm \theta_i, \bm \theta_j))_{i, j=1}^n. $
From this,  the surrogate is then defined by
\begin{equation}\label{eqn:f_gppred}
    \bm \mu(\bm\theta) = \bm c + \bm \Phi (\hat{\mu}_1(\bm \theta), \ldots, \hat{\mu}_\kappa(\bm\theta))^\T \text{ and }  \bm \Sigma(\bm\theta) = \bm \Phi
    \begin{pmatrix}
        \sigma^2_1(\bm \theta) & \ldots & 0 \\
        \vdots & \ddots & \vdots \\
        0 & \ldots & \sigma^2_{\kappa}(\bm \theta) \\
    \end{pmatrix}
    \bm \Phi^\T.
\end{equation}
We refer to this method of constructing a surrogate as Principal Component Gaussian Process (PCGP).  
For PCGP methods, $\bm\Sigma(\bm\theta)$ is not of full rank when $\kappa < m$, thus the predictive distribution is often degenerate.  
However, similar to \citet{higdon2008computer}, when a surrogate is used to facilitate parameter calibration, 
$\bm\Sigma(\bm\theta)$  is summed with a (typically diagonal)
strictly positive definite observation error covariance matrix (see Section~\ref{sec:case_calib}).  The resulting sum of the matrices is then full rank and thus $\bm\Sigma(\bm\theta)$ not being strictly positive definite does not pose a problem.

The inferences in \eqref{eqn:f_gppred} are used to provide the mean and covariance of the surrogate for $\bm f(\bm \theta)$.
If there are partially observed simulation outputs, there are reasonable ways to approximate the principal component matrix $\bm \Phi$ and the centering vector $\bm c$ (see, e.g., \citet{roweis1997em, tipping1999probabilistic}).  The details of approximating $\bm \Phi$, an expectation-maximization algorithm inspired by \citet{roweis1997em}, are included in the Supplementary Material  \citep{chan2021supplement}.
However, partially observed simulation output in $\bm F$ means that $\bm G $ cannot be computed. 
Take the $i$th row of the data for example, if $\bm f(\bm \theta_i)^\T$ has one or more missing responses, then $\bm G_{i,\cdot}$ cannot be computed.  
When $m$ is large, it can often be the case that at least one response can be missing in a row, which leaves the rest of the data in the same row unusable without remedy.  
We then risk tossing out a large amount of data because of a few failures.  \citet{higdon2008computer} also commented on the necessity of complete data for the use of PCGP.  This leaves us at a crossroads.  If we use the PCGP approach, we cannot handle missing data.  If we do not adopt the PCGP approach, the surrogate computation may be impossible due to the explosive increase in required computations.  
This article explains how one can expand the PCGP surrogate methodology to handle missing data.

\section{Fast surrogates with missing data} \label{sec:gpwm}
We now introduce our method for constructing a surrogate for high-dimensional outputs
in the presence of missing data.  Specifically, we propose an imputation method for the missing observations during the computation of $\bm G$ alongside a novel adjustment of the covariance matrix.  The resulting method is computationally inexpensive and therefore the benefit of fast construction using PCGP is retained.    

Section~\ref{sec:pcgpwm_goals} describes and illustrates the desired properties of the proposed surrogate.  Sections~\ref{sec:pcgpwm_impute}~and~\ref{sec:pcgpwm_covadjust} detail the imputation and covariance adjustment procedures.  Section~\ref{sec:pcgpwm_thms} shows how
the proposed method delivers the desired surrogate properties.  Section~\ref{sec:pcgpwm_hyperparameter} describes the hyperparameter estimation used in the construction of the surrogate. Section~\ref{sec:pcgpwm_extraterm} provides additional justification for the choice of covariance adjustment.

\subsection{Desired surrogate properties} \label{sec:pcgpwm_goals}
In constructing this surrogate, we consider it necessary to have two desirable properties, the recovery of complete data rows and the ignorance of entirely missing data rows.
Meaning, we want to nearly interpolate to chosen precision where the output is complete  and disregard where the output is entirely missing.

The motivation for these desired properties can be explained through a thought experiment.
Let the collected data $\bm F$ be an $n \times m$ matrix, corresponding to simulation outputs at the parameters $(\bm\theta_1, \ldots, \bm\theta_n)$.  
Let $\bm F$ only have data available in $n_0 < n$ rows, and for these rows, the outputs are complete.  Let the $n_0 \times m$ matrix $\bm F^{(0)}$ denote the available data rows; without loss of generality, we take $\bm F^{(0)}_{i, \cdot} = \bm F_{i, \cdot}$ for all $i \leq n_0$. We assume that all of the remaining rows are entirely missing; this can occur, for example, when the computer running the simulations went down after completing $n_0$ rows. 
Suppose we separately construct two surrogates using a proposed method: one with data $\bm F$ and one with data $\bm F^{(0)}$.  
Let the surrogate with $\bm F$ be described  by $(\bm\mu(\bm\theta), \, \bm\Sigma(\bm\theta))$, and the surrogate with $\bm F^{(0)}$ be described by $(\bm\mu^{(0)}(\bm\theta), \, \bm\Sigma^{(0)}(\bm\theta))$. 
The recovery of complete data rows means that the surrogate should nearly interpolate for the rows with complete data; that is, $\bm\mu(\bm\theta_i) \approx \bm f(\bm\theta_i)$ and $\bm\mu^{(0)}(\bm\theta_i) \approx \bm f(\bm\theta_i)$ for all $i\le n_0$. 
The only interpolation error that should exist is due to the dimension reduction ($\kappa < m$) representation we have chosen.
The ignorance of entirely missing data rows means that the surrogate should not depend on any of the rows $\bm f(\bm \theta_i)^\T$ for any $i$ larger than $n_0$; that is, $\bm\mu(\bm\theta) = \bm\mu^{(0)}(\bm\theta)$ and $\bm\Sigma(\bm\theta) = \bm\Sigma^{(0)}(\bm\theta)$ for any $\bm\theta$.

Let us see how obvious approaches, namely the na{\"i}ve and simplistic imputation approaches described in the introduction, fare with respect to these goals.  
The na\"{i}ve method, where one treats each data point individually, would interpolate all observed points and ignore the remainder. 
Thus it meets our goals, but this method becomes computationally intractable in our settings because of the large $N$ problem.
The simplistic imputation approach would impute all rows when dealing with $\bm F$ and then interpolate that imputation.
This implies it would nearly interpolate all observed rows, but the predictions from the simplistic imputation approach using $\bm F^{(0)}$ and $\bm F$ are inconsistent.
This inconsistency implies that throwing out or keeping rows with fully missing output will give different predictions.

The method proposed in this article is able to nearly interpolate complete rows and ignore missing rows similar to the na\"{i}ve method, yet it remains computationally tractable.
While these criteria do not guarantee interpolation of outputs in the partially observed output case, it is our expectation and experience that by matching these two extremes, the predictions in the partially observed case offer notably better predictions than the simplistic imputation approach.
We later justify this with simulation experiments in Section~\ref{sec:experiments}.

\subsection{Gaussian process-based imputation} \label{sec:pcgpwm_impute}
We will assume the GPs modeling $g_k(\cdot)$ are stationary, and thus, without loss of generality, let $\rho_k(\bm\theta, \bm\theta) = 1$ for $k = 1,\ldots,\kappa$.  Furthermore, this section will assume the centering constant $\bm c=\bm 0$ for ease of exposition.  
If we consider the relationship that our output is (nearly) $\bm \Phi \bm g(\bm \theta)$, its covariance, following \eqref{eq:g_kmodel}, is of the form
\begin{equation} \label{eqn:cov_f}
    \bm \Phi \Cov(\bm g(\bm \theta)) \bm \Phi^\T =  \bm \Phi \bm \Lambda \bm \Phi^\T,
\end{equation}
where $\bm \Lambda$ is the diagonal matrix $\mathrm{diag}(\lambda_1, \ldots, \lambda_\kappa)$, due to the independence among the $\kappa$ GPs in \eqref{eq:g_kmodel}.
These values are presumed decided in advance.  For example, a reasonable choice deployed in our examples sets $\lambda_1, \ldots, \lambda_\kappa$ to the square of the corresponding singular values from the decomposition of $\bm F$.

The covariance matrix in \eqref{eqn:cov_f}, similar to the one in \eqref{eqn:f_gppred}, is not of full rank.  Therefore, it is difficult to use \eqref{eqn:cov_f} because the low-rank nature implies that one can have effectively ``fully observed'' $\bm f(\bm \theta)$ after observing only $\kappa$ entries.
Instead, we will use a full-rank extension of this matrix.
Pick $\varepsilon > 0$ such that $\lambda_k > \varepsilon$ for all $k \leq \kappa$.  This choice of $\varepsilon$ ensures that this covariance matrix extension for $ \bm f(\bm \theta) $, defined as
\begin{equation}\label{eq:Bcov}
    \bm B = \bm \Phi (\bm \Lambda - \varepsilon \bm I) \bm \Phi^\T + \varepsilon \bm I,
\end{equation}
is of full rank, and thus $\bm B$ and any principal submatrix of $\bm B$ are invertible. 

Using the covariance matrix in \eqref{eq:Bcov}, we impute the missing observations to build $\tilde{\bm\projobsdata}$, an $n 
\times \kappa$ matrix with complete entries, to replace $\bm G$.  

Let $\mathcal{J}(i) \subseteq \{1,\ldots, m\}$ be the set of column indices where data are not missing in $\bm f(\bm \theta_i)^\T$.  Then, let $\bm F_{i, \mathcal{J}(i)} = (f(\bm \theta_i, \bm x_j))_{j\in{\mathcal{J}(i)}}^\T$ be the row vector that contains the available data for parameter $\bm \theta_i$ and $\bm B_{\mathcal{J}(i),\cdot}, \bm B_{\mathcal{J}(i),\mathcal{J}(i)}$ be the submatrices corresponding to the indices.  
Because this section assumes a centering vector of zeros, 
\begin{equation*}
    \begin{pmatrix}
        \bm f(\bm \theta_i) \\ \bm F_{i, \mathcal{J}(i)}^\T 
    \end{pmatrix}
    \sim N \left( \begin{pmatrix}
        \bm 0 \\ \bm 0
    \end{pmatrix}, \begin{pmatrix}
        \bm B & \bm B^\T_{\mathcal{J}(i), \cdot}  \\
        \bm B_{\mathcal{J}(i), \cdot} & \bm B_{\mathcal{J}(i), \mathcal{J}(i)}
    \end{pmatrix} \right),
\end{equation*}
where this is a degenerate multivariate normal distribution. Subsequently the conditional mean and covariance of $\bm f(\bm \theta_i)$ given $\bm F_{i, \mathcal{J}(i)}^\T$, following standard normal theory (e.g.,  \citet{gelman2013bayesian}), are 
\begin{equation*}
    \mathbb{E}(\bm f(\bm \theta_i) | \bm F_{i,\mathcal{J}(i)}^\T) = \left(\bm F_{i,\mathcal{J}(i)} \bm B_{\mathcal{J}(i),\mathcal{J}(i)}^{-1} \bm B_{\mathcal{J}(i),\cdot}\right)^\T \quad \text{ and}
\end{equation*}
\begin{equation*}
    \mathrm{Cov}(\bm f(\bm \theta_i) | \bm F_{i, \mathcal{J}(i)}^\T) = \bm B - \bm B_{\mathcal{J}(i),\cdot}^\T \bm B_{\mathcal{J}(i),\mathcal{J}(i)}^{-1} \bm B_{\mathcal{J}(i),\cdot},
\end{equation*}
respectively. We propose then to infer $\bm G_{i, k}$ using the conditional quantities 
\begin{equation}\label{eqn:gtildemean}
    \tilde{\bm \projobsdata}_{i, k} = \bm F_{i,\mathcal{J}(i)} \bm B_{\mathcal{J}(i),\mathcal{J}(i)}^{-1} \bm B_{\mathcal{J}(i),\cdot} \bm \Phi_{\cdot, k} \quad \text{ and}
\end{equation} 
\begin{equation}
    u_{ik} = \bm \Phi_{\cdot, k}^\T \left(\bm B - \bm B_{\mathcal{J}(i),\cdot}^\T \bm B_{\mathcal{J}(i),\mathcal{J}(i)}^{-1} \bm B_{\mathcal{J}(i),\cdot} \right) \bm \Phi_{\cdot, k}. \label{eqn:gtildevar}
\end{equation}

This proposed inference is the key insight for our approach.  
While \eqref{eqn:gtildemean} is a reasonable use of our principal components for imputation, \eqref{eqn:gtildevar} is a valuable measure of goodness for the imputation, since the conditional variances reflect the imputation uncertainty.   
The quantities $u_{ik}$ from \eqref{eqn:gtildevar} are used to adjust our GP's covariance matrix to account for the uncertainty due to imputation.
For the suggested inferences to be practically useful, the inversion of the submatrix of $\bm B$ should be efficient.  The details of such inversion are provided in the Supplementary Material  \citep{chan2021supplement}.

 \subsection{Covariance adjustment} \label{sec:pcgpwm_covadjust}
The additional uncertainty from the imputation of missing data needs to be accounted for. 
We propose adjusting the covariance matrix by adding an extra term to $\bm R_k$, the original covariance matrix in \eqref{eqn:gk_gppred}, in order to model the increase in variance of prediction due to imputation. 
Defining the scaled predictive variances by $w_{ik} = u_{ik} / \lambda_k$,
the proposed adjusted covariance matrix is 
\begin{align} 
    \tilde{\bm R}_k = \bm R_k + \beta_k \mathrm{diag}(v_{1k}, \ldots, v_{nk}), \qquad \text{ where }
    v_{ik} = \min\left\{\eta,\frac{w_{ik}}{(1 - w_{ik})^\alpha}\right\}, \label{eqn:adjusted_cov}
\end{align}
where $\eta > 0$ is a large constant introduced to prevent infinite values from corrupting our linear algebra, and $\alpha>0$ and $\beta_k > 0$ are hyperparameters that affect the magnitude of the additional term. In particular, $\alpha$ controls the penalty for extreme missingness in an output and $\beta_k$ controls the  amplification of additional variances for component $k$.  In the case if we set $\beta_k = 0$, the variance added from imputation is not accounted for, and the imputed values are treated as observed and interpolated.

Using \eqref{eqn:gtildemean} on every row with missing data, $\tilde{\bm G}_k$ is the completed $k$th column of the low-dimensional data. Combining that with our covariance in \eqref{eqn:adjusted_cov} gives that  prediction for $g_k(\theta)$ with
\begin{equation} \label{eqn:tildemu_k}
    \tilde{\mu}_k(\bm\theta) = \bm r_k^\T(\bm \theta) \tilde{\bm R}_k^{-1} \tilde{\bm \projobsdata}_k \quad \text{and}
\end{equation}\begin{equation} \label{eqn:tildesigma_k}
    \tilde{\sigma}^2_k(\bm\theta) = \lambda_k \left(\rho_k(\bm\theta, \bm\theta) - \bm r_k^\T(\bm \theta)\tilde{\bm R}_k^{-1}\bm r_k(\bm \theta)\right),
\end{equation}
where $\bm v_k = (v_{1k}, \ldots, v_{nk})^\T.$ 
  The completed version of the surrogate provides a prediction for $\bm f(\bm \theta)$ that has a multivariate normal distribution with updated mean and covariance
\begin{equation} \label{eqn:rev_surrogate}
    \bm\mu(\bm\theta) = \bm\Phi (\tilde{\mu}_1(\bm\theta), \ldots, \tilde{\mu}_\kappa(\bm\theta))^\T, \text{ and } 
    \bm\Sigma(\bm\theta) = \bm\Phi
    \begin{pmatrix}
        \tilde{\sigma}^2_1(\bm \theta) & \ldots & 0 \\
        \vdots & \ddots & \vdots \\
        0 & \ldots & \tilde{\sigma}^2_{\kappa}(\bm \theta) \\
    \end{pmatrix}
    \bm \Phi^\T.
\end{equation}

Recall that for this section we set $\bm c=0$, thus these terms agree closely with the PCGP terms in \eqref{eqn:f_gppred}, where we have only modified the predictions of means and variances.  
If there is no missingness in our simulation output, the expressions in \eqref{eqn:f_gppred} and \eqref{eqn:rev_surrogate} will exactly agree.

\subsection{Properties of the proposed surrogate} \label{sec:pcgpwm_thms}
The following result
guarantees that if there is no missing data in the $i$th row (i.e., $\mathcal{J}(i) = \{1,\ldots,m\}$), then $\tilde{\bm \projobsdata}_{i,\cdot}$ recovers $\bm \projobsdata_{i,\cdot}$ without additional uncertainty.  
\begin{theorem}[Recovery of fully observed row] \label{thm:recovery}
    If $\mathcal{J}(i) = \{1,\ldots,m\}$, then $\tilde{\mu}_k(\bm \theta_i) = \bm f(\bm \theta_i)^\mathsf{T} \bm \Phi_{\cdot,k} $ and $\tilde{\sigma}_k^2(\bm \theta_i) = 0$ for  $k = 1,\ldots,\kappa$.
\end{theorem}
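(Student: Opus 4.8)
The plan is to trace the fully-observed case through each constructed quantity in order, showing first that the imputation reproduces the exact projection with zero added variance, and then that this collapses the adjusted GP prediction to pure interpolation at $\bm\theta_i$.

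First I would specialize the imputation formula \eqref{eqn:gtildemean}. When $\mathcal{J}(i)=\{1,\ldots,m\}$, every submatrix indexed by $\mathcal{J}(i)$ becomes the full matrix: $\bm F_{i,\mathcal{J}(i)} = \bm f(\bm\theta_i)^\T$ and $\bm B_{\mathcal{J}(i),\mathcal{J}(i)} = \bm B_{\mathcal{J}(i),\cdot} = \bm B$. Substituting and using $\bm B^{-1}\bm B = \bm I$ gives $\tilde{\bm\projobsdata}_{i,k} = \bm f(\bm\theta_i)^\T \bm\Phi_{\cdot,k}$. The same substitution in \eqref{eqn:gtildevar}, together with the symmetry of $\bm B$, yields $u_{ik} = \bm\Phi_{\cdot,k}^\T(\bm B - \bm B)\bm\Phi_{\cdot,k} = 0$; hence $w_{ik} = u_{ik}/\lambda_k = 0$, and from the definition in \eqref{eqn:adjusted_cov} we get $v_{ik} = \min\{\eta, 0\} = 0$.

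The key observation is then that the vanishing of $v_{ik}$ leaves the $i$th row and column of $\tilde{\bm R}_k$ untouched by the adjustment. Writing $\bm e_i$ for the $i$th standard basis vector, the correlation vector satisfies $\bm r_k(\bm\theta_i) = \bm R_k \bm e_i$ by symmetry of the correlation function; and because the $i$th diagonal entry of the added term is $\beta_k v_{ik} = 0$, we also have $\tilde{\bm R}_k \bm e_i = \bm R_k \bm e_i = \bm r_k(\bm\theta_i)$. In other words, $\bm r_k(\bm\theta_i)$ is exactly the $i$th column of the adjusted matrix $\tilde{\bm R}_k$, whose $(i,i)$ entry is $\rho_k(\bm\theta_i,\bm\theta_i)=1$ under the stationarity normalization.

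Finally I would feed this identity into \eqref{eqn:tildemu_k} and \eqref{eqn:tildesigma_k}. Using $\bm r_k(\bm\theta_i) = \tilde{\bm R}_k\bm e_i$ and the symmetry of $\tilde{\bm R}_k$, the product $\bm r_k^\T(\bm\theta_i)\tilde{\bm R}_k^{-1}$ telescopes to $\bm e_i^\T$, so $\tilde{\mu}_k(\bm\theta_i) = \bm e_i^\T\tilde{\bm\projobsdata}_k = \tilde{\bm\projobsdata}_{i,k} = \bm f(\bm\theta_i)^\T\bm\Phi_{\cdot,k}$, giving the mean claim; likewise $\bm r_k^\T(\bm\theta_i)\tilde{\bm R}_k^{-1}\bm r_k(\bm\theta_i) = \bm e_i^\T\bm r_k(\bm\theta_i) = 1$, so $\tilde{\sigma}_k^2(\bm\theta_i) = \lambda_k(1-1) = 0$. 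The one place demanding care — the main obstacle, such as it is — is verifying that the covariance adjustment does not perturb the $i$th column of $\tilde{\bm R}_k$; this is precisely what $v_{ik}=0$ buys, and it is the single fact that lets the standard GP interpolation argument run on $\tilde{\bm R}_k$ in place of $\bm R_k$.
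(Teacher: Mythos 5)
Your proof is correct and follows essentially the same route as the paper's: specialize the imputation formulas to obtain $\tilde{\bm G}_{i,k} = \bm f(\bm\theta_i)^\mathsf{T}\bm\Phi_{\cdot,k}$ and $u_{ik}=0$, then use the interpolation identity $\bm r_k^\mathsf{T}(\bm\theta_i)\tilde{\bm R}_k^{-1} = \bm e_i^\mathsf{T}$ to collapse the adjusted prediction and variance at $\bm\theta_i$. If anything, you are slightly more careful than the paper, whose proof states the identity for the unadjusted matrix $\bm R_k$ and leaves implicit the fact that $v_{ik}=0$ keeps the $i$th row and column of $\tilde{\bm R}_k$ equal to those of $\bm R_k$ even when other rows carry nonzero adjustments --- the step you spell out explicitly.
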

The proof is given in the Supplementary Material  \citep{chan2021supplement}.  On the other hand, notice that if there is no data in the $i$th row (i.e., $\mathcal{J}(i)$ is the empty set), then
\begin{equation*}    
u_{ik} = \bm \Phi_{\cdot, k}^\T \left( \bm \Phi (\bm \Lambda - \varepsilon \bm I) \bm \Phi^\T  + \varepsilon \bm I \right)  \bm \Phi_{\cdot, k}  = \lambda_k - \varepsilon + \varepsilon = \lambda_k. 
\end{equation*}
Our next result establishes that our method naturally ignores these rows of missing data.  
\begin{theorem}[Ignorance of missing rows] \label{thm:ignorefullmissingrows}
Define the $k$th surrogate component, where all rows with entirely missing data are excluded from construction, by $\tilde{\underline{\mu}}_k$ and $\tilde{\underline{\sigma}}^2_k$ from \eqref{eqn:tildemu_k} and \eqref{eqn:tildesigma_k}, respectively. 
If $\mathcal{J}(i) =\emptyset$, then for any $\alpha > 0$, $\beta_k > 0$, $\bm \theta$, and $k \in \{1,\ldots,\kappa\}$, we have that  $\tilde{\mu}_k(\bm\theta) - \tilde{\underline{\mu}}_k(\bm\theta) \rightarrow 0$ and
$\tilde{\sigma}^2_k(\bm\theta) - \tilde{\underline{\sigma}}^2_k(\bm\theta) \rightarrow 0$ as $\eta \rightarrow \infty$.
\end{theorem}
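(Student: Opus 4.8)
The plan is to exploit the block structure of $\tilde{\bm R}_k$ induced by the two kinds of rows and to show that driving the missing-row diagonal to infinity detaches those rows from the prediction. Partition $\{1,\ldots,n\}$ into $\mathcal{M}=\{i:\mathcal{J}(i)=\emptyset\}$, the entirely missing rows, and $\mathcal{O}=\{1,\ldots,n\}\setminus\mathcal{M}$, and reorder so that $\mathcal{O}$ precedes $\mathcal{M}$. For $i\in\mathcal{M}$ the calculation just before the theorem gives $u_{ik}=\lambda_k$, hence $w_{ik}=1$ and $v_{ik}=\min\{\eta,\infty\}=\eta$; for $i\in\mathcal{O}$ one has $w_{ik}<1$, so $v_{ik}=w_{ik}/(1-w_{ik})^\alpha$ is a fixed finite number once $\eta$ exceeds it, independent of $\eta$. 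Moreover, the imputation \eqref{eqn:gtildemean} with $\mathcal{J}(i)=\emptyset$ returns the prior mean, so $\tilde{\bm G}_{ik}=0$ for $i\in\mathcal{M}$. Writing $\bm A,\bm C,\bm D_\eta$ for the $\mathcal{O}\mathcal{O}$, $\mathcal{O}\mathcal{M}$, and $\mathcal{M}\mathcal{M}$ blocks of $\tilde{\bm R}_k$ and partitioning $\bm r_k(\bm\theta)=(\bm r_{\mathcal{O}}^\T,\bm r_{\mathcal{M}}^\T)^\T$ and $\tilde{\bm G}_k=(\tilde{\bm G}_{\mathcal{O}}^\T,\tilde{\bm G}_{\mathcal{M}}^\T)^\T$ conformally, we obtain
\[
  \tilde{\bm R}_k=\begin{pmatrix}\bm A & \bm C\\ \bm C^\T & \bm D_\eta\end{pmatrix},\qquad \bm A=\tilde{\underline{\bm R}}_k,\qquad \bm D_\eta=\bm R_k[\mathcal{M},\mathcal{M}]+\beta_k\eta\bm I.
\]
The key observation is that $\bm A$ is \emph{exactly} the reduced adjusted covariance $\tilde{\underline{\bm R}}_k$, because each $v_{ik}$ depends only on the data pattern of row $i$ and is therefore unchanged by deleting the $\mathcal{M}$ rows; likewise $\bm r_{\mathcal{O}}$ and $\tilde{\bm G}_{\mathcal{O}}$ coincide with their reduced-problem counterparts.

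Next I would carry out a block inversion. Since $\bm R_k$ is positive definite and the diagonal adjustment is nonnegative, $\tilde{\bm R}_k$, $\bm A$, and $\bm D_\eta$ are all invertible, and $\bm D_\eta\succeq\beta_k\eta\bm I$ gives $\|\bm D_\eta^{-1}\|\le(\beta_k\eta)^{-1}\to 0$. With the Schur complement $\bm S=\bm A-\bm C\bm D_\eta^{-1}\bm C^\T$ we get $\bm S\to\bm A$, and by continuity of inversion at an invertible point, $\bm S^{-1}\to\bm A^{-1}$. Substituting the block inverse into \eqref{eqn:tildemu_k} and using $\tilde{\bm G}_{\mathcal{M}}=\bm 0$, the mean collapses to $\bm r_{\mathcal{O}}^\T\bm S^{-1}\tilde{\bm G}_{\mathcal{O}}-\bm r_{\mathcal{M}}^\T\bm D_\eta^{-1}\bm C^\T\bm S^{-1}\tilde{\bm G}_{\mathcal{O}}$; the first term tends to $\tilde{\underline{\mu}}_k(\bm\theta)$ and the second to $0$ through the vanishing factor $\bm D_\eta^{-1}$. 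Expanding the quadratic form $\bm r_k^\T(\bm\theta)\tilde{\bm R}_k^{-1}\bm r_k(\bm\theta)$ in \eqref{eqn:tildesigma_k}, every summand other than $\bm r_{\mathcal{O}}^\T\bm S^{-1}\bm r_{\mathcal{O}}$ carries at least one factor $\bm D_\eta^{-1}$, so the form tends to $\bm r_{\mathcal{O}}^\T\bm A^{-1}\bm r_{\mathcal{O}}$ and hence $\tilde{\sigma}_k^2(\bm\theta)\to\tilde{\underline{\sigma}}_k^2(\bm\theta)$. Together these give the two claimed limits.

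The block-inverse bookkeeping is routine; the conceptual crux is the pair of facts that (i) sending the missing-row diagonal to infinity forces $\bm D_\eta^{-1}\to\bm 0$ while leaving the $\mathcal{O}$-block equal to the reduced covariance, and (ii) the imputed value on an empty row is exactly the prior mean $0$, so no surviving contribution comes from $\mathcal{M}$. The one point genuinely requiring care is the claim $w_{ik}<1$ for $i\in\mathcal{O}$, which keeps the $\mathcal{O}$-diagonal bounded as $\eta\to\infty$ and makes $\bm A$ truly $\eta$-independent. Because $\bm B\bm\Phi_{\cdot,k}=\lambda_k\bm\Phi_{\cdot,k}$, a short computation shows $u_{ik}=\lambda_k$ would force $\Phi_{jk}=0$ for every available column $j\in\mathcal{J}(i)$, a degenerate configuration I would either exclude by a nondegeneracy assumption on $\bm\Phi$ or, equivalently, absorb into $\mathcal{M}$ without affecting the conclusion.
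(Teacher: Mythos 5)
Your proof is correct, and its skeleton---partition the design rows, block-invert the adjusted covariance matrix, and show that every block of the inverse touching the $\eta$-inflated rows vanishes as $\eta\to\infty$---is the same as the paper's. The differences are in execution, and the comparison is instructive. The paper partitions rows by the criterion $w_{ik}=1$ versus $w_{ik}<1$ (not by $\mathcal{J}(i)=\emptyset$ versus not), invokes a Neumann-series lemma showing $\lim_{\eta\to\infty}(\bm A+\eta\bm I)^{-1}=\bm 0$ for bounded $\bm A$, and then applies the Woodbury identity to show the observed block of the inverse converges to the reduced inverse; you instead bound $\lVert \bm D_\eta^{-1}\rVert\le(\beta_k\eta)^{-1}$ directly and appeal to continuity of inversion at the Schur complement, which is more elementary and self-contained (and your use of $\tilde{\bm G}_{ik}=0$ on empty rows, while correct, is not actually needed: the vanishing inverse blocks already annihilate those entries, which is how the paper avoids it). The real payoff of the paper's partition choice is precisely the ``point genuinely requiring care'' that you flag at the end: a partially observed row with $\bm\Phi_{\mathcal{J}(i),k}=\bm 0$ has $w_{ik}=1$, and in the paper's proof such rows simply land in the inflated block alongside the empty rows, with no nondegeneracy assumption needed, because the argument shows both the full and the reduced surrogates forget \emph{all} $w_{ik}=1$ rows in the limit. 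Your alternative patch of absorbing such rows into $\mathcal{M}$ is sound, but note what it costs: the identity $\bm A=\tilde{\underline{\bm R}}_k$ no longer holds exactly, since the reduced surrogate still carries a $\beta_k\eta$ diagonal on those degenerate rows; so you must run the same limiting argument on $\tilde{\underline{\mu}}_k$ and $\tilde{\underline{\sigma}}^2_k$ as well and identify the two limits, rather than holding the reduced side fixed. Adding a nondegeneracy assumption on $\bm\Phi$, your other proposed fix, would prove a weaker statement than the theorem as written, so the absorption route (carried out with that extra step) is the one to take.
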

The proof is provided in the Supplementary Material  \citep{chan2021supplement}.  Although Theorem \ref{thm:ignorefullmissingrows} is  stated in a limit as our numerically stabilizing parameter $\eta$ goes to infinity, we note that in our deployments on real problems that setting $\eta$ to $10$ results in reasonable ignoring behavior.

\subsection{Estimation of hyperparameters in surrogate construction} \label{sec:pcgpwm_hyperparameter}
To construct the surrogate is to fit the hyperparameters for all $k = 1,\ldots, \kappa$ components.  Besides the covariance adjustment coefficient $\beta_k$, additional hyperparameters are implicitly included in the notation of the covariance function $\rho_k(\bm \theta, \bm\theta') = \rho_k(\bm \theta, \bm \theta'; \bm \psi_k),$
where $\bm\psi_k$ encapsulates the covariance function hyperparameters.  Within this section, we use the notation $\tilde{\bm R}_k(\beta_k, \bm\psi_k)$ to highlight the dependence of the adjusted covariance matrix $\tilde{\bm R}_k$ on both $\beta_k$ and $\bm\psi_k$.  The surrogate construction involves the optimization of the log-likelihood with respect to the hyperparameters, which can be simplified to
\begin{equation*} 
	\left(\hat{\beta}_k, \hat{\bm\psi}_k\right) = \arg\max_{\beta_k, \bm\psi_k} \Big\{- \frac{1}{2} \log |\tilde{\bm R}_k(\beta_k, \bm \psi_k)|  - \frac{1}{2 \lambda_k} \tilde{\bm \projobsdata}_k ^\T \tilde{\bm R}_k(\beta_k, \bm \psi_k)^{-1} \tilde{\bm \projobsdata}_k\Big\},
\end{equation*}
where $\lvert \bm A \rvert$ returns the determinant of $\bm A$.
The evaluation of the log-likelihood involves the decomposition of the covariance matrix that costs $O(n^3)$ operations, where $n$ is the number of parameters.  In our deployment, we use a gradient-based L-BFGS optimization solver to approximate the maximum likelihood estimate.

\subsection{Rationale for the extra term} \label{sec:pcgpwm_extraterm}
We now provide justification for the choice of the extra term in \eqref{eqn:adjusted_cov} alongside hyperparameter $\alpha$.  While $w_{ik}$ can be used directly as the added variance terms, it is insufficient to represent the intuition that (i) if the row has complete data, no adjustments should be made, and (ii) if no data are observed in the row, the added term should be infinite, because we have no information about the row. 
In the case where $\alpha=0$, $v_{ik} = w_{ik}$ does not achieve infinity when no data are observed.  
When $\alpha>0$, the added term $v_{ik} = 0$ when $w_{ik} = 0$, and $v_{ik} = \infty$ when $w_{ik} = 1$.  Moreover, the magnitude of $\alpha$ controls the rate of increase of variance $v_{ik}$ as $w_{ik}$ increases. The variance term $v_{ik}$ represents the uncertainty introduced from observing only at the index set $\mathcal{J}(i)$ for row $i$.  As $\alpha$ increases, $v_{ik}$ increases with a faster rate to infinity as $w_{ik}$ approaches 1.  
The effect of $\alpha$ is illustrated in Figure ~\ref{fig:alpha_v}.  
The hyperparameter $\beta_k$ is introduced to control the inflation (or deflation) of additional variances for the $k$th surrogate component.  The $\beta_k$ value can vary across the components.  When $\beta_k = 0$, the additional variances have no effect to the surrogate, as if there is no uncertainty due to data missingness.  When $\beta_k $ is infinite, the overwhelming additional variances will cause any row with any amount of missing data to be ignored.  When $\beta_k = 1$, no adjustment is made.  The hyperparameters $\beta_k, k=1,\ldots,\kappa,$ are estimated in the surrogate construction.  
\begin{figure}[t]
    \centering
    \includegraphics[width=0.5\linewidth]{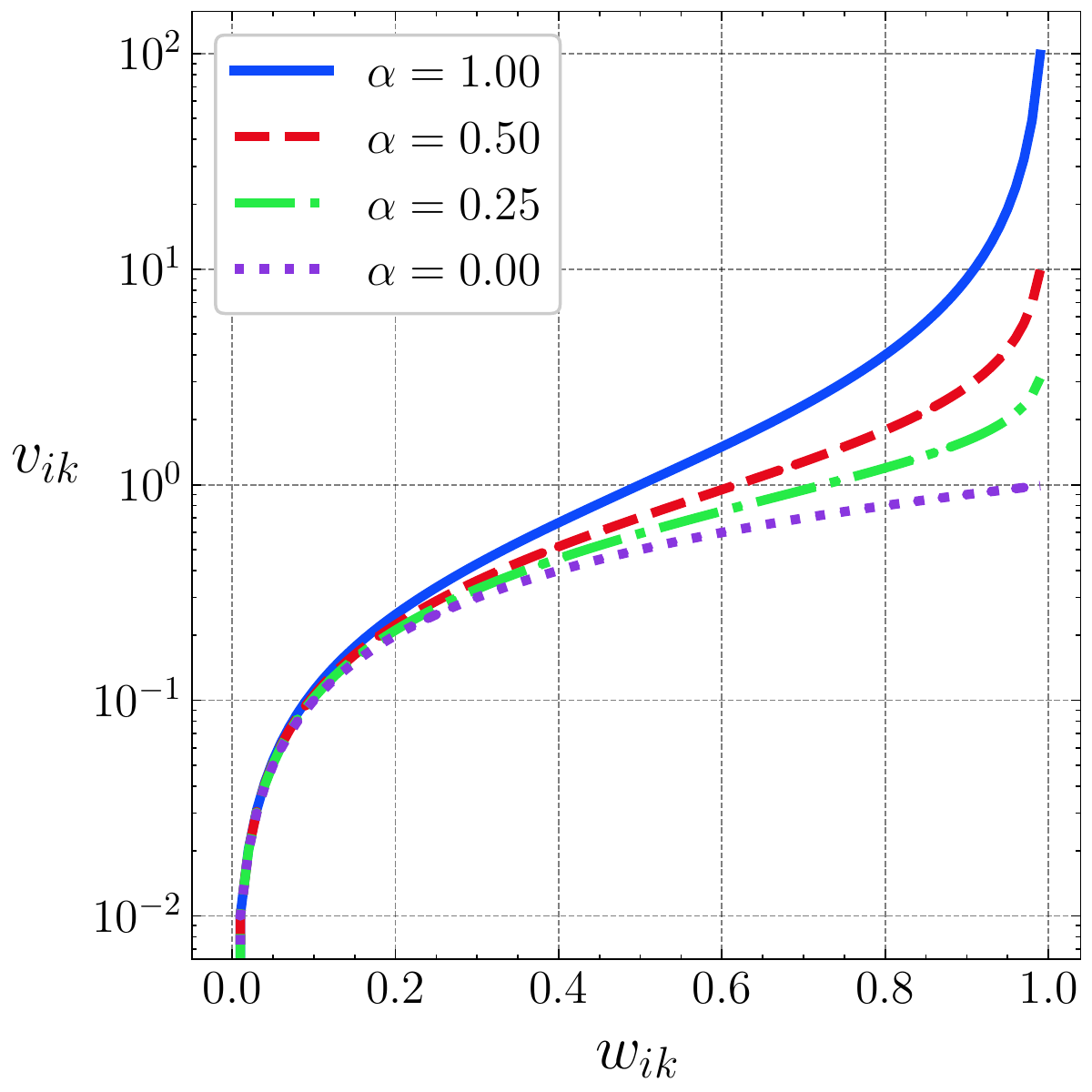}
    \caption{Illustrations of the effect of hyperparameter $\alpha$ in the variance term $v_{ik}.$ Except for $\alpha=0$, $v_{ik}$ approaches infinity as $w_{ik}$ approaches 1.}
    \label{fig:alpha_v}
\end{figure}

A numerical experiment is presented in Supplementary Material \citep{chan2021supplement} that investigates the choice of $\alpha$ and if $\beta_k$ should be either equal to $1$ or optimized as a hyperparameter. 
We found the best behavior when the $\beta_k$ are optimized and found $\alpha > 0.5$ could negatively impact the prediction accuracy.  
We suggest $\alpha = 0.3$ as a default choice.  

\section{Numerical experiments} \label{sec:experiments}
We now present the numerical experiment for evaluating the performance of our proposed surrogate method.  

\subsection{Setup of numerical experiments}
We compare the proposed method with other methods that employ common strategies for dealing with missing data.  A total of six methods are considered. 
We label our method as ``PCGPwM,'' for PCGP with Missingness.  Among the other methods, the first method (``GP-OM'', standing for GP-Omit-Missingness) is to omit the missing data, which is the na{\"i}ve approach that treats the $N$ responses individually and discards the missing points.  The second method (``colGP'') is to treat locations as independent and construct a GP for each of $m$ locations.  \citet{gu2016parallel} refer to colGP as the ``Many Single'' emulator approach and have commented on its computational intractability when $m$ is large.   \citet{gu2016parallel} resolves this computation issue by aligning common Gaussian process parameters across locations, but this fix is not needed here as our $m$ is relatively small.  We consider the expectation-maximization-based method by \citet{hung2015analysis} as the third method (``EMGP'').  
The next two methods involve the simplistic imputation approach with two different imputation procedures. 
The two imputation procedures considered are the $k$-nearest neighbor method (see, e.g., \citet{altman1992introduction}), and Bayesian ridge regression \citep{tipping2001sparse}.  Since the PCGP method is used on the imputed data, these two methods are termed ``PCGP-kNN'' and ``PCGP-BR,'' respectively.
In addition to these methods, we compare PCGPwM with a baseline method where the principal components are assumed known.  This comparison is reported in the Supplementary Materials \citep{chan2021supplement}.

GP-OM treats the missing data as if they were not present.  We discard any response that is missing.  After omitting the missing data, the remaining data do not necessarily form a data matrix.  For example, one row may have no missing data, while another row may have one or more entries removed. 
Instead, the remaining data are stacked into a column and a single GP is constructed as a surrogate.  Due to the stacking of data, this method becomes intractable as the computational complexity is $O(N^3)$, with $N$ being the number of available data points.  The colGP method treats each $\bm x\in \setX$ as independent locations and constructs $m$ GPs.  If $m$ is large, computational costs may not permit the use of colGP simply due to the number of GPs to be constructed.  The EMGP method estimates the missing values via an EM algorithm and uses a separable covariance matrix to speed up its GP construction.
For each of the simplistic imputations, the missing data are first imputed and PCGP is then applied to construct the surrogate.  The $k$-nearest neighbor method imputes the missing value by averaging the $k$ closest available data points, using the Euclidean distance between couples $(\bm\theta, \bm x)$.  The Bayesian ridge regression fits a regression model and imputes the missing values with the predictions of the model.  In the implementation of the imputation methods using \texttt{scikit-learn} \citep{scikit-learn}, missing values in each data column are imputed in a round-robin fashion for 10 iterations, and the result from the final iteration is returned.  See the package documentation on \texttt{IterativeImputer} for details.  

We use four functions as test examples, namely the borehole function, piston function, wingweight function, and OTLcircuit function.  These are common test functions for emulation and uncertainty quantification, located at \url{http://www.sfu.ca/~ssurjano/index.html}.  The input variables of each function are partitioned into $(\bm\theta, \bm x)$. 
Our proposed method does not address the source of missingness in the data. To test the robustness of the method, we generate partially observed output with the three missing mechanisms mentioned in Section \ref{sec:intro}.  The missing mechanisms are MCAR, MNAR, and MAR.  Under MCAR, any response has the same probability of missing.  Under MNAR, the response missingness depends on unobserved quantities, for example, the value of the response. Under MAR, the response missingness only depends on quantities that are observed, for example, the parameter and location in our case.  For each of these missingness mechanisms, the percentage of missingness is specified.  The methods are tested at 1\%, 5\%, and 25\% missingness levels.  With MNAR, values of the borehole and the wingweight functions are missing when the evaluated function value exceeds a threshold;  and the values of the OTLcircuit and the piston functions are missing according to the probability determined by a logistic model over the fixed locations.  With MAR, The missingness is randomly assigned over a subset of the fixed locations. The test functions and the details of the MNAR and MAR generation are included in the Supplementary Material \citep{chan2021supplement}.

The size of the output, $m$, is set as 15.  The set of locations, $(\bm x_1, \ldots, \bm x_m)$, This is determined using $\bm x$'s that are uniformly sampled from their respective ranges.  The number of parameters, $n$, are set to take values $n=50, 100, 250, 1000,$ and $ 2500.$  The parameters are sampled using a Latin Hypercube design \citep{santner2018design} in the unit hypercube, and then scaled to their respective ranges.  Denoting a missingness scenario to be the missing mechanism and the missing fraction combined, nine scenarios are considered: (MCAR, 1\%), (MCAR, 5\%), (MCAR, 25\%), (MNAR, 1\%), (MNAR, 5\%), (MNAR, 25\%), (MAR, 1\%), (MAR, 5\%), and (MAR, 25\%).  We construct a surrogate using each method, by supplying it the output of the model, for each combination of $n$ and missingness scenario.  To efficiently compare the methods, we fix the locations, the parameters, and the set of missing values across the four methods in one replication.  Each experiment is run for 20 replications for all test functions.  Each replication is given 1 hour of run time and is canceled if the surrogate construction and prediction takes longer than that.

All methods are implemented through the Python package \texttt{surmise} \citep{surmise2021}, a modular package that interfaces different statistical emulation and calibration methods.  Our proposed method is implemented under the name \texttt{PCGPwM}.  For GP-OM, \texttt{GPy} \citep{gpy2014} is used to construct the surrogate.  The colGP method is implemented under the name \texttt{colGP}.  The EMGP method is implemented under the name \texttt{EMGP}. For PCGP-kNN and PCGP-BR, first the imputations are performed with \texttt{scikit-learn} \citep{scikit-learn}, then the completed data are supplied to the \texttt{PCGP} method to construct a surrogate.  The simplistic imputation approach is implemented under the name \texttt{PCGPwImpute} with an option of which imputation method to use. 

\subsection{Results of comparison experiments} 
This section will discuss representative results for the (MNAR, 5\%) missingness scenario.  For a full description of the results of our simulation, we direct the reader to the Supplementary Material \citep{chan2021supplement}.  

All methods are competitive in computation, except GP-OM, which costs roughly 30 times as long for all $n$. 
 Na{\"i}vely omitting missing values in the data destroys the regular structure, and further leads to tractability issues.  In addition, EMGP does not complete at the largest data size within the time limit.  The remaining constructions are completed under the time limit.  Note that although the colGP method completes all surrogate constructions within time limit here with $m=15$, its computation time may be prohibitive as $m$ is too large, further explored in Section~\ref{sec:colgphighm}.

The quality of the surrogate methods is measured by the root mean squared error (RMSE), the coverage probability of the 90\% prediction interval (90\% coverage), and the width of the same interval (90\% width).  While RMSE concerns the predictive accuracy of the mean, 90\% coverage and 90\% width are empirical measures of the quality of the surrogate's uncertainty quantification.  The measures are evaluated against holdout simulation runs, and any missing values in the runs are excluded from evaluation.  

Figure \ref{fig:rmses5structured} shows the RMSEs for the surrogate methods being compared.  
For the borehole and the wingweight functions, the RMSE decreases for all methods except the simplistic imputation methods as $N$ increases; whereas for the piston and the OTLcircuit functions, the RMSEs decrease for all methods. GP-OM shows to be generally not competitive in both its computation time and predictive accuracy.  The simplistic imputation methods, PCGP-kNN and PCGP-BR, fail to circumvent the issue of missing values, especially when the missingness is MNAR.  Since the simplistic imputation approach relies on the availability of close neighbors to the missing points, if the function values are never observed within a certain region, the missing values would be imputed with far-away values of little relevant information. As a result, they result in poor predictions. The EMGP method performs comparably with the simplistic imputation methods, sometimes better, in the case of the wingweight function.  The colGP method performs well across all methods in predictive accuracy, especially in the piston and OTLcircuit test functions.  The continuous improvement with colGP as $N$ increases shows a potential drawback in using principal-component methods for dimension reduction.
Similar conclusions are drawn for (MNAR, 1\%) and (MNAR, 25\%).  We find that in the case of MCAR and MAR, the accuracy of all methods improves as $N$ grows, where PCGPwM performs better than all methods except for colGP.

\begin{figure}[t]
    \centering
    \includegraphics[width=0.95\textwidth]{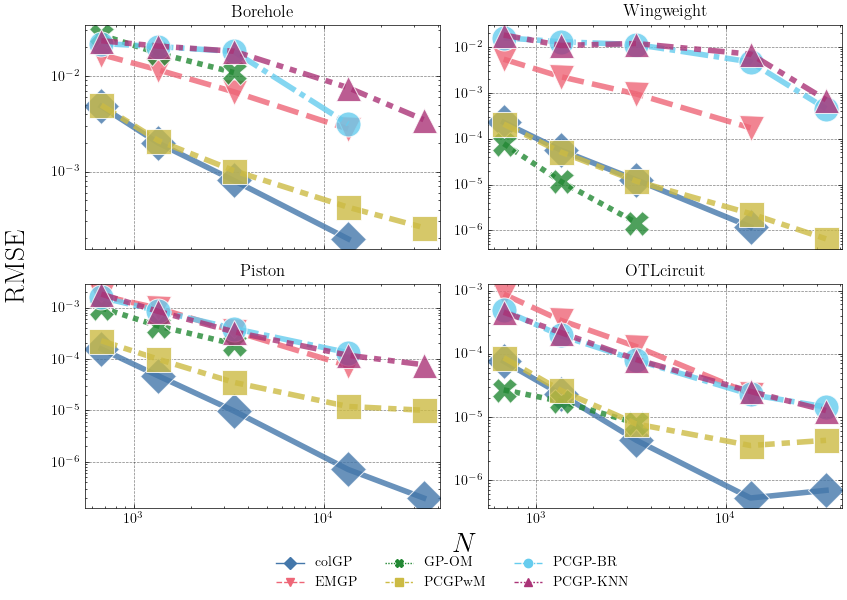}
    \caption{Comparison (log-log scale) of prediction accuracy of surrogate methods for (MNAR, 5\%) scenario.}
    \label{fig:rmses5structured}
\end{figure}

The 90\% coverage records how often the simulation response is contained in the interval produced by the surrogate. 
The quality of the surrogate is measured by comparing the 90\% empirical coverage with the nominal level (in this case, 90\%).  A coverage close to the nominal level with a narrow width is an indication of a good surrogate.
GP-OM results in overcoverage in the borehole function and significantly undercoverage in the remaining functions.  By investigating the interval widths, we observe that GP-OM produces a prediction interval too wide in the borehole and too narrow in the others.  The EMGP method attains adequate coverage in the piston and wingweight functions, while undercovers in the borehole and wingweight function.  As the data size grows, the coverage for the wingweight function increases and attains the prescribed level.
The simplistic imputation methods, PCGP-kNN and PCGP-BR, exhibit overcoverages in the borehole and wingweight functions, as a result of wide intervals. 
PCGPwM achieves the prescribed coverages while being able to provide sharper predictions, indicated by narrower intervals.
In the MCAR and MAR scenarios, both the coverage and the prediction interval widths improve as $N$ grows for all methods.

Overall, the proposed PCGPwM method performs generally well in terms of RMSEs and is robust to different types of data missingness.  The method also preserves the efficiency of surrogate construction found in dimension-reduction methods such as PCGP \citep{higdon2008computer}.

\subsection{Additional comparisons of PCGPwM and colGP at higher output dimension} \label{sec:colgphighm}
The predictive performance of colGP is laudible but it may be slow to construct at higher output dimensions. 
To further compare PCGPwM and colGP in terms of computational cost, we have conducted an additional experiment.  We have increased the output dimension to $m = 200$ and focused on the performance at larger $n={1000}, $ and ${2500}$ with the borehole function.  A maximum of 4 hours is permitted for each surrogate construction.  Table \ref{tab:compare_largem} reports the respective construction time, RMSE, 90\% coverage, and 90\% width.  PCGPwM achieves a higher RMSE than colGP, but maintains the right coverage with a narrower width.  At $n=2500$, the construction time of colGP exceeds the allowed time limit with higher dimensions and the computation is aborted.  According to the time scaling, the construction time would have taken colGP an estimated 18 hours. 

\begin{table}[h]
    \centering
    \begin{tabular}{c|cc|cc|cc|cc}
        \hline
       & \multicolumn{2}{c|}{Construction time (s)}   & \multicolumn{2}{c|}{RMSE ($\times 10^{-4}$)} & \multicolumn{2}{c|}{90\% coverage} & \multicolumn{2}{c}{90\% width}   \\
       \hline
       $n$  & PCGPwM & colGP & PCGPwM & colGP & PCGPwM & colGP & PCGPwM & colGP \\
       \hline
       \num{1000} & $\num{640}$ & $\num{4225}$ & \num{3.9} & \num{2.5} & 0.912 & 0.962 & 0.764 & 0.920 \\
       \num{2500} & $\num{9680}$ & -- & $\num{0.62}$ & -- & 0.981 & -- & 0.571 & --\\
       \hline
    \end{tabular}
    \caption{Construction times (in seconds) and predictive accuracies of PCGPwM and colGP at output dimension $m=200$.}
    \label{tab:compare_largem}
\end{table}

\section{Case study: Calibrating the Fayans energy density functional} \label{sec:application_fayans}
The section describes the case study mentioned in Section \ref{sec:fayans} that relies on the described surrogate method to conduct calibration.
We review a surrogate-based calibration framework in Section~\ref{sec:case_calib}.  
We present the results of applying the proposed method to the case of calibrating the Fayans EDF in Section~\ref{sec:case_apply}. 

\subsection{Bayesian calibration with a surrogate} \label{sec:case_calib}
Suppose $\bm y = (y_1, \ldots, y_m)^\T$ is a set of observations from the physical system that the simulation $\bm f(\bm \theta)$ is representing.  The differences between the observations and the simulation responses are assumed to follow a multivariate normal distribution with zero mean and covariance $\bm W$.  
This follows the canonical \citet{kennedy2001bayesian} framework with the assumption that a systematic bias is negligible.  The \citet{kennedy2001bayesian} framework has generated much research interest; for examples, see \citet{higdon2004combining,williams2006combining, bayarri2007computer, higdon2008computer,brynjarsdottir2014learning}.

Let $\pi$ denote a probability density and $\pi(\bm \theta|\bm y)$ be the conditional probability density of $\bm\theta$ given $\bm y$.  The purpose for calibration is to infer about the parameter $\bm\theta$.  In the Bayesian setting, we are interested in the quantity $\pi(\bm \theta|\bm y)$, which is the posterior density of $\bm\theta$.  By Bayes rule, given a prior density $\pi(\bm \theta)$, we have
$\pi(\bm\theta | \bm y) \propto \pi(\bm y | \bm \theta)\pi(\bm\theta),$
where $\propto$ denotes equality up to a constant multiplier.  Given the distribution of the differences, the expression is expanded to be
\begin{equation*}
    \pi(\bm\theta|\bm y) \propto \lvert \bm W \rvert^{-\frac{1}{2}} \exp\left(-\frac{1}{2}(\bm y - \bm f(\bm \theta))^\T \bm W^{-1}(\bm y - \bm f(\bm \theta))\right)\pi(\bm \theta),
\end{equation*}
where $\bm W $ is the covariance matrix of the observation error.  Often in a physical experiment, this covariance matrix is diagonal, as in the Fayans EDF model.

Since missing data may be present, $\bm f(\bm \theta)$ refers to the hypothetical output responses at a given parameter.  When a surrogate, defined by $\bm\mu(\bm\theta)$ and $\bm\Sigma(\bm\theta)$, constructed with data $\bm F$ is used in place of the simulation model, which is a common strategy in the calibration literature, the posterior density is revised as 
\begin{equation}
    \pi(\bm \theta|\bm y, \bm \obsdata) \propto \lvert \bm W + \bm \Sigma(\bm \theta) \rvert^{-\frac{1}{2}} \exp\left(-\frac{1}{2}(\bm y - \bm \mu(\bm \theta))^\T \left(\bm W + \bm \Sigma(\bm \theta)\right)^{-1}(\bm y - \bm \mu(\bm \theta))\right) \pi(\bm \theta), \label{eq:surrogatepost}
\end{equation}  
by considering the joint distribution of $\bm y$ and $\bm F$ being again multivariate normal.
For a fixed simulation model sample $\bm F$ (with missing data), this expression can be used to draw from the posterior using MCMC methods.  

\subsection{Surrogate construction and calibration for the Fayans EDF} \label{sec:case_apply}
In this section we apply the previously described surrogate methods for the calibration of the Fayans EDF model to provide uncertainty estimates for the parameter, using the knowledge of point minimizers obtained from the previous study.  
The first method is the proposed PCGPwM method.  The second uses the simplistic imputation approach with PCGP-kNN.  The third uses the colGP method, and the last method is simply to only use data rows with complete data.
GP-OM discussed in the preceding section is not usable in this setting due to the data size causing it to be computationally infeasible: we estimate that the surrogate construction alone would take more than $30$ days, before any calibration can be performed.  The EMGP method is not applicable either, due to its computational instability.  In the original paper by \citet{hung2015analysis}, an isotropic correlation function is used for the categorical variable, which has two levels.  In the Fayans EDF model, the categorical variable has nine levels.  To adopt the EMGP method, the correlation is chosen to be 1 if the categorical variable is the same, and 0 if not.  However, the hyperparameter estimation for the EMGP consistently fails to converge for this application.

We construct the surrogates using simulation outputs of $500$ well-spaced parameters.  All parameters reside close to a local minimizer of the $\chi^2$ loss from \citet{bollapragada2021optimization}.  
The parameter space is previously scaled to a unit hypercube such that the dimensions are comparable.  To recover the unscaled parameter the centroid of the unscaled hypercube and the scale for each dimension are provided in Table~\ref{tbl:fayans_param_space} in the Supplementary Material  \citep{chan2021supplement}, which reproduces Table~5 of \citet{bollapragada2021optimization}.  

The simulation outputs are partially observed, where approximately $10\%$ of the 99,000 responses are missing.  
Figure~\ref{fig:fayans_missing}, as seen in the introduction, shows the missing value pattern in the sampled outputs arranged by increasing number of failures in parameters.  Only $141$ rows have complete data.  

To calibrate the Fayans EDF, the constructed surrogates are then supplied to the calibrator module in \texttt{surmise}.  The chosen prior is the $\mathrm{Beta}(2,2)$ distribution over each dimension of the scaled parameter, representing the stability region studied in \citet{bollapragada2021optimization}.
The choice $\mathrm{Beta}(2, 2)$ reflects the understanding that the parameters closer to the centroid of the scaled hypercube are more plausible.  The prior density outside the scaled hypercube is zero, which reflects the boundary of stability region defined by the lower and upper bounds.  
The prior is then
\[\pi(\bm\theta) 
\propto \prod_{l=1}^d \theta_{l} \left(1 - \theta_{l}\right), \bm\theta \in [0, 1]^d, \]
where $\theta_l$ is the $l$th element of $\bm \theta$.
With $\pi(\bm\theta)$ specified, the posterior is then given by \eqref{eq:surrogatepost}.
Samples are drawn from the posterior by the Langevin Monte Carlo method \citep{roberts2002langevin},  an MCMC method that utilizes gradient information at the current iterate.  In addition, the sampling method is strengthened by incorporating parallel tempering \citep{geyer1991markov, gelman2013bayesian}.  The method is implemented in the utility module of \texttt{surmise} under the name \texttt{PTLMC}.   We note that the closed-form nature of our surrogate allows for easy deployment of gradient-based approaches.

To investigate the utility of the surrogate-based inference using these methods, we compare the posterior distributions.  Since the considered parameter space is scaled around the centroid of the hypercube, we expect the posterior means to be close.  This is verified in our computation that the posterior means estimated using all surrogate methods considered are found to be close to each other.  We are then concerned with the precision of the posterior, the idea being that a more precise surrogate in this setting should lead to a narrower posterior on the parameters.
This can be seen in expression \eqref{eq:surrogatepost}, where a decrease to the surrogate covariance matrix $\bm\Sigma(\bm\theta)$ yields a more concentrated posterior, given a fixed $\bm\mu(\bm\theta)$.  The precision of the surrogate is measured by the widths of the intervals between the $5\%$ and $95\%$ quantiles (called the $90\%$ width) for each parameter relative to the upper and lower bounds.
Table~\ref{tbl:posttheta_ci} contains the $90\%$ widths relatively scaled from all surrogates.
The credible intervals from all the surrogates shrink compared with the $\mathrm{Beta}(2,2)$ 90\% relative width, $0.730$.  
PCGPwM results in the smallest intervals all but two of the model parameters. 
This constraining of plausible parameter region is attributed to the improved surrogate offering more precise predictions of  simulation responses.   
The benefit for this case study is that the analysis using the proposed surrogate method provides the uncertainty estimates for the parameter of the Fayans EDF model in contrast to previous studies.  
We find that there are considerable differences in the resulting interval widths, with some parameters (e.g.,  $f^\xi_\mathrm{ex}$ and $h^\xi_+$) being estimated more precisely,  and some parameters (e.g., $E/A, K,$ and $L$) having smaller precision improvements.
To more concretely understand the benefit, if we only use the complete output data to build our surrogate, the resulting posterior intervals would have been $6\%$--$58\%$ wider.

\begin{table}[ht]
    \centering
    \caption{Posterior 90\% widths relative to their respective ranges for the 13-dimensional parameter using different emulation techniques.}
    \label{tbl:posttheta_ci}
    \begin{tabular}{c|ccccc}
    \hline
    EDF Parameters              & PCGPwM & PCGP-kNN & colGP & Complete Data & Prior, $\mathrm{Beta}(2,2)$\\ \hline 
    $\rho_{\mathrm{eq}}$    & \textbf{0.355} & 0.500 & 0.546 & 0.491 & 0.730 \\ 
    $E/A$                   & \textbf{0.345} & 0.463 & 0.389 & 0.607 & 0.730 \\ 
    $K$                     & 0.455 & 0.493 & \textbf{0.361} & 0.643 & 0.730 \\
    $J$                     & \textbf{0.303} & 0.557 & 0.535 & 0.497 & 0.730 \\
    $L$                     & \textbf{0.437} & 0.484 & 0.393 & 0.576 & 0.730 \\
    $h^{\mathrm{v}}_{2-}$   & \textbf{0.370} & 0.462 & 0.401 & 0.587 & 0.730 \\
    $a^{\mathrm{s}}_{+}$    & \textbf{0.421} & 0.569 & 0.654 & 0.450 & 0.730 \\
    $h^{\mathrm{s}}_\nabla$ & \textbf{0.337} & 0.405 & 0.351 & 0.637 & 0.730 \\
    $\kappa$                & \textbf{0.339} & 0.479 & 0.461 & 0.614 & 0.730 \\
    $\kappa^\prime$         & \textbf{0.198} & 0.319 & 0.273 & 0.421 & 0.730 \\
    $f^\xi_\mathrm{ex}$     & \textbf{0.125} & 0.345 & 0.246 & 0.300 & 0.730 \\
    $h^\xi_\nabla$          & 0.386 & 0.530 & \textbf{0.347} & 0.536 & 0.730 \\
    $h^\xi_+$               & \textbf{0.128} & 0.367 & 0.215 & 0.254 & 0.730 \\
    \hline
    \end{tabular}
\end{table}

\section{Conclusion} \label{sec:conclusion}
This article details a new surrogate construction method developed to handle missing data.  The construction relies on an imputation of the missing data and a covariance adjustment to account for the added uncertainty due to imputation.  
This method is efficient and adds minimal burden on computations.  The surrogate construction is effective in ignoring entirely missing data in a multivariate output context.  Furthermore, it retains the efficiency of modern approaches to building surrogates of high-dimensional output data.

It is expected that as nascent models are used in large computing environments, partially observed output data will become more prevalent.
For example, \citet{lin2021uncertainty} conduct their inference by constructing a missingness classifier in addition to a surrogate using available simulation data.
When missingness is important to inference, the method proposed in this article could be used in to help with the surrogate to prevent tractability issues.  

\subsection*{Supplementary Materials}
The supplementary materials contain (i) proofs for Theorems \ref{thm:recovery} and \ref{thm:ignorefullmissingrows}, (ii) technical details for computation, (iii) descriptions of test functions, (iv) full results from the numerical experiments, (v) the original scaling of the Fayans EDF parameter space, and (vi) the code and data for the Fayans EDF case study.

\if0\blind
{
\subsection*{Acknowledgments}
We thank the editor, AE, two anonymous referees, and Earl Lawrence and Kelly Moran for their valuable feedback for improving this article's exposition.
We are grateful to Jared O'Neal and Paul-Gerhard Reinhard for developing the Fayans EDF model employed here. We gratefully acknowledge the computing resources provided on Bebop, a high-performance computing cluster operated by the Laboratory Computing Resource Center at Argonne National Laboratory.
This research was supported in part through the computational resources and staff contributions provided for the Quest high-performance computing facility at Northwestern University, which is jointly supported by the Office of the Provost, the Office for Research, and Northwestern University Information Technology.
} \fi

\subsection*{Disclosure statement}
The authors report there are no competing interests to declare.

\bibliographystyle{asa}
\bibliography{missingsurrogate}

\begin{thebibliography}{62}
\newcommand{\enquote}[1]{``#1''}
\expandafter\ifx\csname natexlab\endcsname\relax\def\natexlab#1{#1}\fi

\bibitem[{Altman(1992)}]{altman1992introduction}
Altman, N.~S. (1992), \enquote{An introduction to kernel and nearest-neighbor
  nonparametric regression,} \textit{The American Statistician}, 46, 175--185.

\bibitem[{Baker et~al.(2022)Baker, Barbillon, Fadikar, Gramacy, Herbei, Higdon,
  Huang, Johnson, Ma, Mondal, et~al.}]{baker2022analyzing}
Baker, E., Barbillon, P., Fadikar, A., Gramacy, R.~B., Herbei, R., Higdon, D.,
  Huang, J., Johnson, L.~R., Ma, P., Mondal, A., et~al. (2022),
  \enquote{Analyzing stochastic computer models: a review with opportunities,}
  \textit{Statistical Science}, 37, 64--89.

\bibitem[{Bayarri et~al.(2009)Bayarri, Berger, Kennedy, Kottas, Paulo, Sacks,
  Cafeo, Lin, and Tu}]{bayarri2009predicting}
Bayarri, M.~J., Berger, J.~O., Kennedy, M.~C., Kottas, A., Paulo, R., Sacks,
  J., Cafeo, J.~A., Lin, C.-H., and Tu, J. (2009), \enquote{Predicting vehicle
  crashworthiness: Validation of computer models for functional and
  hierarchical data,} \textit{Journal of the American Statistical Association},
  104, 929--943.

\bibitem[{Bayarri et~al.(2007)Bayarri, Walsh, Berger, Cafeo, Garcia-Donato,
  Liu, Palomo, Parthasarathy, Paulo, and Sacks}]{bayarri2007computer}
Bayarri, M.~J., Walsh, D., Berger, J.~O., Cafeo, J., Garcia-Donato, G., Liu,
  F., Palomo, J., Parthasarathy, R.~J., Paulo, R., and Sacks, J. (2007),
  \enquote{{Computer model validation with functional output},} \textit{The
  Annals of Statistics}, 35, 1874 -- 1906.

\bibitem[{Ben-Ari and Steinberg(2007)}]{ben2007modeling}
Ben-Ari, E.~N. and Steinberg, D.~M. (2007), \enquote{Modeling data from
  computer experiments: an empirical comparison of kriging with MARS and
  projection pursuit regression,} \textit{Quality Engineering}, 19, 327--338.

\bibitem[{Bollapragada et~al.(2021)Bollapragada, Menickelly, Nazarewicz,
  O’Neal, Reinhard, and Wild}]{bollapragada2021optimization}
Bollapragada, R., Menickelly, M., Nazarewicz, W., O’Neal, J., Reinhard,
  P.-G., and Wild, S.~M. (2021), \enquote{Optimization and supervised machine
  learning methods for fitting numerical physics models without derivatives,}
  \textit{Journal of Physics G: Nuclear and Particle Physics}, 48, 024001.

\bibitem[{Brynjarsd{\'o}ttir and O'Hagan(2014)}]{brynjarsdottir2014learning}
Brynjarsd{\'o}ttir, J. and O'Hagan, A. (2014), \enquote{Learning about physical
  parameters: The importance of model discrepancy,} \textit{Inverse Problems},
  30, 114007.

\bibitem[{Chan et~al.(2023)Chan, Plumlee, and Wild}]{chan2021supplement}
Chan, M.~Y., Plumlee, M., and Wild, S.~M. (2023), \enquote{Supplementary
  material to `Constructing a simulation surrogate with partially-observed
  output',} .

\bibitem[{Chang et~al.(2014)Chang, Haran, Olson, and Keller}]{chang2014fast}
Chang, W., Haran, M., Olson, R., and Keller, K. (2014), \enquote{Fast
  dimension-reduced climate model calibration and the effect of data
  aggregation,} \textit{The Annals of Applied Statistics}, 8, 649--673.

\bibitem[{Chevalier and Ginsbourger(2013)}]{Chevalier2013}
Chevalier, C. and Ginsbourger, D. (2013), \enquote{Fast computation of the
  multi-points expected improvement with applications in batch selection,} in
  \textit{Lecture Notes in Computer Science}, Heidelberg: Springer Berlin, pp.
  59--69.

\bibitem[{Conti and O’Hagan(2010)}]{conti2010bayesian}
Conti, S. and O’Hagan, A. (2010), \enquote{Bayesian emulation of complex
  multi-output and dynamic computer models,} \textit{Journal of Statistical
  Planning and Inference}, 140, 640--651.

\bibitem[{Dobaczewski et~al.(2014)Dobaczewski, Nazarewicz, and
  Reinhard}]{dobaczewski2014error}
Dobaczewski, J., Nazarewicz, W., and Reinhard, P.-G. (2014), \enquote{Error
  estimates of theoretical models: {A} guide,} \textit{Journal of Physics G:
  Nuclear and Particle Physics}, 41, 074001.

\bibitem[{Dobaczewski and Olbratowski(2005)}]{dobaczewski2005solution}
Dobaczewski, J. and Olbratowski, P. (2005), \enquote{Solution of the
  {Skyrme-Hartree-Fock-Bogolyubov} equations in the {Cartesian} deformed
  harmonic-oscillator basis. (V) {HFODD} (v2.08k),} \textit{Computer Physics
  Communications}, 167, 214--216.

\bibitem[{Fayans(1998)}]{fayans1998towards}
Fayans, S. (1998), \enquote{Towards a universal nuclear density functional,}
  \textit{Journal of Experimental and Theoretical Physics Letters}, 68,
  169--174.

\bibitem[{Fayans et~al.(2000)Fayans, Tolokonnikov, Trykov, and
  Zawischa}]{fayans2000nuclear}
Fayans, S., Tolokonnikov, S., Trykov, E., and Zawischa, D. (2000),
  \enquote{Nuclear isotope shifts within the local energy-density functional
  approach,} \textit{Nuclear Physics A}, 676, 49--119.

\bibitem[{Forrester et~al.(2008)Forrester, Sobester, and
  Keane}]{forrester2008engineering}
Forrester, A., Sobester, A., and Keane, A. (2008), \textit{Engineering Design
  via Surrogate Modelling: A Practical Guide}, Chichester: John Wiley \& Sons.

\bibitem[{Gelman et~al.(2013)Gelman, Carlin, Stern, Dunson, Vehtari, and
  Rubin}]{gelman2013bayesian}
Gelman, A., Carlin, J.~B., Stern, H.~S., Dunson, D.~B., Vehtari, A., and Rubin,
  D.~B. (2013), \textit{Bayesian Data Analysis}, Boca Raton: CRC Press.

\bibitem[{Gentle(2007)}]{gentle2007matrix}
Gentle, J.~E. (2007), \textit{Matrix Algebra}, New York: Springer New York.

\bibitem[{Geyer(1991)}]{geyer1991markov}
Geyer, C.~J. (1991), \enquote{Markov chain Monte Carlo maximum likelihood,} in
  \textit{Computing Science and Statistics, Proceedings of the 23rd Symposium
  on the Interface}, pp. 156--163.

\bibitem[{{GPy}(since 2012)}]{gpy2014}
{GPy} (since 2012), \enquote{{GPy}: A Gaussian process framework in python,}
  \url{http://github.com/SheffieldML/GPy}.

\bibitem[{Gramacy(2020)}]{gramacy2020surrogates}
Gramacy, R.~B. (2020), \textit{Surrogates: Gaussian Process Modeling, Design,
  and Optimization for the Applied Sciences}, New York: CRC Press.

\bibitem[{Gramacy et~al.(2015)Gramacy, Bingham, Holloway, Grosskopf, Kuranz,
  Rutter, Trantham, and Drake}]{gramacy2015calibrating}
Gramacy, R.~B., Bingham, D., Holloway, J.~P., Grosskopf, M.~J., Kuranz, C.~C.,
  Rutter, E., Trantham, M., and Drake, R.~P. (2015), \enquote{Calibrating a
  large computer experiment simulating radiative shock hydrodynamics,}
  \textit{The Annals of Applied Statistics}, 9, 1141--1168.

\bibitem[{Gu and Berger(2016)}]{gu2016parallel}
Gu, M. and Berger, J.~O. (2016), \enquote{Parallel partial Gaussian process
  emulation for computer models with massive output,} \textit{The Annals of
  Applied Statistics}, 10, 1317--1347.

\bibitem[{Gu and Xu(2020)}]{gu2020fast}
Gu, M. and Xu, Y. (2020), \enquote{Fast nonseparable Gaussian stochastic
  process with application to methylation level interpolation,} \textit{Journal
  of Computational and Graphical Statistics}, 29, 250--260.

\bibitem[{Guillas et~al.(2018)Guillas, Sarri, Day, Liu, and
  Dias}]{guillas2018functional}
Guillas, S., Sarri, A., Day, S., Liu, X., and Dias, F. (2018),
  \enquote{Functional emulation of high resolution tsunami modelling over
  Cascadia,} \textit{Annals of Applied Statistics}, 12, 2023--2053.

\bibitem[{Handcock and Stein(1993)}]{handcock1993bayesian}
Handcock, M.~S. and Stein, M.~L. (1993), \enquote{A {Bayesian} analysis of
  kriging,} \textit{Technometrics}, 35, 403--410.

\bibitem[{Higdon et~al.(2008)Higdon, Gattiker, Williams, and
  Rightley}]{higdon2008computer}
Higdon, D., Gattiker, J., Williams, B., and Rightley, M. (2008),
  \enquote{Computer model calibration using high-dimensional output,}
  \textit{Journal of the American Statistical Association}, 103, 570--583.

\bibitem[{Higdon et~al.(2004)Higdon, Kennedy, Cavendish, Cafeo, and
  Ryne}]{higdon2004combining}
Higdon, D., Kennedy, M., Cavendish, J.~C., Cafeo, J.~A., and Ryne, R.~D.
  (2004), \enquote{Combining field data and computer simulations for
  calibration and prediction,} \textit{SIAM Journal on Scientific Computing},
  26, 448--466.

\bibitem[{Higdon et~al.(2015)Higdon, McDonnell, Schunck, Sarich, and
  Wild}]{higdon2015bayesian}
Higdon, D., McDonnell, J.~D., Schunck, N., Sarich, J., and Wild, S.~M. (2015),
  \enquote{A Bayesian approach for parameter estimation and prediction using a
  computationally intensive model,} \textit{Journal of Physics G: Nuclear and
  Particle Physics}, 42, 034009.

\bibitem[{Huang et~al.(2020)Huang, Gramacy, Binois, and
  Libraschi}]{huang2020site}
Huang, J., Gramacy, R.~B., Binois, M., and Libraschi, M. (2020),
  \enquote{On-site surrogates for large-scale calibration,} \textit{Applied
  Stochastic Models in Business and Industry}, 36, 283--304.

\bibitem[{Hung et~al.(2015)Hung, Joseph, and Melkote}]{hung2015analysis}
Hung, Y., Joseph, V.~R., and Melkote, S.~N. (2015), \enquote{Analysis of
  computer experiments with functional response,} \textit{Technometrics}, 57,
  35--44.

\bibitem[{Joseph et~al.(2015)Joseph, Gul, and Ba}]{joseph2015maximum}
Joseph, V.~R., Gul, E., and Ba, S. (2015), \enquote{Maximum projection designs
  for computer experiments,} \textit{Biometrika}, 102, 371--380.

\bibitem[{Kennedy and O'Hagan(2001)}]{kennedy2001bayesian}
Kennedy, M.~C. and O'Hagan, A. (2001), \enquote{Bayesian calibration of
  computer models,} \textit{Journal of the Royal Statistical Society: Series B
  (Statistical Methodology)}, 63, 425--464.

\bibitem[{Kortelainen et~al.(2010)Kortelainen, Lesinski, Mor{\'e}, Nazarewicz,
  Sarich, Schunck, Stoitsov, and Wild}]{kortelainen2010nuedf0}
Kortelainen, M., Lesinski, T., Mor{\'e}, J.~J., Nazarewicz, W., Sarich, J.,
  Schunck, N., Stoitsov, M.~V., and Wild, S.~M. (2010), \enquote{Nuclear Energy
  Density Optimization,} \textit{Physical Review C}, 82, 024313.

\bibitem[{Kortelainen et~al.(2014)Kortelainen, McDonnell, Nazarewicz, Olsen,
  Reinhard, Sarich, Schunck, Wild, Davesne, Erler, and
  Pastore}]{kortelainen2014nuedf2}
Kortelainen, M., McDonnell, J.~D., Nazarewicz, W., Olsen, E., Reinhard, P.-G.,
  Sarich, J., Schunck, N., Wild, S.~M., Davesne, D., Erler, J., and Pastore, A.
  (2014), \enquote{Nuclear Energy Density Optimization: Shell Structure,}
  \textit{Physical Review C}, 89, 054314.

\bibitem[{Kortelainen et~al.(2012)Kortelainen, McDonnell, Nazarewicz, Reinhard,
  Sarich, Schunck, Stoitsov, and Wild}]{kortelainen2012nuedf1}
Kortelainen, M., McDonnell, J.~D., Nazarewicz, W., Reinhard, P.-G., Sarich, J.,
  Schunck, N., Stoitsov, M.~V., and Wild, S.~M. (2012), \enquote{Nuclear energy
  density optimization: large deformations,} \textit{Physical Review C}, 85,
  024304.

\bibitem[{Kyprioti et~al.(2021)Kyprioti, Taflanidis, Plumlee, Asher, Spiller,
  Luettich, Blanton, Kijewski-Correa, Kennedy, and
  Schmied}]{kyprioti2021improvements}
Kyprioti, A.~P., Taflanidis, A.~A., Plumlee, M., Asher, T.~G., Spiller, E.,
  Luettich, R.~A., Blanton, B., Kijewski-Correa, T.~L., Kennedy, A., and
  Schmied, L. (2021), \enquote{Improvements in storm surge surrogate modeling
  for synthetic storm parameterization, node condition classification and
  implementation to small size databases,} \textit{Natural Hazards}, 109,
  1349--1386.

\bibitem[{Lawrence et~al.(2017)Lawrence, Heitmann, Kwan, Upadhye, Bingham,
  Habib, Higdon, Pope, Finkel, and Frontiere}]{lawrence2017mira}
Lawrence, E., Heitmann, K., Kwan, J., Upadhye, A., Bingham, D., Habib, S.,
  Higdon, D., Pope, A., Finkel, H., and Frontiere, N. (2017), \enquote{The
  mira-titan universe. II. Matter power spectrum emulation,} \textit{The
  Astrophysical Journal}, 847, 50.

\bibitem[{Lin et~al.(2021)Lin, Bingham, Broekgaarden, and
  Mandel}]{lin2021uncertainty}
Lin, L., Bingham, D., Broekgaarden, F., and Mandel, I. (2021),
  \enquote{Uncertainty quantification of a computer model for binary black hole
  formation,} \textit{The Annals of Applied Statistics}, 15, 1604--1627.

\bibitem[{Ma et~al.(2022)Ma, Mondal, Konomi, Hobbs, Song, and
  Kang}]{ma2022computer}
Ma, P., Mondal, A., Konomi, B.~A., Hobbs, J., Song, J.~J., and Kang, E.~L.
  (2022), \enquote{Computer model emulation with high-dimensional functional
  output in large-scale observing system uncertainty experiments,}
  \textit{Technometrics}, 64, 65--79.

\bibitem[{Mak and Joseph(2018)}]{mak2018support}
Mak, S. and Joseph, V.~R. (2018), \enquote{Support points,} \textit{The Annals
  of Statistics}, 46, 2562--2592.

\bibitem[{Marque-Pucheu et~al.(2020)Marque-Pucheu, Perrin, and
  Garnier}]{marque2020efficient}
Marque-Pucheu, S., Perrin, G., and Garnier, J. (2020), \enquote{An efficient
  dimension reduction for the Gaussian process emulation of two nested codes
  with functional outputs,} \textit{Computational Statistics}, 35, 1059--1099.

\bibitem[{McDonnell et~al.(2015)McDonnell, Schunck, Higdon, Sarich, Wild, and
  Nazarewicz}]{mcdonnell2015uncertainty}
McDonnell, J., Schunck, N., Higdon, D., Sarich, J., Wild, S., and Nazarewicz,
  W. (2015), \enquote{Uncertainty quantification for nuclear density functional
  theory and information content of new measurements,} \textit{Physical Review
  Letters}, 114, 122501.

\bibitem[{McKay et~al.(1979)McKay, Beckman, and Conover}]{mckay1979comparison}
McKay, M., Beckman, R., and Conover, W. (1979), \enquote{A Comparison of Three
  Methods for Selecting Values of Input Variables in the Analysis of Output
  from a Computer Code,} \textit{Technometrics}, 239--245.

\bibitem[{Muzellec et~al.(2020)Muzellec, Josse, Boyer, and
  Cuturi}]{muzellec2020missing}
Muzellec, B., Josse, J., Boyer, C., and Cuturi, M. (2020), \enquote{Missing
  data imputation using optimal transport,} in \textit{Proceedings of the 37th
  International Conference on Machine Learning}, eds. III, H.~D. and Singh, A.,
  vol. 119 of \textit{Proceedings of Machine Learning Research}, pp.
  7130--7140.

\bibitem[{Pedregosa et~al.(2011)Pedregosa, Varoquaux, Gramfort, Michel,
  Thirion, Grisel, Blondel, Prettenhofer, Weiss, Dubourg, Vanderplas, Passos,
  Cournapeau, Brucher, Perrot, and Duchesnay}]{scikit-learn}
Pedregosa, F., Varoquaux, G., Gramfort, A., Michel, V., Thirion, B., Grisel,
  O., Blondel, M., Prettenhofer, P., Weiss, R., Dubourg, V., Vanderplas, J.,
  Passos, A., Cournapeau, D., Brucher, M., Perrot, M., and Duchesnay, E.
  (2011), \enquote{Scikit-learn: Machine learning in {P}ython,} \textit{Journal
  of Machine Learning Research}, 12, 2825--2830.

\bibitem[{Phillips et~al.(2021)Phillips, Furnstahl, Heinz, Maiti, Nazarewicz,
  Nunes, Plumlee, Pratola, Pratt, Viens, and Wild}]{phillips2021get}
Phillips, D.~R., Furnstahl, R.~J., Heinz, U., Maiti, T., Nazarewicz, W., Nunes,
  F.~M., Plumlee, M., Pratola, M.~T., Pratt, S., Viens, F.~G., and Wild, S.~M.
  (2021), \enquote{Get on the {BAND} Wagon: a Bayesian framework for
  quantifying model uncertainties in nuclear dynamics,} \textit{Journal of
  Physics {G}: Nuclear and Particle Physics}, 48, 072001.

\bibitem[{Plumlee(2017)}]{plumlee2017bayesian}
Plumlee, M. (2017), \enquote{{B}ayesian calibration of inexact computer
  models,} \textit{Journal of the American Statistical Association}, 112,
  1274--1285.

\bibitem[{Plumlee et~al.(2021{\natexlab{a}})Plumlee, Asher, Chang, and
  Bilskie}]{plumlee2021high}
Plumlee, M., Asher, T.~G., Chang, W., and Bilskie, M.~V. (2021{\natexlab{a}}),
  \enquote{High-fidelity hurricane surge forecasting using emulation and
  sequential experiments,} \textit{Annals of Applied Statistics}, 15, 460--480.

\bibitem[{Plumlee et~al.(2021{\natexlab{b}})Plumlee, S\"urer, and
  Wild}]{surmise2021}
Plumlee, M., S\"urer, O., and Wild, S.~M. (2021{\natexlab{b}}), \textit{Surmise
  Users Manual}.

\bibitem[{Reinhard and Nazarewicz(2017)}]{reinhard2017toward}
Reinhard, P.-G. and Nazarewicz, W. (2017), \enquote{Toward a global description
  of nuclear charge radii: Exploring the Fayans energy density functional,}
  \textit{Physical Review C}, 95, 064328.

\bibitem[{Roberts and Stramer(2002)}]{roberts2002langevin}
Roberts, G.~O. and Stramer, O. (2002), \enquote{Langevin diffusions and
  Metropolis-Hastings algorithms,} \textit{Methodology and Computing in Applied
  Probability}, 4, 337--357.

\bibitem[{Rougier(2008)}]{rougier2008efficient}
Rougier, J. (2008), \enquote{Efficient emulators for multivariate deterministic
  functions,} \textit{Journal of Computational and Graphical Statistics}, 17,
  827--843.

\bibitem[{Roweis(1997)}]{roweis1997em}
Roweis, S. (1997), \enquote{EM algorithms for PCA and SPCA,} in
  \textit{Proceedings of the 10th International Conference on Neural
  Information Processing Systems}, MIT Press, NeurIPS'97, pp. 626--–632.

\bibitem[{Salter et~al.(2019)Salter, Williamson, Scinocca, and
  Kharin}]{salter2019uncertainty}
Salter, J.~M., Williamson, D.~B., Scinocca, J., and Kharin, V. (2019),
  \enquote{Uncertainty quantification for computer models with spatial output
  using calibration-optimal bases,} \textit{Journal of the American Statistical
  Association}, 114, 1800--1814.

\bibitem[{Santner et~al.(2018)Santner, Williams, and Notz}]{santner2018design}
Santner, T.~J., Williams, B.~J., and Notz, W.~I. (2018), \textit{The Design and
  Analysis of Computer Experiments}, New York: Springer New York.

\bibitem[{Tipping(2001)}]{tipping2001sparse}
Tipping, M.~E. (2001), \enquote{Sparse Bayesian Learning and the Relevance
  Vector Machine,} \textit{Journal of Machine Learning Research}, 1, 211–244.

\bibitem[{Tipping and Bishop(1999)}]{tipping1999probabilistic}
Tipping, M.~E. and Bishop, C.~M. (1999), \enquote{Probabilistic principal
  component analysis,} \textit{Journal of the Royal Statistical Society: Series
  B (Statistical Methodology)}, 61, 611--622.

\bibitem[{Tuo and Wu(2015)}]{tuo2015efficient}
Tuo, R. and Wu, C. F.~J. (2015), \enquote{{Efficient calibration for imperfect
  computer models},} \textit{The Annals of Statistics}, 43, 2331 -- 2352.

\bibitem[{Venkatramanan et~al.(2021)Venkatramanan, Sadilek, Fadikar, Barrett,
  Biggerstaff, Chen, Dotiwalla, Eastham, Gipson, Higdon, Kucuktunc, Lieber,
  Lewis, Reynolds, Vullikanti, Wang, and Marathe}]{Venkatramanan2021}
Venkatramanan, S., Sadilek, A., Fadikar, A., Barrett, C.~L., Biggerstaff, M.,
  Chen, J., Dotiwalla, X., Eastham, P., Gipson, B., Higdon, D., Kucuktunc, O.,
  Lieber, A., Lewis, B.~L., Reynolds, Z., Vullikanti, A.~K., Wang, L., and
  Marathe, M. (2021), \enquote{Forecasting influenza activity using
  machine-learned mobility map,} \textit{Nature Communications}, 12.

\bibitem[{Williams et~al.(2006)Williams, Higdon, Gattiker, Moore, McKay,
  Keller-McNulty, et~al.}]{williams2006combining}
Williams, B., Higdon, D., Gattiker, J., Moore, L., McKay, M., Keller-McNulty,
  S., et~al. (2006), \enquote{Combining experimental data and computer
  simulations, with an application to flyer plate experiments,}
  \textit{Bayesian Analysis}, 1, 765--792.

\bibitem[{Yu and Bulgac(2003)}]{yu2003energy}
Yu, Y. and Bulgac, A. (2003), \enquote{Energy density functional approach to
  superfluid nuclei,} \textit{Physical Review Letters}, 90, 222501.

\end{thebibliography}

\appendix

\newpage
\spacingset{1.5}

{\LARGE \bf Supplementary materials to ``\ourtitle''}

\if0\blind
{
    \begin{center}
    Moses Y-H. Chan, Matthew Plumlee\\
    Department of Industrial Engineering and Management Sciences,\\ Northwestern University\\
    and \\
    Stefan M.~Wild \\
    Applied Mathematics and Computational Research Division, \\Lawrence Berkeley National Laboratory
    \\ NAISE, Northwestern University
    \end{center}
} \fi

This document includes the supplementary material to the main article ``Constructing a simulation surrogate with partially observed output''. 
The supplementary materials are organized in the following order:
    \begin{enumerate}[label=]
        \item Section A -- G: Proofs and technical details for computation.
        \item Section H: Test functions for the numerical experiments.
        \item Section I: Full numerical experiment results.
        \item Section J: Original scaling of the Fayans EDF parameter space.
        \item Section K: Code and data for the Fayans EDF case study.
    \end{enumerate}

\section{Proof of Theorem \ref{thm:recovery}} 
\begin{proof}
    First, we note that $\tilde{\bm\projobsdata}_{i,\cdot} = \bm\projobsdata_{i,\cdot}$ and $u_{ik} = 0$ for all $k$ because
    \[ \tilde{\bm\projobsdata}_{i,\cdot} = \bm\Phi^\mathsf{T}  \bm B_{\mathcal{J}(i),\cdot}^\T  \bm B_{\mathcal{J}(i),\mathcal{J}(i)}^{-1} \bm \obsdata_{i,\mathcal{J}(i)}^\mathsf{T} = \bm \Phi^\mathsf{T}  \bm B  \bm B^{-1} \bm \obsdata_{i,\cdot}^\mathsf{T}  = \bm \projobsdata_{i, \cdot}, \]
    and for all $k$,
    \[ u_{ik} = \bm \Phi_{\cdot, k}^\T \left(\bm B - \bm B \bm B^{-1} \bm B^\T \right) \bm \Phi_{\cdot, k} = 0. \]
    Then we observe that$\bm r_k(\bm \theta_i)$ is the $i$ row of $\bm R_k$, and thus $\bm r_k^\mathsf{T}(\bm \theta_i) \bm R_k^{-1} = \bm e_i$, where $\bm e_i$ is the $i$th vector of the identity matrix.  We conclude that
    $$\tilde{\mu}_k(\bm \theta_i) = \bm e_i^\mathsf{T} \tilde{\bm \projobsdata}_{\cdot,k} = \tilde{\bm \projobsdata}_{ik} = \bm \projobsdata_{ik} = \bm f(\bm \theta_i)^\T \bm \Phi_{\cdot,k},$$
    and
    $$\tilde{\sigma}_k^2(\bm \theta_i) = \lambda_k \left( \rho_k(\bm \theta_i,\bm \theta_i) - \bm e_i^\mathsf{T} \bm r_k(\bm \theta_i) \right) =\lambda_k \left( \rho_k(\bm \theta_i,\bm \theta_i) -\rho_k(\bm \theta_i,\bm \theta_i)\right) = 0.$$
\end{proof}

\section{Boundedness of $w_{ik}$}\label{appx:boundedness_wik}

\begin{lemma} \label{prop: w_ik_bound}
    For any $\mathcal{J}(i) \subseteq \{1,\ldots,m\}$,  $w_{ik}$ given in expression \eqref{eqn:gtildevar} is bounded between 0 and $1.$
\end{lemma}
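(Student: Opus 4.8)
The plan is to read $u_{ik}$ from \eqref{eqn:gtildevar} as a quadratic form in the single basis vector $\bm\Phi_{\cdot,k}$ evaluated against the conditional covariance matrix of $\bm f(\bm\theta_i)$, and then to squeeze that quadratic form from both sides using positive-semidefiniteness and the orthonormality $\bm\Phi^\T\bm\Phi=\bm I$. Concretely, write $u_{ik} = \bm\Phi_{\cdot,k}^\T \bm C_i \bm\Phi_{\cdot,k}$, where $\bm C_i = \bm B - \bm B_{\mathcal{J}(i),\cdot}^\T\bm B_{\mathcal{J}(i),\mathcal{J}(i)}^{-1}\bm B_{\mathcal{J}(i),\cdot}$ is exactly the imputation conditional covariance. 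Since the choice of $\varepsilon$ in \eqref{eq:Bcov} makes $\bm B$ positive definite, every principal submatrix $\bm B_{\mathcal{J}(i),\mathcal{J}(i)}$ is positive definite and hence invertible, so $\bm C_i$ is well defined.

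\emph{Lower bound.} I would first show $\bm C_i$ is positive semidefinite, which gives $u_{ik}\ge 0$ and hence $w_{ik}\ge 0$. Introduce the $|\mathcal{J}(i)|\times m$ selection matrix $\bm S$ that extracts the observed coordinates, so that $\bm B_{\mathcal{J}(i),\mathcal{J}(i)} = \bm S\bm B\bm S^\T$ and $\bm B_{\mathcal{J}(i),\cdot} = \bm S\bm B$. Conjugating by $\bm B^{-1/2}$ turns $\bm C_i$ into $\bm I - \bm A^\T(\bm A\bm A^\T)^{-1}\bm A$ with $\bm A = \bm S\bm B^{1/2}$; the subtracted matrix is the orthogonal projection onto the row space of $\bm A$, so $\bm B^{-1/2}\bm C_i\bm B^{-1/2}$ is itself a projection (eigenvalues in $\{0,1\}$) and therefore positive semidefinite, and congruence returns $\bm C_i$ positive semidefinite. (Equivalently, one may simply invoke that a conditional covariance in a jointly Gaussian model is always positive semidefinite.)

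\emph{Upper bound.} The subtracted term $\bm B_{\mathcal{J}(i),\cdot}^\T\bm B_{\mathcal{J}(i),\mathcal{J}(i)}^{-1}\bm B_{\mathcal{J}(i),\cdot}$ has the form $\bm M^\T (\text{positive definite})^{-1}\bm M$ and is therefore positive semidefinite, so $u_{ik}\le \bm\Phi_{\cdot,k}^\T\bm B\bm\Phi_{\cdot,k}$. Substituting $\bm B$ from \eqref{eq:Bcov} and using orthonormality of the columns of $\bm\Phi$, namely $\bm\Phi^\T\bm\Phi_{\cdot,k} = \bm e_k$ and $\lVert\bm\Phi_{\cdot,k}\rVert^2 = 1$, collapses this to $(\lambda_k-\varepsilon)+\varepsilon = \lambda_k$. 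Hence $u_{ik}\le\lambda_k$, and dividing by $\lambda_k$ yields $w_{ik} = u_{ik}/\lambda_k \le 1$.

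I expect the only delicate point to be the positive-semidefiniteness of the conditional covariance $\bm C_i$; the projection argument above settles it cleanly, but it must be stated with care because $\bm B$ is only full rank after the $\varepsilon$-extension, and the degenerate joint normal in Section~\ref{sec:pcgpwm_impute} would otherwise obscure which matrices are genuinely invertible. Once $\bm C_i\succeq 0$ is established, both inequalities follow immediately, and the exact value $\lambda_k$ in the upper bound is a one-line computation resting entirely on $\bm\Phi^\T\bm\Phi=\bm I$.
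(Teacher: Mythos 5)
Your proposal is correct, and its overall skeleton matches the paper's: read $u_{ik}$ as the quadratic form of the conditional covariance at $\bm\Phi_{\cdot,k}$, get the lower bound from positive semidefiniteness, and get the upper bound by discarding the subtracted term $\bm B_{\mathcal{J}(i),\cdot}^\T\bm B_{\mathcal{J}(i),\mathcal{J}(i)}^{-1}\bm B_{\mathcal{J}(i),\cdot}$ (which is of the form $\bm M^\T(\text{PD})^{-1}\bm M$, hence PSD) and then collapsing $\bm\Phi_{\cdot,k}^\T\bm B\bm\Phi_{\cdot,k}=\lambda_k$ via orthonormality --- your upper-bound computation is essentially identical to the paper's. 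Where you genuinely diverge is the lower bound. The paper proves positive semidefiniteness of the Schur complement by an elementary quadratic-form completion: it sets $\bm v=\bm\Phi_{\mathcal{I}k}$, $\bm u=-\bm B_{\mathcal{J}\mathcal{J}}^{-1}\bm B_{\mathcal{J}\mathcal{I}}\bm v$, and shows the Schur-complement quadratic form equals $\begin{pmatrix}\bm v & \bm u\end{pmatrix}\bm B\begin{pmatrix}\bm v \\ \bm u\end{pmatrix}\ge 0$, using only that $\bm B$ is positive definite --- no matrix square roots required. You instead conjugate by $\bm B^{-1/2}$ and identify $\bm B^{-1/2}\bm C_i\bm B^{-1/2}=\bm I-\bm A^\T(\bm A\bm A^\T)^{-1}\bm A$ as an orthogonal projection. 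Your route is slicker and reveals more structure: since $\bm 0\preceq\bm I-P\preceq\bm I$, congruence gives $\bm 0\preceq\bm C_i\preceq\bm B$ in one stroke, so the projection observation actually delivers \emph{both} bounds simultaneously (a unification you could have exploited rather than arguing the upper bound separately). The paper's argument buys elementarity --- it never leaves the realm of quadratic forms and needs no spectral factorization of $\bm B$. One small point in your favor: the paper asserts $\bm B$ is ``the sum of two positive-definite matrices,'' which is imprecise since $\bm\Phi(\bm\Lambda-\varepsilon\bm I)\bm\Phi^\T$ has rank $\kappa<m$ and is only PSD; your phrasing (the $\varepsilon$-extension makes $\bm B$ positive definite) avoids this slip.
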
 
    
    \begin{proof}
    For this proof we will shorten $\mathcal{J}(i)$ to $\mathcal{J}$ and drop the subscript $i$ from $w_{ik}$. 
      We have that
      \[\lambda_k w_k  = \bm \Phi_{k}^\mathsf{T} \left(\bm B - \bm B_{\mathcal{J}, \cdot}^\T \bm B_{\mathcal{J}\mathcal{J}}^{-1} \bm B_{\mathcal{J}, \cdot} \right) \bm \Phi_{k}.\]
    
    Without loss of generality, say that $\mathcal{I} = \{1,\ldots,t \}$ and $\mathcal{J} = \{t+1,\ldots,m\}$.  
The matrix $\bm B$ can then be written as
    \[\bm B = \begin{pmatrix} \bm B_{\mathcal{I}\mathcal{I}} & \bm B_{\mathcal{I}\mathcal{J}} \\ \bm B_{\mathcal{J}\mathcal{I}} & \bm B_{\mathcal{J}\mathcal{J}} \end{pmatrix}.\]
    Letting $\bm v = \bm \Phi_{\mathcal{I}k}$ and $\bm u = - \bm B_{\mathcal{J}\mathcal{J}}^{-1} \bm B_{\mathcal{J}\mathcal{I}} \bm v$, we have that 
    \[\bm v^\T \bm B_{\mathcal{I}\mathcal{J}} \bm B_{\mathcal{J}\mathcal{J}}^{-1} \bm B_{\mathcal{J}\mathcal{I}} \bm v  = -\bm u^\mathsf{T} \bm B_{\mathcal{J}\mathcal{J}} \bm u - 2 \bm u^\mathsf{T}\bm B_{\mathcal{J}\mathcal{I}} \bm v, \]
    and conclude that 
    \begin{align}
        \bm v^\mathsf{T} &\left(\bm B_{\mathcal{I}\mathcal{I}} - \bm B_{\mathcal{I} \mathcal{J}} \bm B_{\mathcal{J}\mathcal{J}}^{-1} \bm B_{\mathcal{I} \mathcal{J}}^\T \right) \bm v =  \bm v^\T  \bm B_{\mathcal{I}\mathcal{I}} \bm v  + \bm u^\mathsf{T} \bm B_{\mathcal{J}\mathcal{J}} \bm u + 2 \bm u^\mathsf{T}\bm B_{\mathcal{J}\mathcal{I}} \bm v  \nonumber \\
         =& \begin{pmatrix} \bm v & \bm u \end{pmatrix} \begin{pmatrix} \bm B_{\mathcal{I}\mathcal{I}} & \bm B_{\mathcal{I}\mathcal{J}} \\ \bm B_{\mathcal{J}\mathcal{I}} & \bm B_{\mathcal{J}\mathcal{J}} \end{pmatrix} \begin{pmatrix} \bm v \\ \bm u \end{pmatrix}. \label{eq:pd_part}
    \end{align}
    Note that $\bm B = \bm \Phi ( \bm \Lambda - \varepsilon \bm I) \bm \Phi^\mathsf{T} + \varepsilon \bm I$. 
    By construction, since each diagonal element of $\bm \Lambda$ is larger than  $\varepsilon$, $\bm B$ is the sum of two positive-definite matrices and is thus positive definite. Since the right-hand side in (\ref{eq:pd_part}) is larger than zero, we have that $w_{k} \geq 0$.  
    
    Now, since $\bm B_{\mathcal{J}\mathcal{J}}$ is positive definite,  $\bm B_{\mathcal{J}\mathcal{J}}^{-1}$ is positive definite and thus 
    \[(\bm B_{\mathcal{J}, \cdot} \bm \Phi_{k})^\mathsf{T} \bm B_{\mathcal{J}\mathcal{J}}^{-1} \left(\bm B_{\mathcal{J}, \cdot} \bm \Phi_{k}\right) \geq \bm 0.\]
    We conclude that 
    $$ \bm \Phi_k^\mathsf{T} \left(\bm B - \bm B_{\mathcal{J},\cdot}^\T \bm B_{\mathcal{J}\mathcal{J}}^{-1}\bm B_{\mathcal{J},\cdot} \right) \bm \Phi_k \leq \bm \Phi_k^\mathsf{T} \bm B \bm \Phi_k.$$
    Finally, we note that by the orthogonality of $\bm \Phi$ and letting $\bm e_k$ be the $k$th  vector of the identity matrix, we have that
    \[\bm \Phi_k^\mathsf{T} \bm B \bm \Phi_k=  \bm \Phi_k^\mathsf{T} (\bm \Phi (\bm \Lambda - \varepsilon \bm I) \bm \Phi^\mathsf{T} + \varepsilon \bm I) \bm \Phi_k = \bm e_k^\T (\bm \Lambda - \varepsilon \bm I) \bm e_k + \varepsilon = \lambda_k,\]\
    and thus  $w_{k} \leq 1$. 
    \end{proof}

\begin{section}{Inverse of matrices with an increasing diagonal}
\label{appx:lemmalimitinv}
\begin{lemma} \label{lemma:limitinv}
    If the norm of an $n$-by-$n$ matrix $\bm A$ is bounded (i.e., $\lVert \bm A\rVert < L$), then $\lim_{\eta\rightarrow \infty} (\bm A + \eta \bm I)^{-1} = \bm 0.$ 
    \end{lemma}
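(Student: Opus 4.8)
The plan is to isolate the dominant $\eta \bm I$ term and treat the bounded matrix $\bm A$ as a small perturbation once $\eta$ is large. The natural first step is to factor:
\[
(\bm A + \eta \bm I)^{-1} = \frac{1}{\eta}\left(\bm I + \frac{1}{\eta}\bm A\right)^{-1},
\]
which is valid whenever the inner inverse exists. So the task reduces to showing that the inner factor stays bounded in norm while the prefactor $1/\eta$ drives everything to zero.

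Next I would restrict attention to $\eta > L$. Since $\lVert \bm A \rVert < L$, for such $\eta$ we have $\lVert \eta^{-1}\bm A \rVert = \eta^{-1}\lVert \bm A \rVert < L/\eta < 1$. This strict bound is exactly what is needed to invoke the Neumann series: $\bm I + \eta^{-1}\bm A$ is invertible, and by submultiplicativity of the operator norm together with the geometric-series estimate,
\[
\left\lVert \left(\bm I + \tfrac{1}{\eta}\bm A\right)^{-1} \right\rVert \le \frac{1}{1 - \lVert \eta^{-1}\bm A \rVert} \le \frac{1}{1 - L/\eta} = \frac{\eta}{\eta - L}.
\]

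Combining the two displays gives the clean bound
\[
\left\lVert (\bm A + \eta \bm I)^{-1} \right\rVert \le \frac{1}{\eta}\cdot\frac{\eta}{\eta - L} = \frac{1}{\eta - L},
\]
and letting $\eta \to \infty$ forces the right-hand side, hence the norm of the inverse, to zero; since the matrices have fixed finite dimension, convergence in norm is equivalent to entrywise convergence, so $(\bm A + \eta \bm I)^{-1} \to \bm 0$. I do not anticipate a genuine obstacle here, as the argument is a standard perturbation-of-the-identity estimate; the only point requiring care is ensuring the Neumann series is legitimately applied (i.e., checking $\lVert \eta^{-1}\bm A\rVert < 1$ by taking $\eta > L$ before claiming invertibility), so that the factorization and the norm bound are both justified throughout the limit.
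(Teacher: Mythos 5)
Your proof is correct. Both you and the paper rest on the same factorization $(\bm A + \eta \bm I)^{-1} = \eta^{-1}(\bm I + \eta^{-1}\bm A)^{-1}$ and the Neumann series for $\eta$ large enough that $\lVert \eta^{-1}\bm A\rVert < 1$, but the executions differ in a meaningful way. The paper works with the partial sums $\bm T_r = \bm I + \sum_{j=1}^r (-1)^j \eta^{-j}\bm A^j$, establishes a uniform bound on $\lVert \bm T_r\rVert$, and then interchanges the two limits $r \to \infty$ and $\eta \to \infty$, justifying the swap by uniform convergence of the entries; this limit interchange is the most delicate step of their argument. You instead sum the series in closed form to get the standard perturbation bound $\lVert(\bm I + \eta^{-1}\bm A)^{-1}\rVert \le \eta/(\eta - L)$, which immediately yields the explicit, quantitative estimate $\lVert(\bm A + \eta\bm I)^{-1}\rVert \le 1/(\eta - L)$ and requires no interchange of limits at all. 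Your route is shorter, gives a convergence rate rather than bare convergence, and replaces the double-limit bookkeeping with a squeeze argument; the only point to keep explicit (which you essentially do, in fixed finite dimension all norms are equivalent) is that the hypothesis $\lVert\bm A\rVert < L$, stated for an unspecified norm, transfers to the operator norm used in the geometric-series estimate, and that norm convergence implies entrywise convergence.
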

    \begin{proof}
        For $\eta \geq L, \lVert \eta ^{-1}\bm A\rVert < 1$. Consequently, for the partial sum $\bm T_r = \bm I + \sum_{j=1}^r (-1)^j \eta^{-j} \bm A^j$, we have that  $\lim_{r\rightarrow\infty} \bm T_r = (\bm I + \eta^{-1} \bm A)^{-1}$ \citep[p.135]{gentle2007matrix}.  Consider the norm of the partial sum,
        \[\lVert \bm T_r \rVert = \left\lVert \bm I + \sum_{j=1}^r (-1)^j \eta^{-j} \bm A^j \right\rVert \leq \lVert \bm I \rVert + \sum_{j=1}^r \eta^{-j} \lVert \bm A \rVert^j \leq n  - 1+ \frac{1 - \eta^{-(r+1)}}{1-\eta^{-1}} \rightarrow n - 1+ \frac{1}{1-\eta^{-1}} , \] 
        as $r\rightarrow \infty.$ The sequence of matrices $\{\bm T_r\}_{r=1}^\infty$ uniformly converges over the entries,
        therefore,  \[\lim_{\eta\rightarrow \infty} (\bm A + \eta \bm I)^{-1} = \lim_{\eta\rightarrow \infty} \lim_{r\rightarrow \infty}  \eta^{-1} \bm T_r = \lim_{r\rightarrow \infty} \lim_{\eta\rightarrow \infty} \eta^{-1} \left(\bm I +\sum_{j=1}^r  (-1)^j \eta^{-j} \bm A^j\right) = \bm 0. \]
    \end{proof}
\end{section}

\section{Proof of Theorem \ref{thm:ignorefullmissingrows}} 
\newcommand{\Q}{\mathcal{Q}}
\renewcommand{\S}{\mathcal{S}}
\begin{proof}
     From Lemma \ref{prop: w_ik_bound} in Supplementary Material \ref{appx:boundedness_wik}, without loss of generality, partition $\{1, \ldots, n\}$ into $\mathcal{Q} = \{1, \ldots, t\}$ and $\mathcal{S} = \{t+1, \ldots, n\}$, where $w_{ik} < 1$ for $i \in \mathcal{Q}$ and $w_{ik} = 1$ for $ i \in \mathcal{S}.$  Let $\bm Q_k$ be an $n$-by-$n$ diagonal matrix with diagonal entries $\left(\frac{w_{1k}}{(1-w_{1k})^\alpha}, \ldots, \frac{w_{tk}}{(1-w_{tk})^\alpha}, 0 ,\ldots, 0 \right)$. Let $\bm S$ be $\begin{bmatrix} 0_{t\times(n-t)} \\  \bm I_{n-t} \end{bmatrix}$.  Then, once $\eta$ is big enough such that for all sets $J \subseteq \{1,\ldots,m\}$ with at least one entry, $\frac{w_{ik}}{(1-w_{ik})^\alpha} < \eta$, the adjusted covariance matrix can be rewritten as
    \[ \bm R_k + \beta_k \diag(\bm v_k) = {\bm R}_k + \beta_k (\bm Q_k + \eta \bm S\bm S^\T). \]
    Without loss of generality, let $\beta_k = 1$.   We drop the subscript $k$ for the remainder of the proof for brevity. 
    Partition the covariance matrix as 
     \[ \bm R + \bm Q + \eta \bm S\bm S^\T = \begin{bmatrix} (\bm R + \bm Q)_{\Q\Q} & \bm R_{\Q\S} \\ \bm R_{\S\Q} & \bm R_{\S\S} + \eta \bm I \end{bmatrix}. \]
    Recall that for an invertible matrix $\bm M = \begin{bmatrix}    \bm W & \bm X \\ \bm Y & \bm Z    \end{bmatrix}$, the inverse 
    \citep[p.95]{gentle2007matrix} is
    \[ \bm M^{-1} = \begin{bmatrix}    \bm W & \bm X \\ \bm Y & \bm Z    \end{bmatrix}^{-1} 
    = \begin{bmatrix}   (\bm W - \bm X\bm Z^{-1}\bm Y)^{-1} & - \bm W^{-1} \bm X (\bm Z - \bm Y\bm W^{-1}\bm X)^{-1} \\
                        - \bm Z^{-1} \bm Y (\bm W - \bm X\bm Z^{-1}\bm Y)^{-1} & (\bm Z - \bm Y \bm W^{-1}\bm X )^{-1}
    \end{bmatrix}.\]
    The norm $\lVert \bm R_{\S\S} \rVert$ is bounded by the largest singular value of $\bm R_{\S\S}$.  Using Lemma \ref{lemma:limitinv} in Supplementary Material \ref{appx:lemmalimitinv}, 
    \[\lim_{\eta \rightarrow \infty}(\bm R_{\S\S} +  \eta \bm I)^{-1} = \bm 0,\]
    \[\lim_{\eta\rightarrow\infty}\left(\bm R_{\S\S} + \eta \bm I - \bm R_{\S\Q}(\bm R+\bm Q)_{\Q\Q}^{-1} \bm R_{\Q\S}\right)^{-1} = \bm 0.\]  
    By the Woodbury identity \citep[p.221]{gentle2007matrix}, 
    \begin{align*}
        &\left((\bm R+\bm Q)_{\Q\Q} - \bm R_{\Q\S} (\bm R_{\S\S} + \eta \bm I)^{-1} \bm R_{\S\Q}\right)^{-1} \\      
        &= (\bm R+\bm Q)_{\Q\Q}^{-1} + (\bm R+\bm Q)_{\Q\Q}^{-1} \bm R_{\Q\S} \left( \bm R_{\S\S} + \eta \bm I - \bm R_{\S\Q}(\bm R+\bm Q)_{\Q\Q}^{-1} \bm R_{\Q\S}\right)^{-1} \bm R_{\S\Q} (\bm R+\bm Q)_{\Q\Q}^{-1} \\
        &\rightarrow (\bm R+\bm Q)_{\Q\Q}^{-1}.
    \end{align*}
    To summarize, as $\eta \rightarrow \infty$,
    \[\left(\bm R + \bm Q + \eta \bm S \bm S^\T\right)^{-1} \rightarrow \begin{bmatrix}
    (\bm R+\bm Q)_{\Q\Q}^{-1} & \bm 0_{t\times(n-t)} \\
    \bm 0_{(n-t)\times t} & \bm 0_{(n-t)\times(n-t)} 
    \end{bmatrix}.\]
    Therefore, for any length-$n$ vector $\bm a$, 
    \[\left(\left(\bm R + \bm Q + \eta \bm S \bm S^\T\right)^{-1} \bm a\right)_\Q \rightarrow \left(\bm R+\bm Q\right)^{-1}_{\Q\Q} \bm a_{\Q},\]
    \[\left(\left(\bm R + \bm Q + \eta \bm S \bm S^\T\right)^{-1} \bm a\right)_\S \rightarrow \bm 0.\]
    
    Pulling these results through our equations for predictions (i.e., \eqref{eqn:tildemu_k} and \eqref{eqn:tildesigma_k}), we conclude that the limits of $\tilde{\mu}_k(\bm \theta)$ and $\tilde{\sigma}^2_k(\bm \theta)$ will not depend on any row $i$ with $w_{ik}=1$, and thus the result holds.
\end{proof}

\section{Estimating $\bm \Phi$} 
The principal component matrix $\bm \Phi$ is estimated with missing values in the observed simulation output $\bm F$.  
For a certain row $i$, the index set $\mathcal{J}(i)$ contains the indices where data are not missing.  Denote the missing index set as ${\sim}\mathcal{J}(i) = \{1, \ldots, m \} \setminus \mathcal{J}(i).$  
The principal component matrix $\bm \Phi$ is approximated via Algorithm \ref{alg:emPhi} which mirrors the typical EM structure.  The simulation output $\bm F$ is assumed to be scaled to zero mean and unit variance in its columns.
The E step is standard, employing conditional equations regarding multivariate normal distribution.  However, we note that the M step does not result in the exact optimizer since $\epsilon_\text{M} > 0$.  In our implementation, $\epsilon_\text{M} = 10^{-5}$ provides reasonable performance and ensures the procedure is numerically stable.

\SetKwComment{Comment}{/* }{ */}
\begin{algorithm}
\caption{EM algorithm for estimating $\bm \Phi$}\label{alg:emPhi}
\DontPrintSemicolon
\LinesNumbered
\Comment{Initialization with column means of non-missing values}
\For{$j \in \{1, \ldots, m\}$}{
    Let $\mathcal{I}(j) = \{i: j\in\mathcal{J}(i)\}.$\;
    \lForAll(\Comment*[f]{missing}){$i \notin \mathcal{I}(j)$}{
        $\tilde{\bm f}_j(\bm \theta_i) = \frac{1}{|\mathcal{I}(j)|}\sum_{i\in\mathcal{I}(j)} \bm f_j(\bm \theta_i).$
        }
    \lForAll(\Comment*[f]{non-missing}){$i \in \mathcal{I}(j)$}{
        $\tilde{\bm f}_j(\bm \theta_i) = \bm f_j(\bm \theta_i).$
        }
    }
\Comment{EM algorithm}
\While{convergence criterion not met}{
    Obtain $\bm\Phi, \bm\Lambda$ via SVD of $\tilde{\bm  F} = (\tilde{\bm f}(\bm \theta_1)^\T, \ldots, \tilde{\bm f}(\bm \theta_n)^\T)^\T$. \Comment*[r]{M step} 
    \For(\Comment*[f]{E step}){$i \in \{1, \ldots, n\}$}{
            $\tilde{\bm f}_{{\sim}\mathcal{J}(i)}(\bm \theta_i) = \bm \Phi_{{\sim}\mathcal{J}(i), \cdot}(\Lambda - \varepsilon \bm I) \bm \Phi_{\mathcal{J}(i), \cdot}^\mathsf{T} \left(\bm \Phi_{\mathcal{J}(i), \cdot}(\Lambda - \varepsilon \bm I) \bm \Phi_{\mathcal{J}(i), \cdot}^\mathsf{T}  + \epsilon_\text{M} \bm I\right)^{-1} \bm f_{\mathcal{J}(i)}(\bm \theta_i).$\;
    }
}
\end{algorithm}

\FloatBarrier

\section{Computing Inverse of $\bm B_{\mathcal{J}(i), \mathcal{J}(i)}$} 
The main inferences introduced in \eqref{eqn:gtildemean} and \eqref{eqn:gtildevar} depend on inverting the matrix $\bm B_{\mathcal{J}(i),\mathcal{J}(i)}.$  Recall the definition of the covariance matrix 
\[\bm B = \bm \Phi \, \diag\left(\lambda_1-\varepsilon, \ldots, \lambda_\kappa - \varepsilon\right) \, \bm \Phi^\T + \varepsilon \bm I. \]
Let $\bm \Lambda_\varepsilon = \diag\left(\lambda_1-\varepsilon, \ldots, \lambda_\kappa - \varepsilon\right)$. Now consider the inverse of $\bm B$ by using the Woodbury matrix identity,
\begin{align*}
    \bm B^{-1} &= \left(\varepsilon \bm I + \bm \Phi \bm \Lambda_\varepsilon  \bm \Phi^\T  \right)^{-1} \\
    &= \varepsilon^{-1} \bm I - \varepsilon^{-2} \bm \Phi \left(\bm \Lambda_\varepsilon^{-1}  + \varepsilon^{-1}\bm I\right)^{-1} \bm \Phi^\T \\
    &= \varepsilon^{-1} \bm I - \varepsilon^{-1} \bm \Phi \,  \diag\left(\frac{\lambda_1-\varepsilon}{\lambda_1},\ldots, \frac{\lambda_\kappa - \varepsilon}{\lambda_\kappa}\right) \bm \Phi^\T \\
    &= \varepsilon^{-1} \left(\bm I - \bm \Phi \,  \diag\left(\frac{\lambda_1-\varepsilon}{\lambda_1},\ldots, \frac{\lambda_\kappa - \varepsilon}{\lambda_\kappa}\right) \bm \Phi^\T\right).
\end{align*}
The expression above is simple to compute and requires no matrix inversion.  But we are concerned about the inverse of the submatrix $\bm B_{\mathcal{J}(i), \mathcal{J}(i)}$, where
\begin{align*}
    \bm B_{\mathcal{J}(i),\mathcal{J}(i)} = \varepsilon \bm I + \bm \Phi_{\mathcal{J}(i),\cdot} \,\bm  \Lambda_\varepsilon \, \bm \Phi_{\mathcal{J}(i),\cdot}^\T.
\end{align*}
Also by Woodbury identity, we then have
\begin{align*}
    \bm B_{\mathcal{J}(i),\mathcal{J}(i)}^{-1} &= \left(\varepsilon \bm I + \bm \Phi_{\mathcal{J}(i),\cdot} \, \bm \Lambda_\varepsilon \,  \bm \Phi_{\mathcal{J}(i),\cdot}^\T\right)^{-1} \\
    &= \varepsilon^{-1} \bm I - \varepsilon^{-2}\bm  \Phi \left(\bm \Lambda_\varepsilon^{-1}  + \varepsilon^{-1}\bm \Phi_{\mathcal{J}(i),\cdot}^\T \, \bm \Phi_{\mathcal{J}(i), \cdot}\right)^{-1}\bm  \Phi^\T.
\end{align*}
The resulting expression cannot be further simplified since $\bm \Phi^\T_{\mathcal{J}(i),\cdot} \bm \Phi_{\mathcal{J}(i),\cdot}$ no longer equals the identity matrix or some simple known form.  However, the algebraic manipulation may still be beneficial when we consider only the inverse of the inner $\kappa$-by-$\kappa$ matrix, as opposed to the generally larger matrix $\bm B_{\mathcal{J}(i), \mathcal{J}(i)}.$

\section{Investigation on $\alpha$, $\beta_k$ values}
We construct surrogates for the four test functions with MCAR responses at 5\%.  The details of the test functions are given in Supplementary Material \ref{appx:testfuncs}.  
The root mean squared error (RMSE) between the function values and the surrogate's predicted values was evaluated with a set of holdout simulation runs.  Figure~\ref{fig:alpha_beta} shows the RMSE for selected $\alpha$ values with $\beta_k=1$ or $\beta_k$ optimized.  
The RMSE generally increases as $\alpha$ increases.  When $\beta_k$'s are fixed to be 1, the error increases for larger ranges of $\alpha$, specifically for the wingweight function, the error continuously increases for $\alpha > 0.$    
The benefit in including $\beta_k$'s is shown in the generally lower RMSE achieved.  We suggest to include $\beta_k$'s in the hyperparameter estimation in constructing the surrogate.  When $\beta_k$'s are optimized, smaller $\alpha$ values are preferred.  

\begin{figure}[ht]
    \centering
    \includegraphics[width=0.8\linewidth]{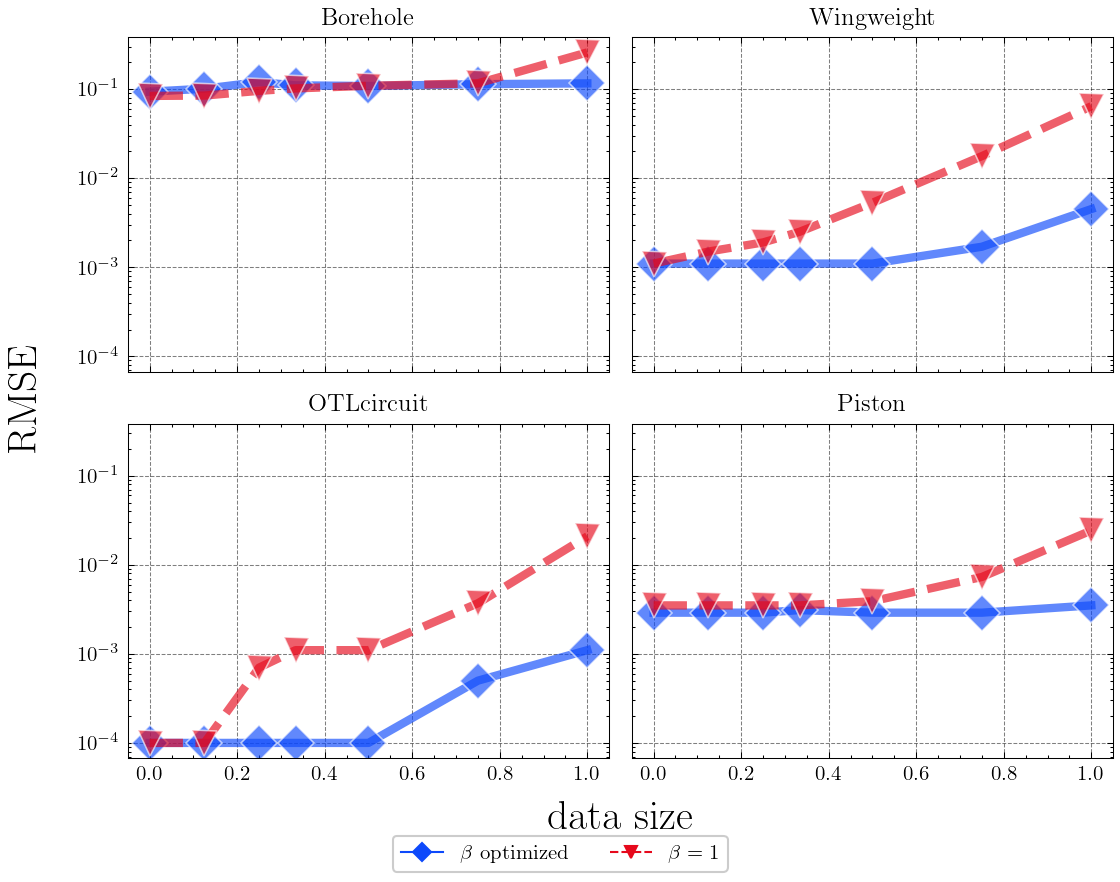}
    \caption{RMSE with $\alpha \in [0, 1]$ and $\beta_k$'s either optimized or $=1$.}
    \label{fig:alpha_beta}
\end{figure}

\FloatBarrier

\section{Test functions} \label{appx:testfuncs}
This section provides details of the four test functions used in evaluating the performance of the algorithm.  
The subscripts in the expressions in this section refer to the elements in each of the dimension of vector variables $\bm \theta$ and $\bm x$, whereas in all other sections, the subscripted $\bm\theta_i$ and $\bm x_j$ refer to the $i$th parameter and the $j$th locations, respectively.  The reference parameter $\bm\theta^\star$ refers to the center of the parameter space, for example $(0.5, 0.5, 0.5)^\T$ in a 3D unit hypercube.

For all functions, under MCAR, missingness is assigned with equal probability everywhere with the desired percentage.  Under MAR, missingness is randomly assigned to a subset of locations, according to the missing-at-random scheme reported in \citet{muzellec2020missing}.  For MNAR, the missingness mechanism is included in each function below.

\subsection{Modified Borehole function}
This test function is developed for the purpose of calibration and is modified from the Borehole function \citep{santner2018design}.  The function is defined as
\begin{equation} \label{eqn:borehole}
f(\bm \theta, \bm x) = \frac{2\pi (\theta_1 - x_1)}{2(\theta_2 / x_2^2) + \theta_3} \exp\left(\theta_4 x_2\right),
\end{equation}
where the ranges of the variables are included in Table \ref{tbl:varBorehole}.  The first four variables are to be tuned as the parameter, and the last two variables are fixed as locations.

\begin{table}[ht]
    \centering
    \begin{tabular}{ccc}
        \hline
        $\bm \theta$ & $\theta_1$       & [990, 1110]  \\
        & $\theta_2$       & [0.074, 1.12] \\
        & $\theta_3$       & [0.05, 0.5]    \\
        & $\theta_4$       & [-0.5, 0.5] \\
        $\bm x$     & $x_1$           & [700, 820] \\
        & $x_2$           & [0.05, 0.5] \\
        \hline
    \end{tabular}
    \caption{Variable ranges for borehole function in \eqref{eqn:borehole}.}
    \label{tbl:varBorehole}
\end{table}

Furthermore, a missingness of MNAR is introduced into the model evaluations, in the following manner:
\begin{equation*}
    f(\bm \theta, \bm x) \gets 
    \begin{cases}
        f(\bm \theta, \bm x), & \text{if $f(\bm \theta, \bm x) \leq c_1 f(\bm \theta^\star, \bm x)$} \\
        \mathrm{NA}, & \text{otherwise,}
    \end{cases}
\end{equation*}
where $\bm \theta^\star$ is the center of the standardized parameter space $U[0, 1]^4$. The constant $c_1>0$ is adjusted to achieve a desired probability of missingness.  
 
\subsection{Piston function}
The piston function \citep{ben2007modeling} is given as 
\begin{align}
\label{eq:piston}
    f(\bm \theta, \bm x) &= 2\pi\sqrt{\frac{M}{k+ S^2 \frac{P_0 V_0}{T_0} \frac{T_a}{V^2}}} \, , 
    \end{align}
    where
    \begin{align*}
    V &= \frac{S}{2k}\left(\sqrt{A^2 + 4k \frac{P_0 V_0}{T_0} T_a} - A\right), \\
    A&= P_0S + 19.62 M -\frac{kV_0}{S},
\end{align*}
and where the variable partition $(\bm \theta, \bm x)$ and their ranges are shown in Table~\ref{tbl:varPiston}. 

\begin{table}[ht]
    \centering
    \begin{tabular}{ccc}
        \hline
                 &  Variable        & Range \\ \hline
        $\bm \theta$ &  $k$             & [1000, 5000]  \\
                 &  $P_0$           & [90000, 110000] \\
                 &  $T_a$           & [290, 296]    \\ \hline
        $\bm x$      &  $M$             & [30, 60] \\
                 &  $S$             & [0.005, 0.02] \\
                 &  $V_0$           & [0.0002, 0.01] \\
                 &  $T_0$           & [340, 360] \\
        \hline
    \end{tabular}
    \caption{Variable ranges for the piston function in \eqref{eq:piston}.}
    \label{tbl:varPiston}
\end{table}

The function produces a MNAR that follows the ``logistic missing not-at-random'' scheme used in \citet{muzellec2020missing}, where the probability of an entry missing is according to a logistic model over the columns.   

\subsection{Wingweight function}
The wingweight function \citep[ch. 1]{forrester2008engineering} is given as 
\begin{align}
\label{eq:ww}
    f(\bm \theta, \bm x) &= 0.036 S_w^{0.758} W_{fw}^{0.0035} \left(\frac{A}{\cos^2(\Lambda)}\right)^{0.6} q^{0.006} \lambda^{0.04} \left(\frac{100 t_c}{\cos(\Lambda)}\right)^{-0.3} (N_z W_{dg})^{0.49} + S_w W_p,
\end{align}
and the variable partition $(\bm \theta, \bm x)$ and the respective ranges are shown in Table~\ref{tbl:varWingweight}.

\begin{table}[ht]
    \centering
    \begin{tabular}{ccc}
        \hline
                 &  Variable        & Range \\ \hline
        $\bm \theta$ &  $ A $           & [6, 10]  \\
                 &  $ \Lambda $     & [$-10^{\circ}, 10^{\circ}$] \\
                 &  $ q $           & [16, 45]    \\ 
                 &  $ \lambda $     & [0.5, 1]    \\ \hline
        $\bm x$      &  $ S_w $         & [150, 200] \\
                 &  $ W_{fw} $        & [220, 300] \\
                 &  $ t_c $         & [0.08, 0.18] \\
                 &  $ N_z $         & [2.5, 6] \\
                 &  $ W_{dg} $      & [1700, 2500] \\
                 &  $ W_p $         & [0.025, 0.08] \\
        \hline
    \end{tabular}
    \caption{Variable ranges for wingweight function in \eqref{eq:ww}.}
    \label{tbl:varWingweight}
\end{table}

The function produces a MNAR in the following manner:
\begin{equation*}
    f(\bm \theta, \bm x) = 
    \begin{cases}
        f(\bm \theta, \bm x), & \text{if $f(\bm \theta, \bm x) \leq c_2 f(\bm \theta^\star, \bm x)$} \\
        \mathrm{NA}, & \text{otherwise,}
    \end{cases}
\end{equation*}
where $\bm \theta^\star$ is the center of the standardized parameter space $U[0, 1]^4$. The constant $c_2>0$ is adjusted to achieve a desired probability of missingness.  
 
\subsection{OTLcircuit function}
The OTLcircuit function \citep{ben2007modeling} is given as
\begin{align}
\label{eq:otl}
    f(\bm \theta, \bm x) &= \frac{(V_{b1} + 0.74) \beta (R_{c2} + 9)}{\beta (R_{c2} + 9) + R_f} + \frac{11.35 R_f}{\beta (R_{c2} + 9) + R_f} + \frac{0.74 R_f \beta (R_{c2} + 9)}{(\beta (R_{c2} + 9) + R_f) R_{c1}}, 
    \end{align}
where
    \begin{align*}
    V_{b1} &= \frac{12 R_{b2}}{R_{b1} + R_{b2}},
\end{align*}
and the variable partition $(\bm \theta, \bm x)$ and the respective ranges are shown in Table~\ref{tbl:varOTLcircuit}.

\begin{table}[ht]
    \centering
    \begin{tabular}{ccc}
        \hline
                 &  Variable        & Range \\ \hline
        $\bm \theta$ &  $ R_f $           & [0.5, 3]  \\
                 &  $ \beta $         & [50, 300] \\ \hline
        $\bm x$      &  $ R_{b1} $        & [50, 150] \\
                 &  $ R_{b2} $        & [25, 70] \\
                 &  $ R_{c1} $        & [1.2, 2.5] \\
                 &  $ R_{c2} $        & [0.25, 1.2] \\
        \hline
    \end{tabular}
    \caption{Variable ranges for the OTLcircuit function in \eqref{eq:otl}.}
    \label{tbl:varOTLcircuit}
\end{table}

The function produces a MNAR that follows the``logistic missing not-at-random'' scheme used in \citet{muzellec2020missing}, where the probability of missing is according to a logistic model over the columns.

\section{Full results for surrogate comparison experiment}
This section presents the full simulation results following the numerical experiments detailed in the main article.  The surrogate methods compared are the proposed method PCGPwM, two simplistic imputation methods PCGP-kNN and PCGP-BR, EMGP, colGP, and GP-OM.
The surrogate methods are tested under nine missingness scenarios: (MCAR, 1\%), (MCAR, 5\%), (MCAR, 25\%), (MNAR, 1\%), (MNAR, 5\%), (MNAR, 25\%), (MAR, 1\%), (MAR, 5\%), and (MAR, 25\%), where results for the scenario (MNAR, 5\%) is reported in the main text of the article.  
To summarize the metrics of surrogate quality, the test RMSE is recorded to measure the accuracy of the surrogate; the 90\% coverage and the 90\% width are used to assess the uncertainty quantification of the surrogate.

\subsection{Results under MNAR}
The results under MNAR are presented, in the order of 1\%, 5\%, and 25\%.  

Figure \ref{fig:rmses1structured} shows the test RMSEs for the surrogate methods under 1\% missingness.  The simplistic imputation methods fail to reduce the errors (i.e., fail to improve surrogate predictions) as $N$ grows for two of the functions (borehole and wingweight).  The corresponding missingness mechanism is MNAR, depending on the unobserved function value. The simplistic imputation methods reduce the errors in the other two functions (piston and OTLcircuit). The RMSEs from PCGPwM and the other methods decrease as $N$ increases in all four functions.  The EMGP method consistently results in larger error than PCGPwM, and colGP.  The colGP method performs well across functions.  However, colGP will not be able to compete in terms of its computation time at higher output dimensions.

Figure~\ref{fig:coverages1structured} shows the 90\% coverage under 1\% missingness, and Figure~\ref{fig:avgintwidth1structured} shows the 90\% width.  GP-OM results in a high coverage in the borehole function but poor coverages in the other functions.  The 90\% widths show that GP-OM results in a confidence interval too wide in the borehole and too narrow in the others.  The EMGP method achieves the prescribed coverage in the piston and OTLcircuit functions, while undercovering in the borehole function.  The coverage from EMGP improves as $N$ grows and achieves the prescribed level for larger $N$s. PCGPwM, the simplistic imputation methods, and colGP result in adequate coverages in all functions.  However, the simplistic imputation methods maintain wider intervals than PCGPwM in all four cases.

\begin{figure}[ht]
    \centering
    \includegraphics[width=0.8\textwidth]{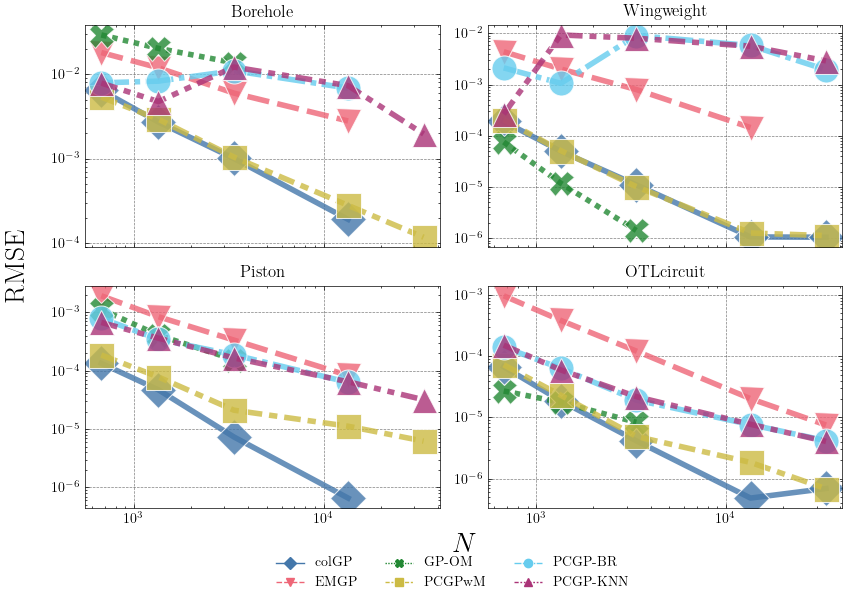}
    \caption{Comparison of prediction accuracy of surrogate methods, under 1\% MNAR.}
    \label{fig:rmses1structured}
\end{figure}

\begin{figure}[ht]
    \centering
    \includegraphics[width=0.8\textwidth]{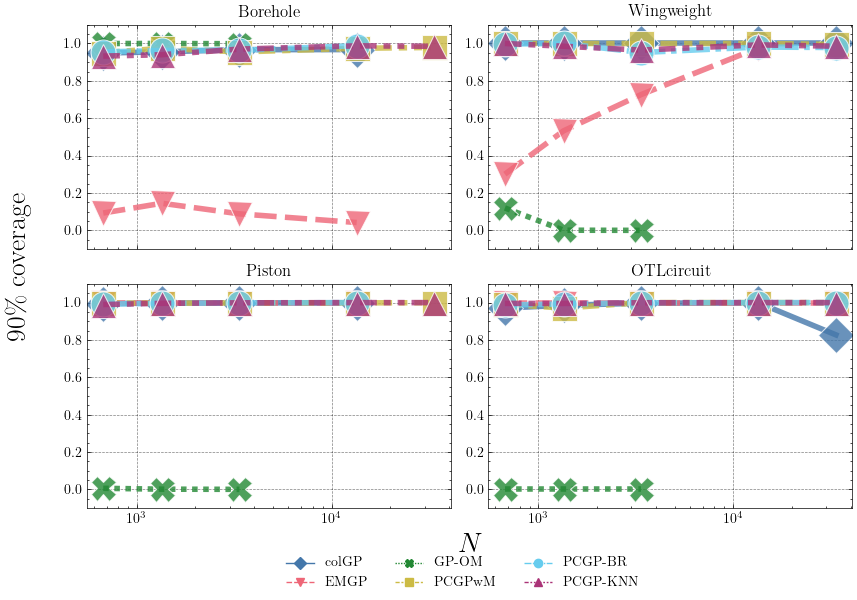}
    \caption{Comparison of 90\% coverage of surrogate methods, under 1\% MNAR.}
    \label{fig:coverages1structured}
\end{figure}

\begin{figure}[ht]
    \centering
    \includegraphics[width=0.8\textwidth]{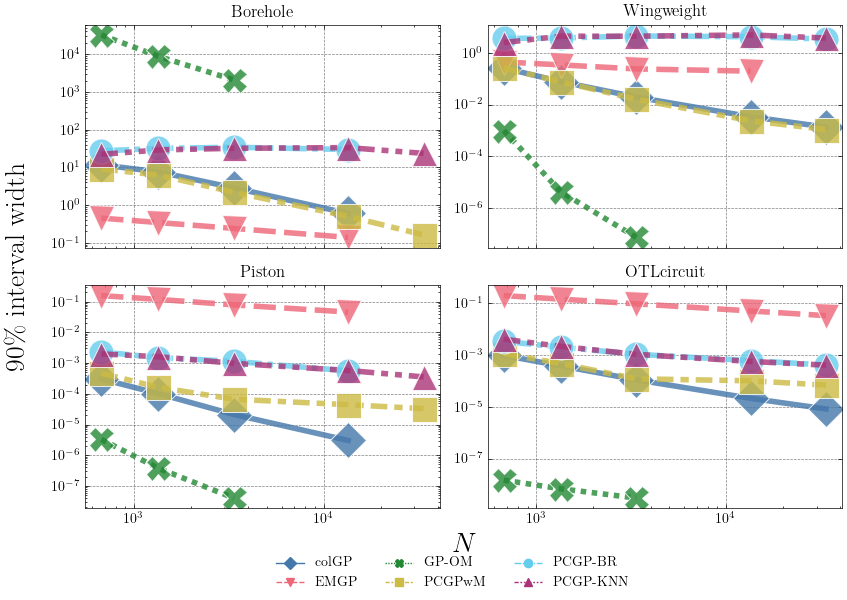}
    \caption{Comparison of 90\% width of surrogate methods, under 1\% MNAR.}
    \label{fig:avgintwidth1structured}
\end{figure}

Figure \ref{fig:rmses25structured} shows the test RMSEs under 25\% MNAR.  Similar conclusions to the 1\% case can be drawn about the simplistic imputation methods, where they fail to improve in two functions given more data.  The other methods exhibit similar behaviors in predictive accuracy.

Figure \ref{fig:coverages25structured} shows the coverages under 25\% MNAR and Figure~\ref{fig:avgintwidth25structured} shows the widths of the intervals.  We observe that GP-OM and the simplistic imputation methods result in similar behavior as the other cases, meaning that GP-OM results in high coverage in only one function, and the simplistic imputation method achieves high coverage with generally wide intervals.  PCGPwM results in adequate coverages with narrower intervals for all functions except the OTLcircuit function for $n=2500$. 

\begin{figure}[ht]
    \centering
    \includegraphics[width=0.8\textwidth]{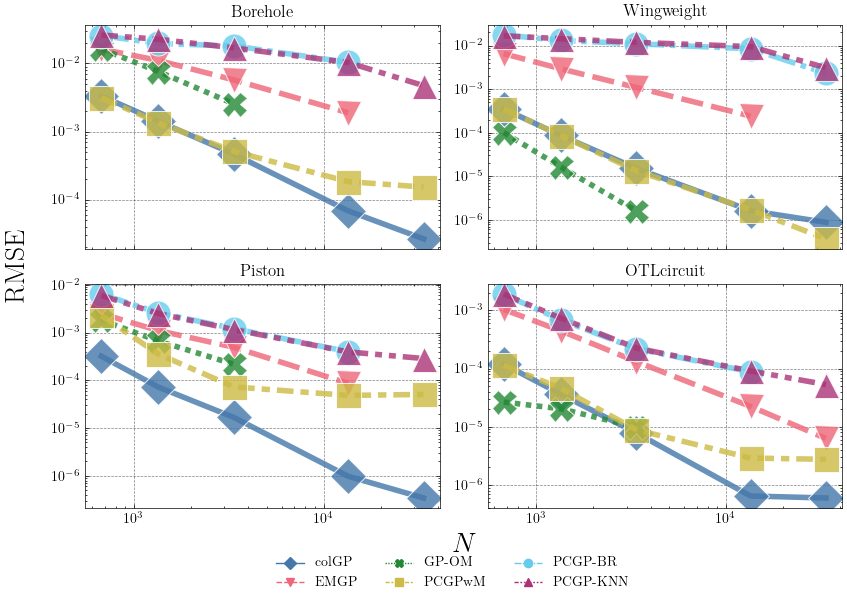}
    \caption{Comparison of prediction accuracy of surrogate methods, under 25\% MNAR.}
    \label{fig:rmses25structured}
\end{figure}

\begin{figure}[ht]
    \centering
    \includegraphics[width=0.8\textwidth]{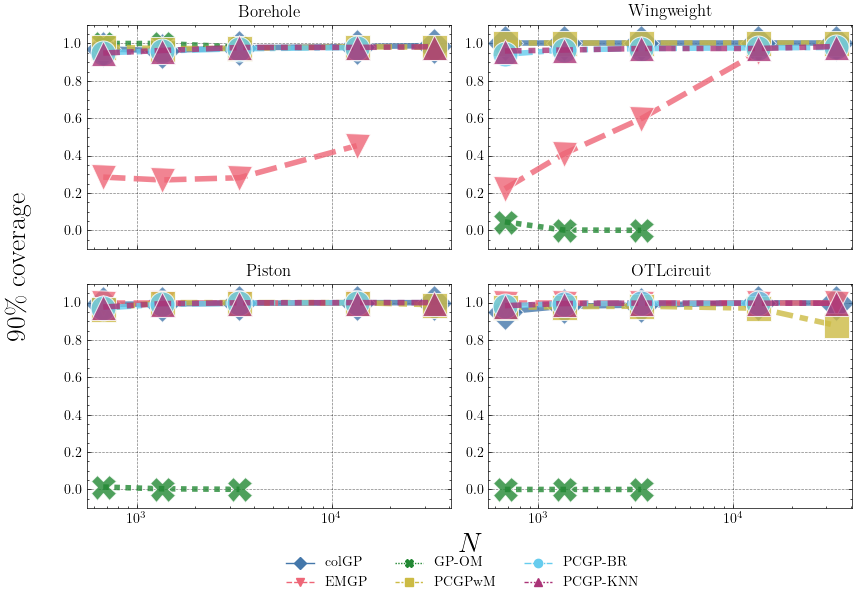}
    \caption{Comparison of 90\% coverage of surrogate methods, under 25\% MNAR.}
    \label{fig:coverages25structured}
\end{figure}

\begin{figure}[ht]
    \centering
    \includegraphics[width=0.8\textwidth]{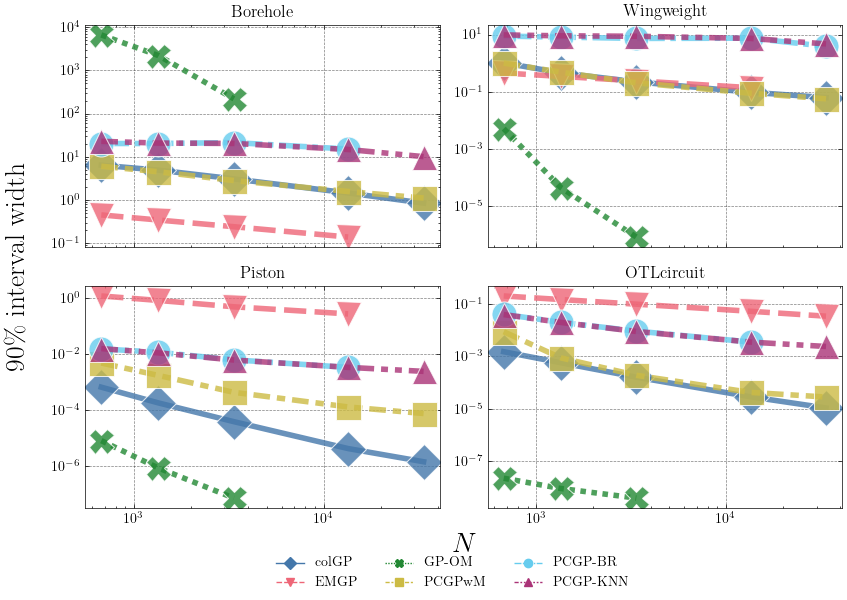}
    \caption{Comparison of 90\% width of surrogate methods, under 25\% MNAR.}
    \label{fig:avgintwidth25structured}
\end{figure}

\subsection{Results under MCAR}
The results under MCAR are presented, in the order of 1\%, 5\%, and 25\%.  

Figure \ref{fig:rmses1random} shows the test RMSEs for the surrogate methods under 1\% MCAR.  The RMSEs decrease as $N$ increases for all methods in all four test functions.  Among the principal component-based methods, PCGPwM generally achieves lower errors than the simplistic imputation methods.  The colGP method performs well across the functions, and the EMGP method generally results in higher error compared to the other methods.

Figure \ref{fig:coverages1random} shows the 90\% coverages, and Figure~\ref{fig:avgintwidth1random} shows the corresponding 90\% width of the intervals.  Similar to the MNAR case, GP-OM produces a high coverage in the borehole function, but near-zero coverages in the others.  The corresponding 90\% widths reflect the high coverage with wide intervals and the undercoverages with intervals that are too narrow.  The EMGP method achieves adequate coverage for the piston and OTLcircuit functions, but undercovers in the borehole function.  The coverage for the wingweight function improves as $N$ increases and achieves the prescribed level at larger $N$s.  All the other methods achieve adequate coverages, where the corresponding widths decrease as $N$ increases.  PCGPwM generally produces narrower widths than do the simplistic imputation methods while attaining comparable coverages.

\begin{figure}[ht]
    \centering
    \includegraphics[width=0.8\textwidth]{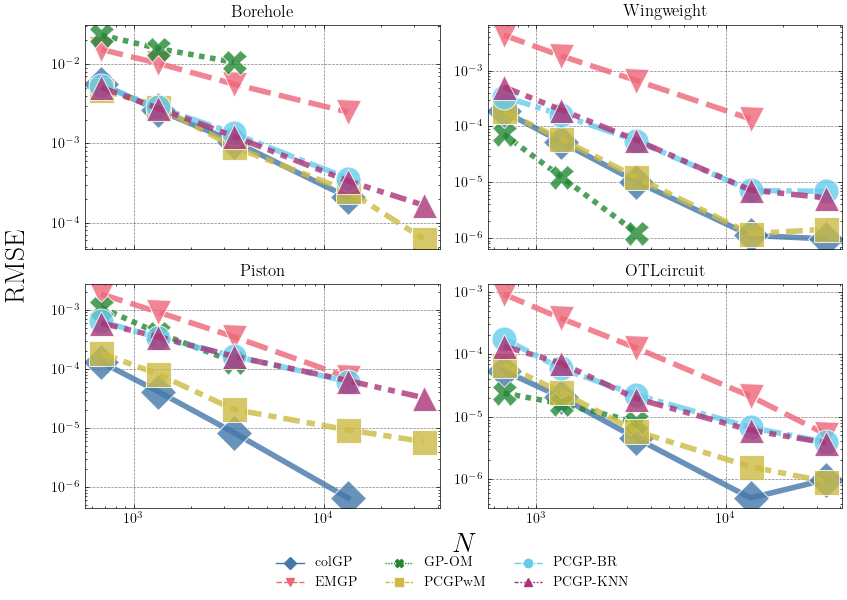}
    \caption{Comparison of prediction accuracy of surrogate methods, under 1\% MCAR.}
    \label{fig:rmses1random}
\end{figure}

\begin{figure}[ht]
    \centering
    \includegraphics[width=0.8\textwidth]{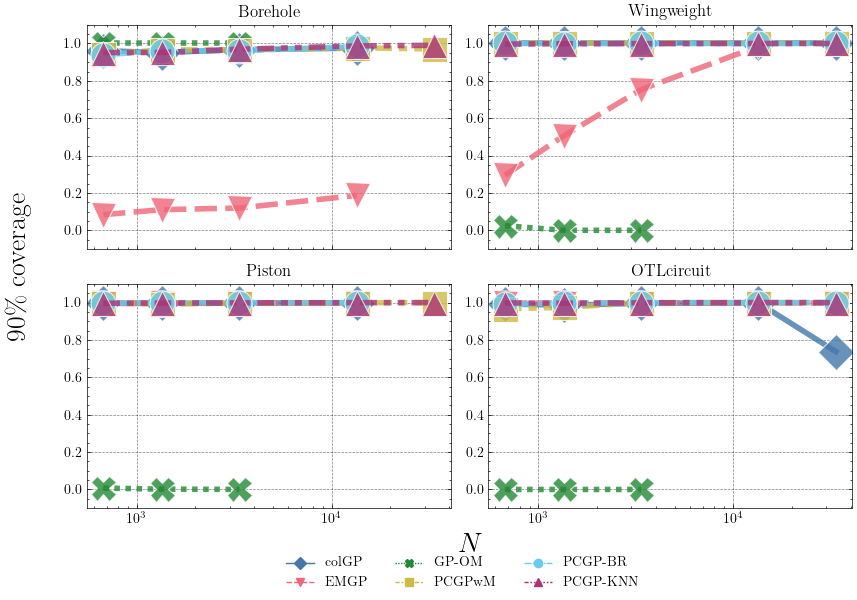}
    \caption{Comparison of 90\% coverage of surrogate methods, under 1\% MCAR.}
    \label{fig:coverages1random}
\end{figure}

\begin{figure}[ht]
    \centering
    \includegraphics[width=0.8\textwidth]{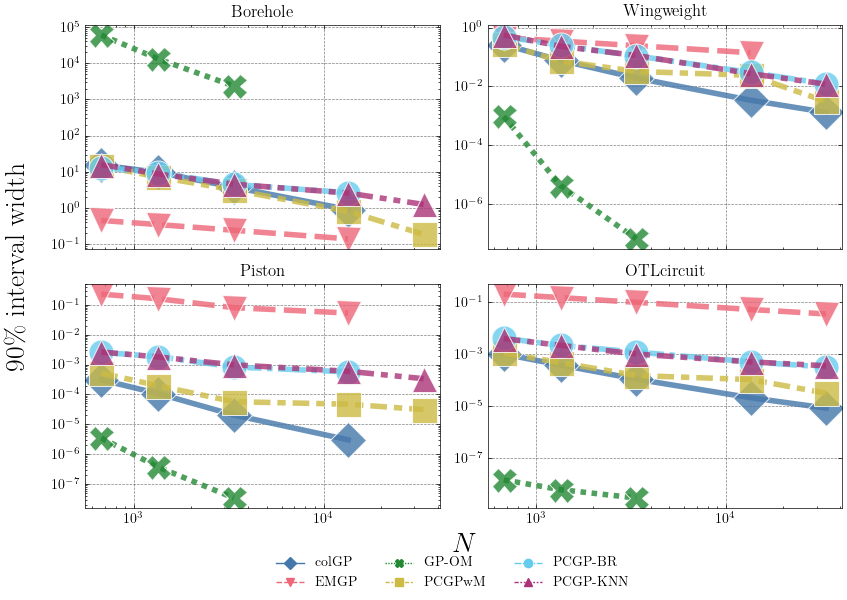}
    \caption{Comparison of 90\% width of surrogate methods, under 1\% MCAR.}
    \label{fig:avgintwidth1random}
\end{figure}

Figure \ref{fig:rmses5random} shows the test RMSEs under 5\% MCAR.  
Figures~\ref{fig:coverages5random}~and~\ref{fig:avgintwidth5random} show the 90\% coverages and the 90\% widths, respectively.  The results of the surrogate accuracy with  1\% MCAR extends to the scenario with 5\% MCAR, except the levels of coverage for PCGPwM has decreased in large $N$.  The 90\% widths decrease for all methods as $N$ increases.

\begin{figure}[ht]
    \centering
    \includegraphics[width=0.8\textwidth]{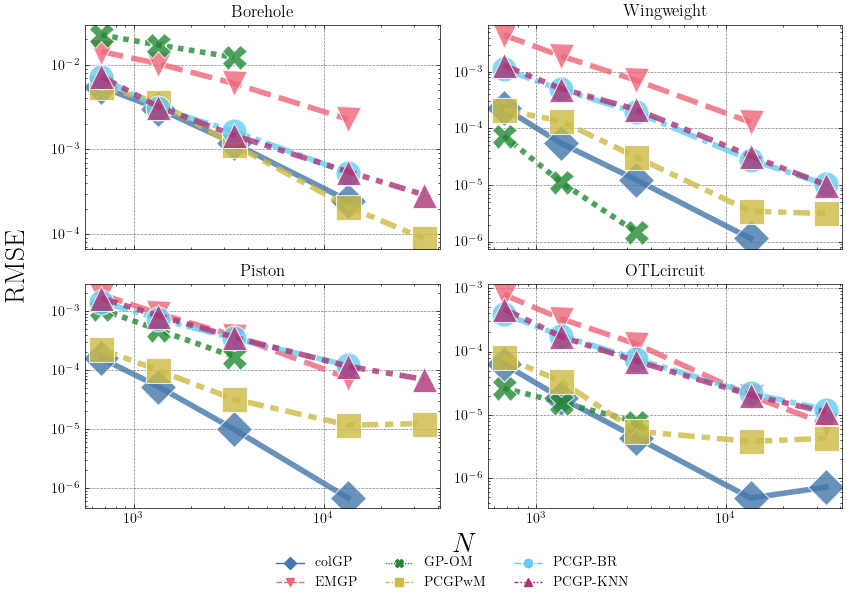}
    \caption{Comparison of prediction accuracy of surrogate methods, under 5\% MCAR.}
    \label{fig:rmses5random}
\end{figure}

\begin{figure}[ht]
    \centering
    \includegraphics[width=0.8\textwidth]{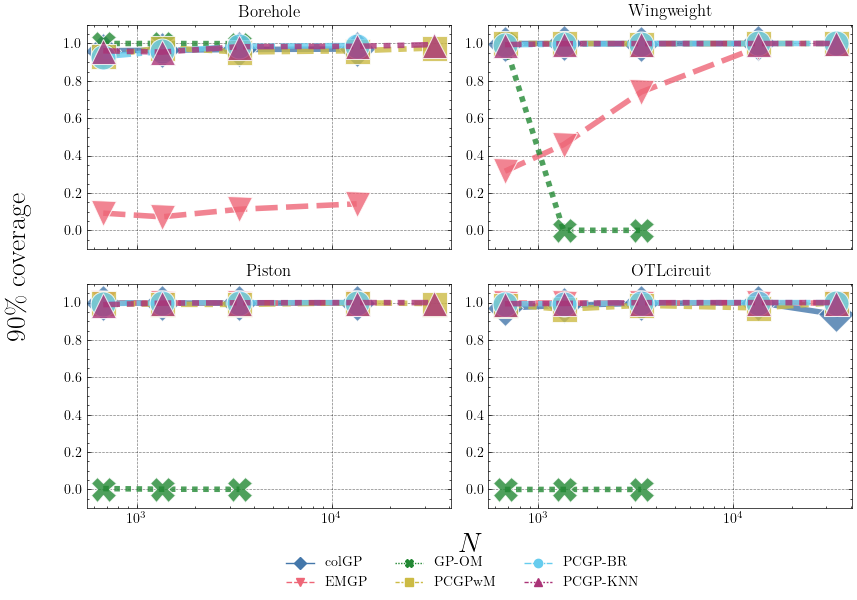}
    \caption{Comparison of 90\% coverage of surrogate methods, under 5\% MCAR.}
    \label{fig:coverages5random}
\end{figure}

\begin{figure}[ht]
    \centering
    \includegraphics[width=0.8\textwidth]{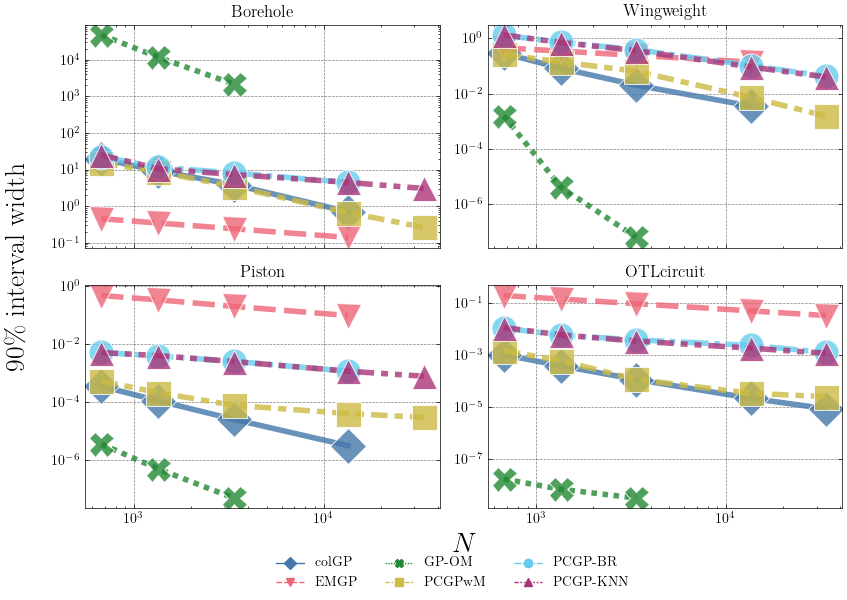}
    \caption{Comparison of 90\% width of surrogate methods, under 5\% MCAR.}
    \label{fig:avgintwidth5random}
\end{figure}

Figure \ref{fig:rmses25random} shows the test RMSEs under 25\% MCAR.  The RMSEs for all the methods generally decrease with $N$.
Figures~\ref{fig:coverages25random}~and~\ref{fig:avgintwidth25random} show the 90\% coverages and the 90\% widths, respectively.  Similar conclusions can be drawn for GP-OM, compared to lower missingness scenarios.  The simplistic imputation methods maintain adequate coverages while maintaining slightly wider intervals when compared with PCGPwM.  The EMGP method exhibits similar behavior, where coverage improves with the borehold and wingweight functions as $N$ increases.  

\begin{figure}[ht]
    \centering
    \includegraphics[width=0.8\textwidth]{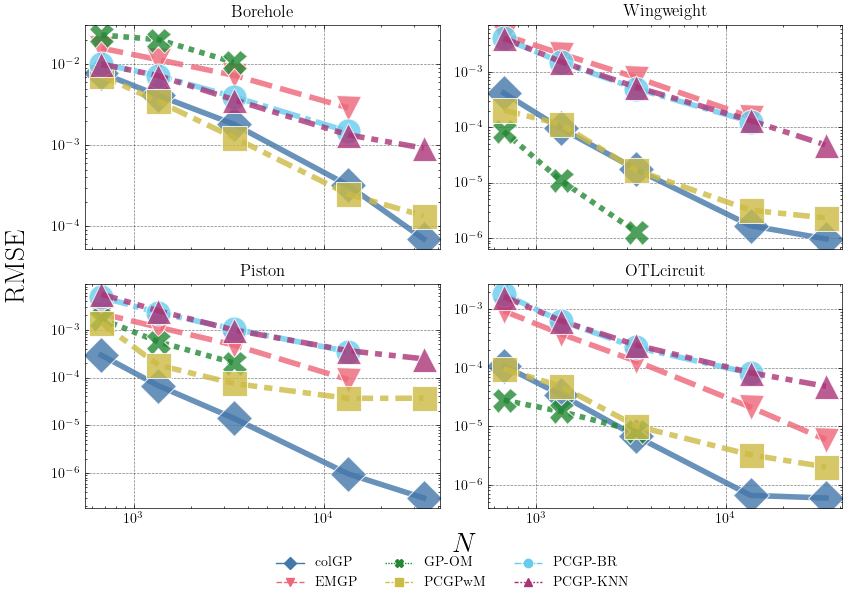}
    \caption{Comparison of prediction accuracy of surrogate methods, under 25\% MCAR.}
    \label{fig:rmses25random}
\end{figure}

\begin{figure}[ht]
    \centering
    \includegraphics[width=0.8\textwidth]{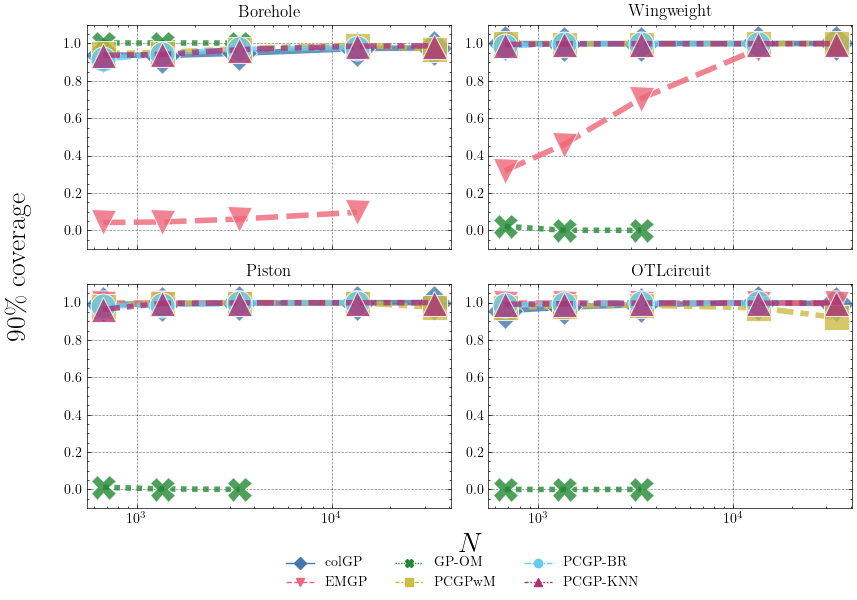}
    \caption{Comparison of 90\% coverage of surrogate methods, under 25\% MCAR.}
    \label{fig:coverages25random}
\end{figure}

\begin{figure}[ht]
    \centering
    \includegraphics[width=0.8\textwidth]{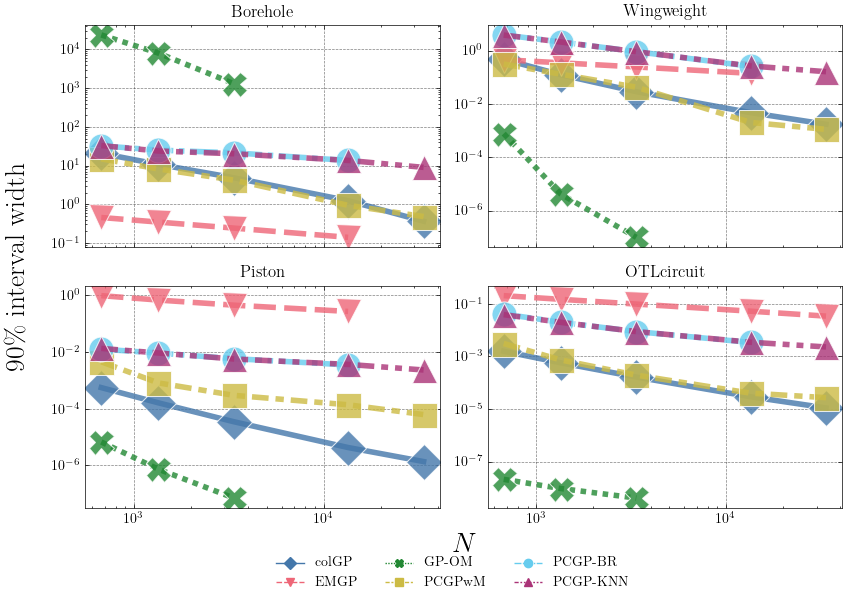}
    \caption{Comparison of 90\% width of surrogate methods, under 25\% MCAR.}
    \label{fig:avgintwidth25random}
\end{figure}

\subsection{Results under MAR}
The results under MAR are presented, in the order of 1\%, 5\%, and 25\%.  
The numerical results under MAR follow similar trends as MCAR for all percentages of missingness.  We refer the analysis to the reporting of results in the MCAR section.  
Figures \ref{fig:rmses1randomMAR}, \ref{fig:rmses5randomMAR}, and \ref{fig:rmses25randomMAR} present the corresponding RMSEs.  Figures \ref{fig:coverages1randomMAR}, \ref{fig:coverages5randomMAR}, and  \ref{fig:coverages25randomMAR} present the 90\% coverages.  And Figures \ref{fig:avgintwidth1randomMAR}, \ref{fig:avgintwidth5randomMAR}, and  \ref{fig:avgintwidth25randomMAR} present the 90\% widths.

\begin{figure}[ht]
    \centering
    \includegraphics[width=0.8\textwidth]{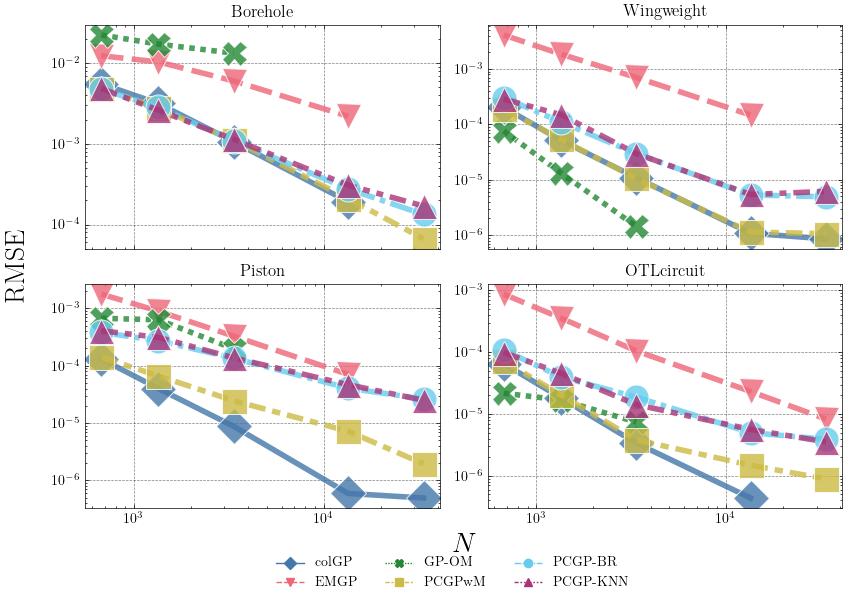}
    \caption{Comparison of prediction accuracy of surrogate methods, under 1\% MAR.}
    \label{fig:rmses1randomMAR}
\end{figure}

\begin{figure}[ht]
    \centering
    \includegraphics[width=0.8\textwidth]{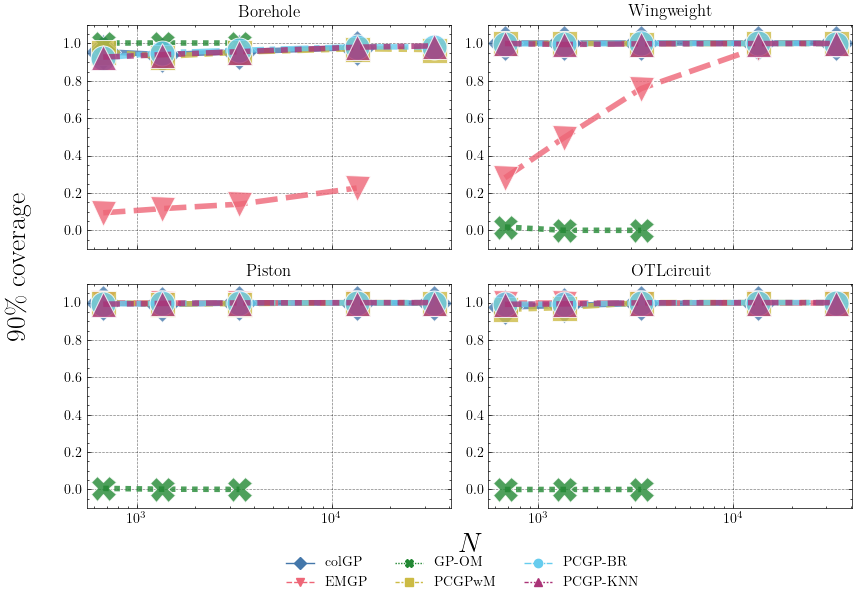}
    \caption{Comparison of 90\% coverage of surrogate methods, under 1\% MAR.}
    \label{fig:coverages1randomMAR}
\end{figure}

\begin{figure}[ht]
    \centering
    \includegraphics[width=0.8\textwidth]{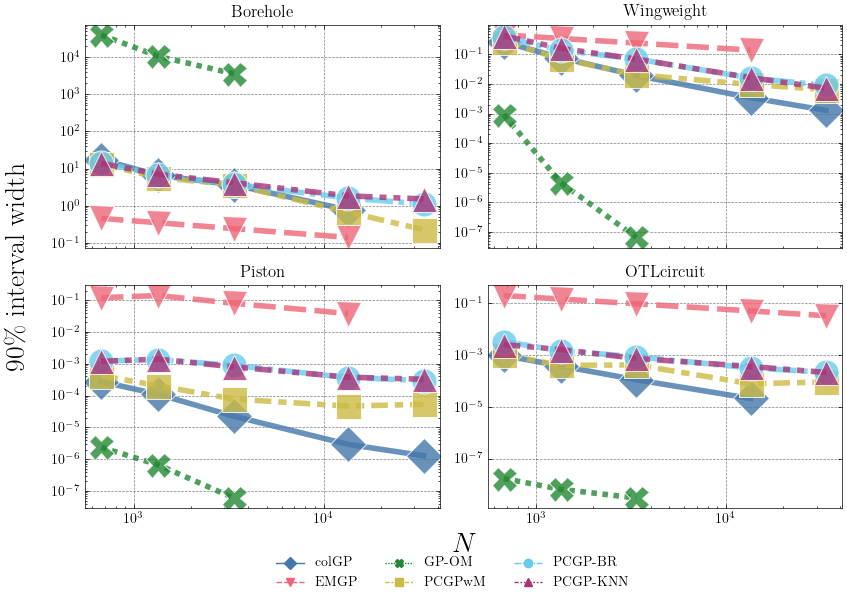}
    \caption{Comparison of 90\% width of surrogate methods, under 1\% MAR.}
    \label{fig:avgintwidth1randomMAR}
\end{figure}

\begin{figure}[ht]
    \centering
    \includegraphics[width=0.8\textwidth]{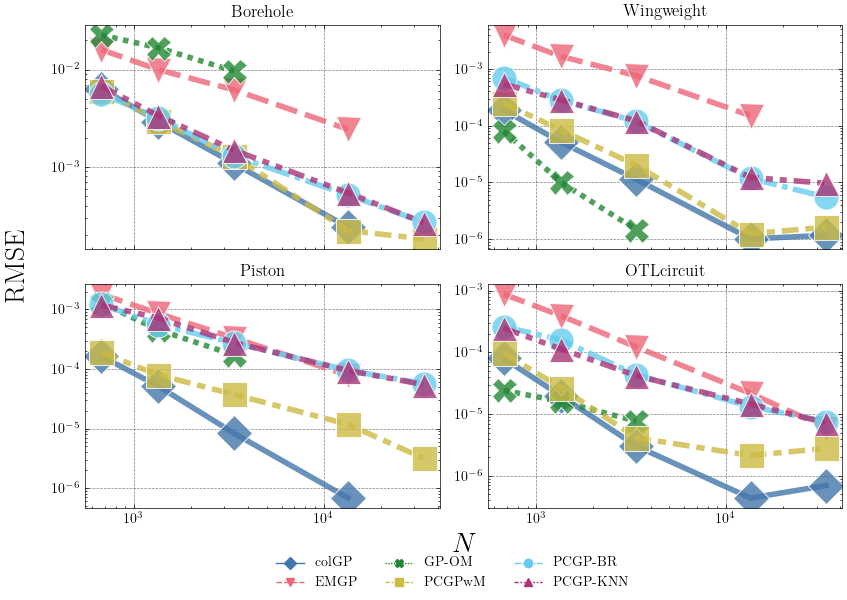}
    \caption{Comparison of prediction accuracy of surrogate methods, under 5\% MAR.}
    \label{fig:rmses5randomMAR}
\end{figure}

\begin{figure}[ht]
    \centering
    \includegraphics[width=0.8\textwidth]{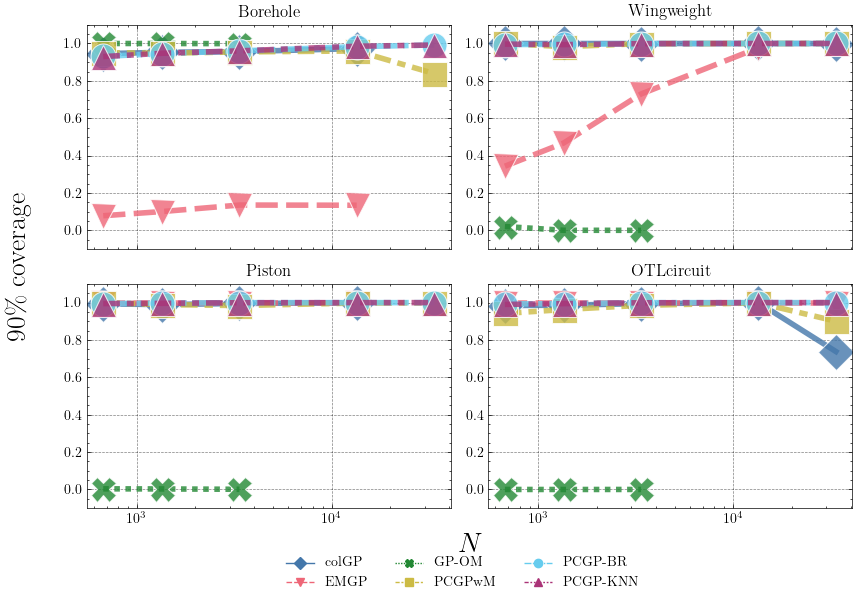}
    \caption{Comparison of 90\% coverage of surrogate methods, under 5\% MAR.}
    \label{fig:coverages5randomMAR}
\end{figure}

\begin{figure}[ht]
    \centering
    \includegraphics[width=0.8\textwidth]{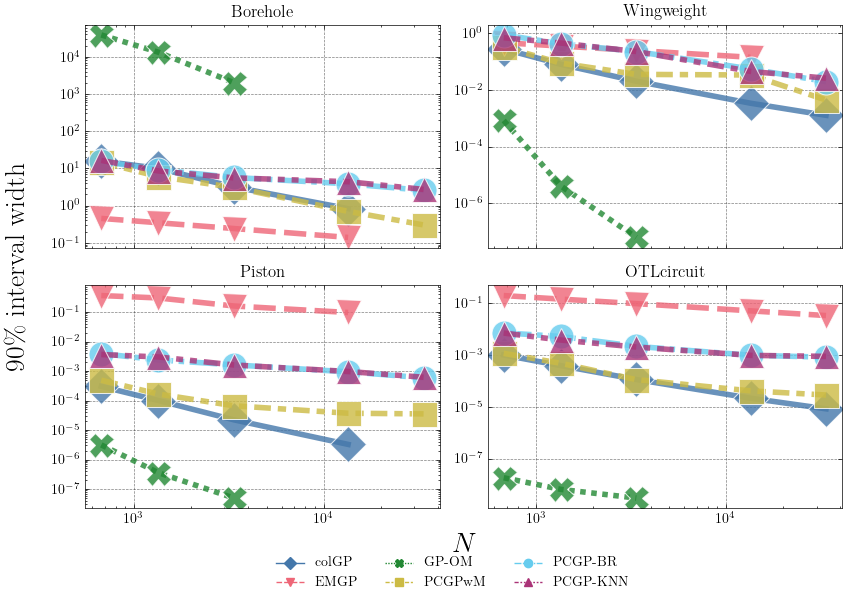}
    \caption{Comparison of 90\% width of surrogate methods, under 5\% MAR.}
    \label{fig:avgintwidth5randomMAR}
\end{figure}

\begin{figure}[ht]
    \centering
    \includegraphics[width=0.8\textwidth]{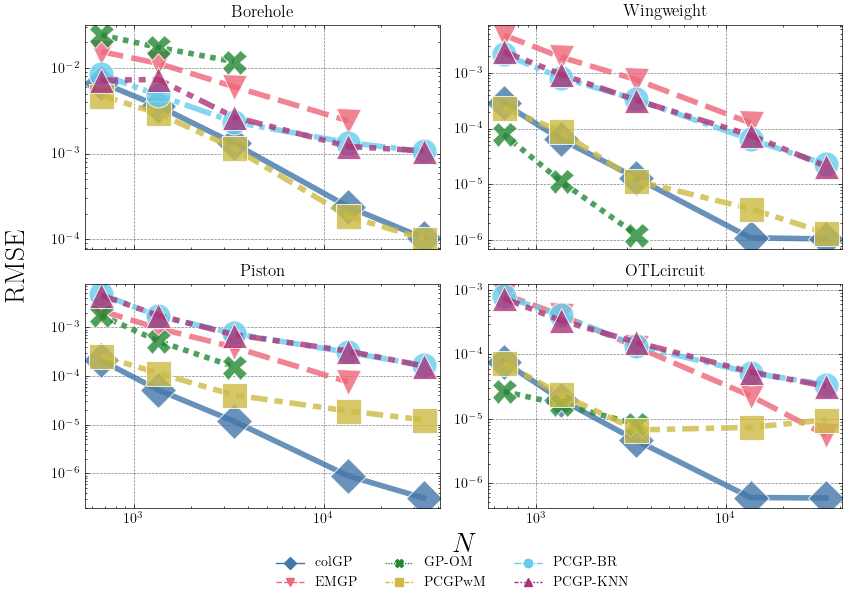}
    \caption{Comparison of prediction accuracy of surrogate methods, under 25\% MAR.}
    \label{fig:rmses25randomMAR}
\end{figure}

\begin{figure}[ht]
    \centering
    \includegraphics[width=0.8\textwidth]{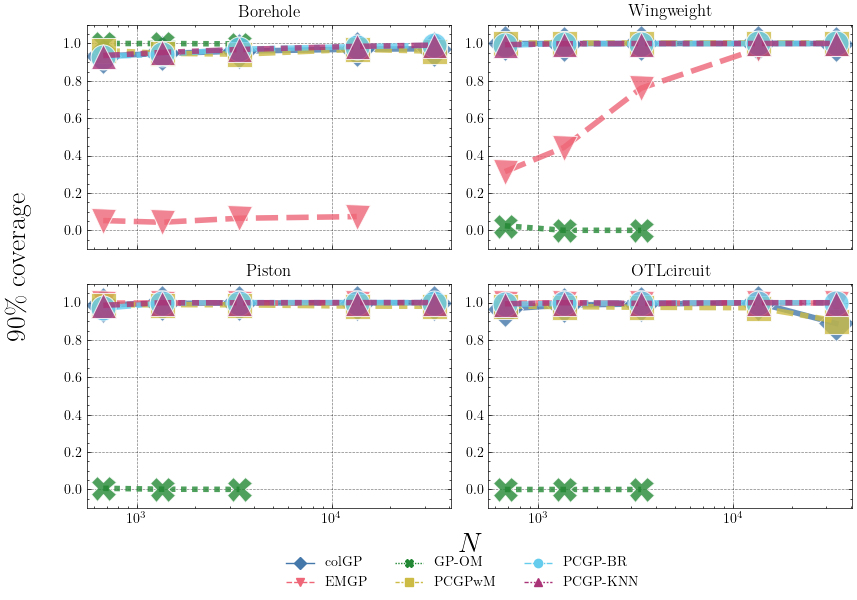}
    \caption{Comparison of 90\% coverage of surrogate methods, under 25\% MAR.}
    \label{fig:coverages25randomMAR}
\end{figure}

\begin{figure}[ht]
    \centering
    \includegraphics[width=0.8\textwidth]{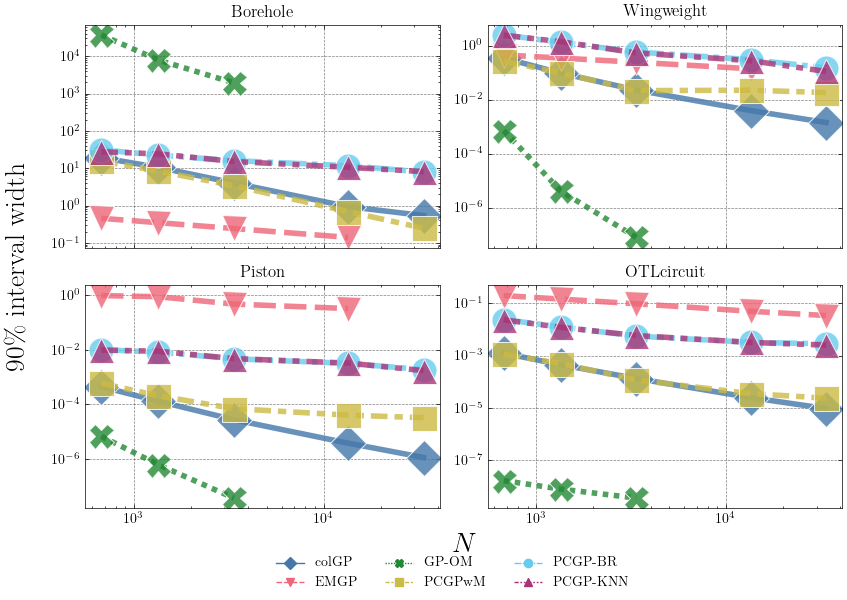}
    \caption{Comparison of 90\% width of surrogate methods, under 25\% MAR.}
    \label{fig:avgintwidth25randomMAR}
\end{figure}

\subsection{Comparison of PCGPwM against a baseline method}
We compare PCGPwM against a baseline method where we build the principal components as if the complete data are provided (i.e., $\bm \Phi$ and $\bm \Lambda$).  We label it the ``PCGP-benchmark'' method.  
The purpose of including this method is to examine the significance of the estimation of the principal components from missing data.
Although improved estimation of principal components from missing data is outside the scope of this article, by studying this method we can offer insights into the limitations of the proposed method.

The two methods are compared under the 5\% MNAR scenario.  The results in terms of RMSE, 90\% coverage, and the 90\% interval width show that PCGPwM performs near identically to PCGP-benchmark, thus the figures are left out.  In other words, the proposed imputation recovers the unknown principal components well.  In certain cases, when compared with colGP (which uses the data without dimension reduction), both PCGPwM and PCGP-benchmark cannot further reduce error even with larger data size.  This points to a potential limitation in the use of principal component methods. 

\section{Scaling of Fayans EDF parameter space}
Table \ref{tbl:fayans_param_space} contains the unscaled centroid, the length scales, and the lower and upper bounds for each dimension of the parameter.  This information is reproduced from Table 5 of \citet{bollapragada2021optimization}.

\begin{table}[ht]
\sisetup{parse-numbers=false, table-format=-3.4}
    \centering
    \begin{tabular}{cSSSS}
    \toprule
    Parameter               & {unscaled center} & {scale} & {lb} & {ub} \\ \midrule 
    $\rho_{\mathrm{eq}}$    & 0.1642 & 0.004    & 0.146     & 0.167 \\ 
    $E/A$                   & -15.86 & 0.1      & -16.21    & -15.50 \\ 
    $K$                     & 206.6 & 25        & 137.2     & 234.4 \\
    $J$                     & 28.3  & 3.2       & 19.5      & 37.0 \\
    $L$                     & 35.9  & 32        & 2.2       & 69.6 \\
    $h^{\mathrm{v}}_{2-}$   & 11.34 & 19.01     & 0         & 100 \\
    $a^{\mathrm{s}}_{+}$    & 0.562 & 0.06      & 0.418     & 0.706 \\
    $h^{\mathrm{s}}_\nabla$ & 0.460 & 0.24      & 0         & 0.516 \\
    $\kappa$                & 0.188 & 0.02      & 0.076     & 0.216 \\
    $\kappa^\prime$         & 0.045 & 0.17      & -0.892    & 0.982 \\
    $f^\xi_\mathrm{ex}$     & -4.46 & 1.16      & -4.62     & -4.38 \\
    $h^\xi_\nabla$          & 4.18  & 1.68      & 3.94      & 4.27 \\
    $h^\xi_+$               & 3.44  & 1.4       & -0.96     & 3.66 \\
    \bottomrule
    \end{tabular}
    \caption{Fayans EDF model parameters and their scaling information, reproduced from Table 5 of \citet{bollapragada2021optimization}.}
    \label{tbl:fayans_param_space}
\end{table}

\section{Code and data for the Fayans EDF case study}
The code and data for the case study are made available in the online supplementary materials.

\end{document}